\numberwithin{equation}{section}
\newtheorem{theorem}{Theorem}[section]
\newtheorem{proposition}[theorem]{Proposition}
\theoremstyle{definition}
\newtheorem{example}[theorem]{Example}
\newtheorem{remark}[theorem]{Remark}
\newcommand{\Z}{{\mathbb Z}}
\newcommand{\ok}{{\rm{\bf k}}}
\newcommand{\OK}{{\rm{\bf K}}}
\newcommand{\am}{{\rm{\bf a}}^- }
\newcommand{\ap}{{\rm{\bf a}}^+ }
\newcommand{\Am}{{\rm{\bf A}}^- }
\newcommand{\Ap}{{\rm{\bf A}}^+ }
\newcommand{\ichi}{{\bf 1} }
\begin{document}
\title[Tetrahedron and 3D Reflection equations]
{Tetrahedron and 3D reflection equations
from quantized algebra of functions}

\author{Atsuo Kuniba}
\email{atsuo@gokutan.c.u-tokyo.ac.jp}
\address{Institute of Physics, Graduate School of Arts and Sciences,
University of Tokyo, Komaba, Tokyo 153-8902, Japan}

\author{Masato Okado}
\email{okado@sigmath.es.osaka-u.ac.jp}
\address{Department of Mathematical Science,
Graduate School of Engineering Science,
Osaka University, Toyonaka, Osaka 560-8531, Japan}


\maketitle

\vspace{1cm}
\begin{center}{\bf Abstract}
\end{center}
\vspace{0.4cm}
Soibelman's theory of quantized function algebra $A_q(\mathrm{SL}_n)$ 
provides a 
representation theoretical scheme to construct a solution of the 
Zamolodchikov tetrahedron equation.
We extend this idea originally due to 
Kapranov and Voevodsky to $A_q(\mathrm{Sp}_{2n})$ and 
obtain the intertwiner $K$ corresponding to the quartic Coxeter relation.
Together with the previously known 3-dimensional (3D) $R$ matrix, 
the $K$ yields the first ever solution to the 3D analogue of the reflection equation
proposed by Isaev and Kulish.  
It is shown that matrix elements of $R$ and $K$ are polynomials in $q$
and that there are combinatorial and birational counterparts for $R$ and $K$.
The combinatorial ones arise either at $q=0$ 
or by tropicalization of the birational ones.
A conjectural description for the type $B$ and $F_4$ cases 
is also given.

\section{Introduction}\label{sec:intro}

The tetrahedron equation \cite{Zam80,Zam81} is the 3-dimensional (3D) analogue 
of the Yang-Baxter equation \cite{Bax}.
It has been well recognized by now that 
quantum groups \cite{D86, J1} provide a comprehensive 
framework for the algebraic aspect of the latter.
The tetrahedron equation is more challenging but 
many efforts and results continue to emerge until today.  
See \cite{Bax86, Korep89, MN89, BB92, LM93, KMS93, 
Korep93, H94, KV, Korep95, SW95, Ka96, SMS96,
S98, KKS98, BS, BMS,KuS} for example. 
In this paper we study the tetrahedron equation 
and its generalizations based on 
representation theory of quantized algebra of functions.
Let us briefly review this approach which is seemingly long forgotten, 
to motivate our work.

Let $\mathfrak{g}$ be a classical simple Lie algebra and 
$G$ be the corresponding Lie group.
The quantized algebra of functions on $G$ 
or $q$ deformation of the coordinate ring of $G$
is the Hopf algebra dual to the quantized universal enveloping algebra 
$U_q(\mathfrak{g})$.
We denote it by $A_q(G)$ in this paper
and assume that $q$ is generic unless otherwise stated.
It has been studied from a variety of viewpoints.
See \cite{D86,RTF,NYM, VS,So1,So2} for example.
The simplest one is 
$A_q(\mathrm{SL}_2) = \langle t_{11}, t_{12}, t_{21}, t_{22}\rangle$ 
with the relations 
\begin{equation}\label{sl2}
\begin{split}
&t_{11}t_{21} = qt_{21}t_{11},\quad
t_{12}t_{22} = qt_{22}t_{12},\quad
t_{11}t_{12} = qt_{12}t_{11},\quad
t_{21}t_{22} = qt_{22}t_{21},\\
&[t_{12},t_{21}]=0,\quad [t_{11},t_{22}]=(q-q^{-1})t_{21}t_{12},
\quad t_{11}t_{22}-qt_{12}t_{21}=1.
\end{split}
\end{equation}
It has the irreducible representation in terms of 
the $q$-oscillator acting on the Fock space \cite{VS}.
See (\ref{qoa})--(\ref{repA}).
Irreducible representations of $A_q(G)$ for 
general $G$ were classified by Soibelman \cite{So1,So2}.
Associated with each vertex $i$ of the Dynkin diagram,
$A_q(G)$ has an irreducible representation $\pi_i$ 
which factors through the projection to   
$A_q(\mathrm{SL}_2)$ corresponding to $i$. 
The representations 
$\pi_1, \ldots, \pi_n\, (n = \mathrm{rank}\, G)$ 
play the role of fundamental representations. 
General irreducible representations are in one to one 
correspondence with elements of the Weyl group 
$W(G)=\langle s_1, \ldots, s_n \rangle$
up to some ``torus degrees of freedom".
More concretely if 
$w = s_{i_1}\cdots s_{i_r}\in W(G)$ is a reduced expression 
by the simple reflections, 
the corresponding irreducible $A_q(G)$ module is realized as the 
tensor product of the fundamental ones as
$\pi_{i_1} \otimes \cdots \otimes \pi_{i_r}$.
A crucial consequence of this claim is the equivalence 
$\pi_{i_1} \otimes \cdots \otimes \pi_{i_r}
\simeq 
\pi_{j_1} \otimes \cdots \otimes \pi_{j_r}$
for arbitrary reduced expressions 
$w = s_{j_1}\cdots s_{j_r}$.
In particular it ensures the existence of a unique (up to normalization) isomorphism between such tensor products,  
which we call  the {\em intertwiner}.

In \cite{KV} Kapranov and Voevodsky 
found an application of these results for type $A$ 
to the tetrahedron equation.
The Coxeter relations $s_2s_1s_2 = s_1s_2s_1$ 
and $s_3s_2s_3 = s_2s_3s_2$ in $W(\mathrm{SL}_4)$ 
imply the equivalence of the irreducible
$A_q(\mathrm{SL}_4)$ modules 
$\pi_2 \otimes \pi_1\otimes \pi_2
\simeq \pi_1 \otimes \pi_2 \otimes \pi_1$ and 
$\pi_3 \otimes \pi_2\otimes \pi_3
\simeq \pi_2 \otimes \pi_3 \otimes \pi_2$,
therefore the existence of unique (up to normalization) intertwiners 
$\Phi^{(i)}$ satisfying 
\begin{equation}\label{phi12}
\begin{split}
(\pi_2 \otimes \pi_1 \otimes \pi_2) \circ \Phi^{(1)} &= 
\Phi^{(1)} \circ(\pi_1 \otimes \pi_2 \otimes \pi_1),\\
(\pi_3 \otimes \pi_2 \otimes \pi_3) \circ \Phi^{(2)} &= 
\Phi^{(2)} \circ(\pi_2 \otimes \pi_3 \otimes \pi_2).
\end{split}
\end{equation}
On the other hand the two reduced expressions of the longest element
$s_1s_2s_3s_1s_2s_1 = s_3s_2s_3s_1s_2s_3$
lead to the equivalence
$\pi_1\otimes \pi_2\otimes \pi_3\otimes \pi_1\otimes \pi_2\otimes \pi_1
\simeq 
\pi_3\otimes \pi_2\otimes \pi_3\otimes \pi_1\otimes \pi_2\otimes \pi_3$.
Up to transposition of components, the intertwiner for this can be constructed 
as the composition of  the form $\Phi^{(2)} \Phi^{(1)} \Phi^{(2)} \Phi^{(1)} $
in two ways, which parallel the transformations of the reduced expressions
using the Coxeter relations. 
See (\ref{2ways}) for the precise description.
Since the intertwiner is unique, it enforces the consistency condition
of the form
$\Phi^{(2)} \Phi^{(1)} \Phi^{(2)} \Phi^{(1)}  = 
\Phi^{(2)} \Phi^{(1)} \Phi^{(2)} \Phi^{(1)} $ up to transposition operators
(cf. (\ref{phpA})). 
It turns out that $\Phi^{(1)}$ and $\Phi^{(2)}$ yield essentially the same 
matrix acting on the tensor cube of the Fock space.
Regarding them as the 3D $R$ matrix, 
the consistency condition is nothing but the tetrahedron equation.

The discovery by Kapranov and Voevodsky was  
rephrased in \cite{KS}, but 
did not seem to have plentiful citations in the mathematical physics 
community working on the tetrahedron equation, possibly due to some unfortunate misprints.
However in retrospect, 
it was offering a proper quantum group theoretical framework 
for the ideas being developed independently at about the same time 
in several guises called 
local Yang-Baxter equation, 
tetrahedral Zamolodchikov algebra and 
quantum Korepanov equation, etc   
\cite{MN89,Korep93, Korep95,KKS98}.
In particular the 3D $R$ matrix 
obtained relatively recently by
Bazhanov, Sergeev and Mangazeev \cite{BS,BMS} 
can be identified with the intertwiner 
that follows from the Kapranov-Voevodsky approach.
See (\ref{pp}).
The essence of the strategies in the works
\cite{MN89,Korep93, Korep95,KKS98,BS,BMS} may roughly be stated as 
upgrading the {\em equality} in the Yang-Baxter relation to 
a {\em transformation}  $\varphi$ of some triple product
\begin{equation}\label{rsss}
S_1S_2S_1 = S_2S_1S_2 \quad \Longrightarrow\quad 
\varphi: S_1S_2S_1 \longmapsto S'_2S'_1S'_2
\end{equation}
and to produce a solution of the tetrahedron equation from $\varphi$.
A similar idea was considered 
as ``vectorization of triangle equations" in \cite[sec.6.9]{KV}.
The intertwining relation (\ref{phi12}) is a realization of it  
as $\varphi = \mathrm{Ad}(\Phi^{(i)})$.
From this viewpoint the prescription (\ref{rsss}) is traced back to the 
transformations of reduced expressions of
the Weyl group elements by means of 
the cubic Coxeter relation.
The 3D $R$ matrix and the conventional 2D quantum $R$ matrices
are thus put on a parallel footing.  
They are pinned as the intertwiners for $A_q$ modules 
and $U_q$ modules, respectively.
The representations of $A_q$ do not have the 
variety like the highest weight modules for $U_q$.
However their intertwiners are known to generate a broad class of 
2D quantum $R$ matrices in some cases \cite{BS, KuS}.

\begin{table}[h]
\vspace{0.2cm}
\begin{tabular}{c|c}
Weyl group & Factorized scattering in 3D\\
\hline \vspace{-0.3cm}\\
reduced expressions\; & \; multi-string states\\
cubic Coxeter relation \;
& \;3D $R$ (scattering amplitude)\\
the longest element \;
& \; tetrahedron equation
\end{tabular}
\end{table}

The story in type $A$
digested so far naturally motivates us to explore 
the general $G$ case.
It may also be viewed as the 3D analogue of Cherednik's 
generalization of the Yang-Baxter equation 
along the classical Coxeter systems \cite{Ch84}.
Our aim in this paper is to launch such a result 
for those $G$ having a double arrow in the Dynkin diagrams
mainly along type $C$, i.e. $G=\mathrm{Sp}_{2n}$.
We introduce the quantized algebra of functions $A_q(\mathrm{Sp}_{2n})$ 
following \cite{RTF}. 
The $n=3$ case already covers the generic situation on which we shall focus.
Irreducible representations $\pi_i (i=1,2,3)$ are presented 
in terms of the $q_i$-deformed oscillators, where
$(q_1,q_2,q_3) = (q,q,q^2)$  
reflecting the squared lengths of the three simple roots of $\mathrm{Sp}_6$.
The Coxeter relations in the 
Weyl group $W(\mathrm{Sp}_6)=\langle s_1, s_2, s_3\rangle$ include 
$s_1s_2s_1 = s_2s_1s_2$ and $s_2s_3s_2s_3 = s_3s_2s_3s_2$.
Accordingly we have the equivalence of the tensor product representations
\begin{equation}\label{cri}
\pi_1 \otimes \pi_2 \otimes\pi_1
\simeq \pi_2 \otimes \pi_1 \otimes\pi_2,\qquad
\pi_2 \otimes \pi_3 \otimes \pi_2 \otimes\pi_3 
\simeq \pi_3 \otimes \pi_2 \otimes \pi_3 \otimes\pi_2.
\end{equation}
The problem is to construct their intertwiners explicitly
and formulate a generalization of the tetrahedron equation 
that should emerge as a consistency condition 
among them.
We find that the former in (\ref{cri}) leads to essentially the
same intertwiner $R$ (or $\Phi$ as in (\ref{rphA13})) as type $A$.
The intertwiner for the latter is a new object, which will be denoted by 
$K$ (or $\Psi$ as in (\ref{pskp})).
We present its explicit formula in Theorem \ref{th:main}.
The $R$ and $K$ act on
tensor cube and tensor quartet of the Fock space, respectively.
However they are locally finite with respect to the natural basis of the 
Fock space, i.e. they are decomposed into 
direct sums of finite dimensional matrices 
specified by conserved quantities.
The consistency condition involving 
$R$ and $K$ is derived from the 
longest element of $W(\mathrm{Sp}_6)$.
The result takes the form $RRKRRKK = KKRRKRR$.
See (\ref{rkeq1}) and (\ref{rkeq2}).
We call it the 3D reflection equation as it is a
natural 3D analogue of the 2D reflection equation 
going back to \cite{Ch84, Sk88, KSk92}.
The physical meaning of it is a factorization condition of 
3 strings scattering in 3D with boundary reflections, where 
the $R$ and $K$ stand for the amplitudes 
of 3 string scattering and reflection at the boundary respectively.
Such a 3D system was originally considered by Isaev and Kulish \cite{IK}
who proposed the ``tetrahedron reflection equation".
It turns out that our 3D reflection equation coincide 
exactly with the constant version (spectral parameter-free case)
of their tetrahedron reflection equation\footnote{
The authors are indebted to A. Isaev for a comment on this point.}.
Our Theorem \ref{th:main} yields the first ever solution to it.

There is another persistent theme in this paper.
We have seen that the $R$ and $K$ are intertwiners 
corresponding to the Coxeter relations
$(s_is_j)^{m_{ij}}=1$ with $m_{ij} = 3$ and $m_{ij}=4$, respectively.
They are {\em quantum} objects controlled by the quantum algebra $A_q(G)$.
We shall establish that the both $R$ and $K$ possess 
{\em polynomial} matrix elements in $q$ and form 
a triad with their {\em birational} and {\em combinatorial} counterparts.
They all satisfy tetrahedron and 3D reflection equations in the respective setting. 
The combinatorial ones are bijections among finite sets,
which arise either at $q=0$ of the quantum $R$ and $K$
or by  the {\em  tropicalization (ultradiscretization)} 
of the birational $R$ and $K$.
These features are quite analogous to the quantum $R$ matrices in 2D 
(cf. \cite{KO,IKT}) 
and have been summarized in Table 1.
The birational $R$ and $K$ are 
the maps (\ref{birR}) and (\ref{abcdt}) characterized by 
the identities (\ref{GGG}), (\ref{Xeq}) and (\ref{Yeq}) 
among the generators $G_i(x), X_i(x), Y_i(x)$ 
of the unipotent subgroup of $G$.
These matrix identities are related to the independence of Schubert
cells $X_w$ for Weyl group elements $w$ on the reduced expressions.
Soibelman's theory on $A_q(G)$ forms a quantum analogue of 
this aspect as stressed in \cite{So1,So2}.
We note that the combinatorial 3D $R$ has effectively appeared already in \cite{L}.
The birational 3D $R$ has also been encountered in various contexts.
See for example \cite{BFZ,YY} and \cite{S98,KKS98},
where in the latter reference 
the tetrahedron equation for them has been called functional 
tetrahedron equation.

\begin{table}[h]
\vspace{0.2cm}
\begin{tabular}{c|ccc}
 & quantum & combinatorial & birational\\ 
 \hline\vspace{-0.2cm} \\
 3D $R$ & $\mathscr{R}$ (\ref{rA2}) & $\mathcal{R}$ 
 (\ref{rzero}) & ${\bf R}$ (\ref{birR}) \\
 3D $K$ & $\mathscr{K}$ (Th.\ref{th:main}) & $\mathcal{K}$ 
 (Th.\ref{th:combk}) & ${\bf K}$ (\ref{abcdt}) 
\end{tabular}
\vspace{0.2cm}
 \caption{Triad of 3D $R$ and 3D $K$}
\end{table}

\vspace{0.2cm}
The layout of the paper is as follows.
In Section \ref{sec:A} we begin by reviewing the 
type $A$ case following the idea of the paper \cite{KV} in 1994.
See also \cite{KS}.
It may be viewed as a formulation of many 
parallel ideas and results in \cite{Korep93, Korep95,KKS98,BS,BMS}
in terms of the representation theory of 
$A_q(\mathrm{SL}_n)$ \cite{VS,So1,So2}.
We point out in (\ref{pp})  the coincidence of the 
intertwiner in \cite{KV} (up to misprints) and the
3D $R$ in \cite{BS,BMS}.
Another observation which appears new is 
the polynomiality of the 3D $R$ in $q$ (Remark \ref{re:Rq0}),
which leads to the combinatorial 3D $R$.
The triad of quantum, birational and combinatorial $R$'s are 
formulated in a unified perspective and summarized in Table~1. 

In Section \ref{sec:C} we deal with type $C$ case.
We obtain the intertwiner $K$ corresponding to the quartic Coxeter relation
and show that it also forms the triad with the birational and combinatorial
counterparts.
We formulate the 3D reflection equation involving $R$ and $K$ 
and argue the relation with the physical setting of the 
tetrahedron reflection equation by Isaev and Kulish \cite{IK}.

In Section \ref{sec:BF} 
we present the intertwiners and their consistency conditions 
for type $B$ and $F_4$ cases on conjectural basis. 
They are natural candidates indicated from the results on type $C$. 
The type $B$ case yields the second (conjectural) solution 
of the same 3D reflection equation.

Appendix \ref{app:IR} gives the list of 
intertwining relations for $K$. 
Appendices \ref{app:K} and \ref{app:kq0} contain the 
technical details of the proofs of 
Theorem \ref{th:main} and Theorem \ref{th:combk}.

\section{$\mathrm{SL}$ case}\label{sec:A}

\subsection{\mathversion{bold}Quantized algebra of functions 
$A_q(\mathrm{SL}_n)$}
We begin by recalling the quantized algebra of functions of type $A$.
It has been studied and denoted in many ways as
$\mathrm{Fun(SL}_q(n))$ \cite{RTF}, 
$\mathbb{C}[G]_h$ \cite{VS},
$A(SL_q(n;\mathbb{C}))$ \cite{NYM},
$\mathbb{C}[SU(2)]_q$ ($n=2$ case) \cite{So2},
$\mathbb{C}[GL(n)_q]$ ($GL$ case rather than $SL$) \cite{KV},
$\mathbb{C}[SL_2(\mathbb{C})](q)$ \cite{KS}
and so on.
In this paper we write it as $A_q(\mathrm{SL}_n)$.
The $A_q(\mathrm{SL}_n)$ is a Hopf algebra \cite{Abe} 
generated by $T = (t_{ij})_{1\le i,j\le n}$ with relations.
They are presented in the 
so called $RTT = TTR$ form and the unit quantum determinant condition:
\begin{align}
&\sum_{m,p}R_{ij,mp}t_{mk}t_{pl} 
= \sum_{m,p}t_{jp}t_{im}R_{mp,kl},\label{re1}\\
&\sum_{\sigma \in \mathfrak{S}_n}(-q)^{l(\sigma)}
t_{1\sigma_1}\cdots t_{n\sigma_n} = 1.\label{re12}
\end{align}
Here $l(\sigma)$ denotes the length of the permutation $\sigma$ and 
the structure constant is specified by
\begin{align*}
\sum_{i,j,k,l}R_{ij,kl}E_{ik}\otimes E_{jl}=
q\sum_i E_{ii}\otimes E_{ii} + \sum_{i\neq j}E_{ii}\otimes E_{jj}
+(q-q^{-1})\sum_{i>j}E_{ij}\otimes E_{ji},
\end{align*}
where the indices are summed over $1,2,\ldots, n$, and $E_{ij}$ is a matrix unit.
This matrix is extracted as 
\begin{align*}
&\sum_{i,j,m,l}R_{ij,ml}E_{im}\otimes E_{jl}
= q\lim_{x\rightarrow \infty}x^{-1}R(x)|_{k=q^{-1}}
\end{align*}
from the quantum $R$ matrix $R(x)$ \cite{Baz,J2} for 
the vector representation of $U_q(A^{(1)}_{n-1})$ 
given in \cite[eq.(3.5)]{J2}. 
Explicitly, the relation (\ref{re1}) reads as (see for example \cite{NYM})
\begin{equation*}
\begin{split}
&[t_{ik}, t_{jl}]=\begin{cases}0 &(i<j, k>l),\\
(q-q^{-1})t_{jk}t_{il} & (i<j, k<l),
\end{cases}\\
&t_{ik}t_{jk} = q t_{jk}t_{ik}\; (i<j),\quad 
t_{ki}t_{kj} = q t_{kj}t_{ki}\; (i<j).
\end{split}
\end{equation*} 
The case $n=2$ is given by (\ref{sl2}).
The coproduct is the standard one: 
\begin{equation}\label{copro}
\Delta(t_{ij}) = \sum_k t_{ik}\otimes t_{kj}.
\end{equation}
We will use the same symbol $\Delta$ flexibly to also mean
the multiple coproducts like
$(\Delta\otimes 1) \circ \Delta = (1\otimes \Delta) \circ \Delta$, etc.
We omit the antipode and the counit for they will not be used in this paper.

\subsection{\mathversion{bold}Representations $\pi_i$}\label{subsec:piA}
Let $\mathrm{Osc}_q = 
\langle {{\bf 1}, \rm{\bf a}}^+, {\rm{\bf a}}^-, {\rm{\bf k}} \rangle$ 
be the $q$-oscillator algebra, i.e. 
an associative algebra with the relations
\begin{equation}\label{qoa}
\begin{split}
&{\rm{\bf k}} \,{\rm{\bf a}}^+ = q\,{\rm{\bf a}}^+{\rm{\bf k}},\quad
{\rm{\bf k}}\,{\rm{\bf a}}^- = q^{-1}{\rm{\bf a}}^-{\rm{\bf k}},\\
&{\rm{\bf a}}^- {\rm{\bf a}}^+ = {\bf 1}-q^2{\rm{\bf k}}^2,\quad 
{\rm{\bf a}}^+{\rm{\bf a}}^- = {\bf 1}-{\rm{\bf k}}^2,\quad
[{\bf 1}, {\rm everything}]=0.
\end{split}
\end{equation}
It has a representation on 
the Fock space $\mathcal{F}_q = \oplus_{m\ge 0}{\mathbb C}(q)|m\rangle$:
\begin{align}\label{akf}
{\bf 1}|m\rangle = |m\rangle,\;
{\rm{\bf k}}|m\rangle = q^m |m\rangle,\;
{\rm{\bf a}}^+|m\rangle = |m+1\rangle,\;
{\rm{\bf a}}^-|m\rangle = (1-q^{2m})|m-1\rangle.
\end{align}
The label $m$ will be referred as occupation number of the 
$q$-oscillator.

Consider the maps 
$\pi_i\,(1 \le i \le n-1):\, A_q(\mathrm{SL}_n) \rightarrow \mathrm{Osc}_q$ 
that send the generators $T = (t_{ij})$ as
\begin{equation}\label{repA}
\begin{small}
\begin{pmatrix}
\!\! \!\!\!\!t_{11} &  &  &  &  &t_{1n}\!\!\!\!\!\!\\
\qquad\ddots  &  & & &&\\
& \!\!\!\!\!\!\!\!\!\!\!\! t_{i-1,i-1}&&&  &   \\
&\qquad \!\!\!\! t_{i,i} & t_{i,i+1}  &&&\\
&\qquad t_{i+1,i}  & t_{i+1,i+1}  & &&\\
&&& \!\!\!\!\!\!\!\! t_{i+2,i+2}&&&\\
&&&& \!\!\!\!\!\!\ddots & \\
\!\! \!\!\!\!t_{n1} &&&&& t_{nn}\!\!\!\!\!\!
\end{pmatrix}
\mapsto
\begin{pmatrix}
\sigma_{i}\bf{1} &  &  &  &  &&&\\
& \ddots & &  & & &&\\
 &  & \sigma_{i}\bf{1} &&&  &  & \\
&&& \mu_i {\rm{\bf a}}^- & \alpha_i {\rm{\bf k}} &&&\\
&&& \beta_i {\rm{\bf k}}& \nu_i {\rm{\bf a}}^+& &&\\
&&&&&\sigma_{i}\bf{1}&&\\
&&&&&& \ddots & \\
&&&&&&& \sigma_{i}\bf{1}
\end{pmatrix}
\end{small},
\end{equation}
where all the blanks on the RHS mean $0$.
$\alpha_i, \beta_i, \mu_i, \nu_i$ and $\sigma_i$ are parameters obeying the relation
\begin{equation}\label{paraA}
\alpha_i\beta_i = -q\sigma_i^{-n+2},\quad
\mu_i\nu_i = \sigma_i^{-n+2}.
\end{equation}
It is elementary to show
\begin{proposition}\label{pr:piA}
The maps $\pi_1, \ldots, \pi_{n-1}$ are naturally extended to the 
algebra homomorphisms. 
The resulting representations of $A_q(\mathrm{SL}_n)$ on $\mathcal{F}_q$ are irreducible.
\end{proposition}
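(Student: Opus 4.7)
My plan is to exploit the sparsely-populated form of $\pi_i(T)$: outside a single $2\times 2$ block at rows/columns $i,i+1$ the matrix is the central scalar $\sigma_i\ichi$ on the diagonal and zero elsewhere, while inside the block it carries the four oscillator entries $\mu_i\am$, $\alpha_i\ok$, $\beta_i\ok$, $\nu_i\ap$. Consequently $\pi_i$ morally factors as an $A_q(\mathrm{SL}_2)$ projection to the block followed by the Fock representation, with the $n$-dependence absorbed into the normalization (\ref{paraA}). After making that structure explicit I only have to verify the handful of RTT relations in which both generators touch the block, together with a single global identity for the quantum determinant.

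For the RTT relation (\ref{re1}) I would split cases by how many of the two generators $t_{ik},t_{jl}$ lie in the block. When at most one of them does, both sides of the relation become expressions involving either $0$ or the central scalar $\sigma_i\ichi$ together with at most one oscillator, and the identity is immediate because $\sigma_i\ichi$ commutes with everything. When both generators lie in the block the relation reduces to one of the $A_q(\mathrm{SL}_2)$ relations in (\ref{sl2}): the five $q$-commutation/commutation identities fall out of (\ref{qoa}) applied entry by entry, and the substantive commutator $[t_{ii},t_{i+1,i+1}]=(q-q^{-1})t_{i+1,i}t_{i,i+1}$ translates, via $[\am,\ap]=(1-q^2)\ok^2$, into the parameter constraint $\alpha_i\beta_i=-q\mu_i\nu_i$, which is the first half of (\ref{paraA}).

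The global check is the unit quantum determinant (\ref{re12}). Under $\pi_i$ any permutation that moves some index $j\notin\{i,i+1\}$ off the diagonal hits a zero matrix entry, so only the identity and the transposition $(i,i+1)$ contribute. Collecting these two terms and using $\am\ap=\ichi-q^2\ok^2$ yields
\begin{equation*}
\sigma_i^{n-2}\bigl[\mu_i\nu_i(\ichi-q^2\ok^2)-q\alpha_i\beta_i\ok^2\bigr],
\end{equation*}
which collapses to $\sigma_i^{n-2}\mu_i\nu_i\ichi$ once $\alpha_i\beta_i=-q\mu_i\nu_i$ is substituted, and equals $\ichi$ exactly under the remaining normalization $\mu_i\nu_i=\sigma_i^{-n+2}$ in (\ref{paraA}).

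Finally, for irreducibility I would argue directly on $\mathcal{F}_q$. Let $V$ be a nonzero invariant subspace. Because $\pi_i(\alpha_i^{-1}t_{i,i+1})=\ok$ acts diagonally on $\{|m\rangle\}$ with pairwise distinct eigenvalues $q^m$, a Lagrange interpolation in $\ok$ applied to any nonzero $v\in V$ with finite support in the $|m\rangle$-basis isolates each individual $|m_0\rangle$ appearing in $v$, so some $|m_0\rangle$ lies in $V$. Iterating $\am$ then produces nonzero multiples of $|m_0-1\rangle,\ldots,|0\rangle$ because $1-q^{2m}$ is invertible in $\mathbb{C}(q)$ for $m\ge 1$, and $\ap$ generates every $|k\rangle$ from $|0\rangle$; hence $V=\mathcal{F}_q$. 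The main effort in the whole proof will be the bookkeeping of the RTT case analysis; everything else is mechanical once the block structure is set in place.
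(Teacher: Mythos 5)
Your proof is correct; the paper itself offers no argument here, simply declaring the proposition ``elementary to show,'' and your case analysis is exactly the intended verification: the block structure reduces the RTT relations to the $A_q(\mathrm{SL}_2)$ relations realized by $\ok,\am,\ap$ (forcing $\alpha_i\beta_i=-q\mu_i\nu_i$ via $[\am,\ap]=(1-q^2)\ok^2$), only the identity and the transposition $(i,i+1)$ survive in the quantum determinant (forcing $\mu_i\nu_i=\sigma_i^{-n+2}$), and irreducibility follows from the distinct eigenvalues of $\ok$ together with the ladder action of $\am,\ap$ for generic $q$. No gaps.
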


Let
\begin{equation}\label{Pt}
P(x\otimes y) = y\otimes x
\end{equation}
be the transposition, where $x$ and $y$ are taken from
$\mathrm{Osc}_q$ (or its representation $\mathrm{End}(\mathcal{F}_q)$).
For $|i-j| \ge 2$, one can check 
$P (\pi_i \otimes \pi_{j})(\Delta(f))
=  (\pi_{j} \otimes \pi_{i})(\Delta(f))P$ for any $f \in A_q(\mathrm{SL}_n)$.
This is due to the $f=t_{km}$ case
\begin{equation}\label{tp2}
P\Bigl(\sum_l \pi_i(t_{kl})\otimes \pi_j(t_{lm})\Bigr) 
= \Bigl(\sum_l \pi_j(t_{kl})\otimes \pi_i(t_{lm})\Bigr) P \;\quad
\text{for }\; |i-j|\ge 2
\end{equation}
for any $k$ and $m$.
The point here is that the naively obtained expression 
$(\sum_l \pi_j(t_{lm})\otimes \pi_i(t_{kl})) P$ 
violates the coproduct structure (\ref{copro}),
but it equals to the RHS of (\ref{tp2}) thanks to the 
simple structure of the matrix (\ref{repA}).

In what follows, we set $\sigma_i = 1$ for all $i$, which 
does not cause an essential loss of generality.
See Remark \ref{re:sigA}. 
The representations 
$A_q(\mathrm{SL}_n) \rightarrow \mathrm{End}(\mathcal{F}_q)$
defined by (\ref{akf}), (\ref{repA}) with
\begin{equation}\label{paraAA}
\alpha_i\beta_i = -q,\quad
\mu_i\nu_i = 1
\end{equation}
will also be denoted by $\pi_i=\pi_i^{\alpha_i, \mu_i}$.

Let $W(\mathrm{SL}_n) = \langle s_1,\ldots, s_{n-1}\rangle$ be the Weyl group
of $\mathrm{SL}_n$.
It is a Coxeter system 
with generators $s_1,\ldots, s_{n-1}$ obeying the relations
\begin{equation*}
s^2_i = 1, \quad \;\;s_is_j = s_js_i\; \;(|i-j|\ge 2), \quad\;\;
s_is_js_i = s_js_is_j\; \;(|i-j|=1).
\end{equation*}
We will often abbreviate 
$\pi_{i_1}\otimes \pi_{i_2} \otimes \cdots \otimes \pi_{i_r}$
to $\pi_{i_1,i_2,\ldots, i_r}$ in the sequel.
These indices should not be confused with the ones
appearing later   
signifying the {\em positions} in the multiple tensor products.

\begin{theorem}[\cite{So1, So2}]\label{th:so}

(i) The $A_q(\mathrm{SL}_n)$-module 
$\pi_{i_1,i_2,\ldots, i_r}$ is irreducible if 
$s_{i_1}s_{i_2} \cdots s_{i_r}$ is a reduced expression
 of an element from $W(\mathrm{SL}_n)$.

(ii) If $s_{i_1}s_{i_2} \cdots s_{i_r} = s_{j_1}s_{j_2} \cdots s_{j_r}$ are 
two reduced expressions, then the two irreducible representations
$\pi_{i_1,i_2,\ldots, i_r}$ and $\pi_{j_1,j_2,\ldots, j_r}$ are equivalent.
\end{theorem}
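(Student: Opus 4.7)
The plan is to reduce (ii) to the two elementary Coxeter moves via Matsumoto's theorem, handle the commuting move via the transposition identity (\ref{tp2}), and establish the braid move together with the irreducibility claim (i).

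By Matsumoto's theorem, any two reduced expressions for $w \in W(\mathrm{SL}_n)$ are connected by a sequence of moves of the form $s_is_j \leftrightarrow s_js_i$ for $|i-j|\ge 2$ and $s_is_{i+1}s_i \leftrightarrow s_{i+1}s_is_{i+1}$, so for (ii) it suffices to exhibit an $A_q(\mathrm{SL}_n)$-module intertwiner for each (extended by the identity on the remaining tensor slots). For the commuting move the intertwiner is the transposition $P$ of (\ref{Pt}): the identity (\ref{tp2}) says precisely that $P$ intertwines $(\pi_i \otimes \pi_j)\circ \Delta$ with $(\pi_j \otimes \pi_i)\circ \Delta$ on the generators $t_{km}$, and since both sides are algebra homomorphisms in $f \in A_q(\mathrm{SL}_n)$, agreement on generators propagates to the whole algebra.

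The braid move is genuinely harder: the required intertwiner $\Phi:\pi_{i+1,i,i+1} \to \pi_{i,i+1,i}$ is essentially the 3D $R$-matrix, and its existence is nontrivial. I would establish it by solving the intertwining equations on the generators $t_{km}$ directly in the Fock basis $\{|a\rangle \otimes |b\rangle \otimes |c\rangle\}$; the sparse block structure of (\ref{repA}) reduces these to finite linear recursions whose explicit solvability is precisely the content of the construction carried out later in Section \ref{sec:A}. Uniqueness of $\Phi$ up to scalar is then immediate from Schur's lemma once (i) is in hand, so the two parts of the theorem must really be proved in parallel.

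For (i) I would induct on $r$, with base $r=1$ supplied by Proposition \ref{pr:piA}. The inductive step rests on the observation that $|0\rangle^{\otimes r}$ is a joint eigenvector of the diagonal operators $\pi(\Delta^{r-1}(t_{kk}))$ with a distinguishing eigenvalue; repeated application of the off-diagonal operators $\pi(\Delta^{r-1}(t_{kl}))$ with $k>l$, whose local constituents involve the lowering combination $\beta_{i_j}\ok$, drives any nonzero invariant subspace into the vacuum line, while the raising operators $\nu_{i_j}\ap$ and $\alpha_{i_j}\ok$ in $\pi_{i_j}(t_{l,l+1})$ then regenerate all of $\mathcal{F}_q^{\otimes r}$. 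The principal obstacle is the descent-to-vacuum step: the $\ok$-factors in (\ref{repA}) produce a proliferation of $q$-power coefficients that must all be shown nonvanishing for generic $q$, which requires careful book-keeping. This genericity argument is the heart of Soibelman's classification in \cite{So1, So2}, which I would ultimately invoke rather than reprove for a fully rigorous treatment.
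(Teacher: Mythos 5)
The paper offers no proof of this statement: Theorem \ref{th:so} is imported verbatim from \cite{So1, So2} and is used as an input everywhere else in the paper, so there is no in-paper argument to compare yours against. Judged on its own terms, your outline of (ii) is the standard one (Matsumoto/Tits reduction to the two elementary moves, with (\ref{tp2}) supplying the intertwiner for the commuting move), but it contains a circularity at the braid move. You claim that the existence of the intertwiner $\Phi:\pi_{i,i+1,i}\to\pi_{i+1,i,i+1}$ follows because ``its explicit solvability is precisely the content of the construction carried out later in Section \ref{sec:A}.'' It is not: the proof of Theorem \ref{th:agata} derives the formula (\ref{rA2}) from only the four relations (\ref{reRA}) with $f=t_{13},t_{31},t_{32},t_{33}$ and then explicitly invokes the \emph{a priori} existence of the intertwiner --- i.e.\ Theorem \ref{th:so} itself --- to conclude that the remaining relations hold (``Since the intertwiner exists, the validity of (\ref{reRA}) for the other $f$'s is guaranteed''). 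To make your argument non-circular you would have to verify all nine relations $f=t_{kl}$ directly against the explicit formula, a finite but nontrivial check that neither you nor the paper performs.

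The deeper gap is in (i). Your descent-to-vacuum/regeneration sketch is the right general shape, but as stated it is not an argument: for instance $\pi_1\otimes\pi_2(\Delta(t_{11}))=\mu_1\,{\rm{\bf a}}^-\otimes{\bf 1}$ is not diagonal, so $|0\rangle^{\otimes r}$ is not separated from the rest of the space by eigenvalues of the $\Delta^{r-1}(t_{kk})$ in the naive way, and the ``careful book-keeping'' of nonvanishing $q$-power coefficients is exactly where the reduced-word hypothesis must enter (the statement is false for non-reduced words). You acknowledge this by deferring the genericity argument to \cite{So1,So2}; but since Schur's lemma and hence the uniqueness in (ii) also rest on (i), the proposal in the end reduces to the same citation the paper makes, with the intermediate steps left as a plausible but unexecuted roadmap rather than a proof.
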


The same theorem holds also for $A_q(G)$ 
for any simple Lie group $G$,
where $\pi_i$ is associated to each node $i$ 
of the Dynkin diagram \cite{So1,So2}.

\subsection{\mathversion{bold}$A_q(\mathrm{SL}_3)$ and intertwiner}

The isomorphism of the two irreducible representations will be called 
{\em intertwiner}.
By Schur's lemma, it is unique up to an overall normalization.
The first nontrivial situation arises in $A_q(\mathrm{SL}_3)$,
where one has the equivalence 
$\pi_{121} \simeq \pi_{212}$
reflecting the Coxeter relation $s_1s_2s_1 = s_2s_1s_2$.
Let 
\begin{align}\label{phdefA}
\Phi : 
\mathcal{F}_{q} \otimes \mathcal{F}_{q}\otimes \mathcal{F}_{q}\longrightarrow
\mathcal{F}_{q} \otimes \mathcal{F}_{q}\otimes \mathcal{F}_{q}
\end{align}
be the associated intertwiner.
It is characterized by the relations:
\begin{align}
&\pi_{212}(\Delta(f))\circ \Phi = \Phi \circ \pi_{121}(\Delta(f))
\quad (\forall f \in A_q(\mathrm{SL}_3)),\label{pheqA}\\
&\Phi (|0\rangle \otimes|0\rangle \otimes|0\rangle) 
=|0\rangle \otimes|0\rangle \otimes|0\rangle, \label{phno}
\end{align}
where the latter just fixes a normalization.
As in the case of quantum $R$ matrices,
we find it convenient to work with $R$ defined by
\begin{align}\label{rphA13}
R = \Phi  P_{13} : 
\; \mathcal{F}_{q} \otimes \mathcal{F}_{q}\otimes \mathcal{F}_{q}
\longrightarrow 
\mathcal{F}_{q} \otimes \mathcal{F}_{q}\otimes \mathcal{F}_{q},
\end{align}
where $P_{13}(x\otimes y \otimes z) = z \otimes y \otimes x$.
The intertwining relation for $R$ reads
\begin{align}
&\pi_{212}(\Delta(f))\circ R = R \circ \pi_{121}(\tilde{\Delta}(f))
\quad (\forall f \in A_q(\mathrm{SL}_3)),\label{reRA}
\end{align}
where $\tilde{\Delta}(f) = P_{13}(\Delta(f))P_{13}$, namely,
\begin{equation}\label{gyaku}
\tilde{\Delta}(t_{ij}) = \sum_{l_1,l_2} 
t_{l_2 j} \otimes t_{l_1l_2} \otimes t_{i l_1}.
\end{equation}
For instance, the equation (\ref{reRA}) with $f=t_{11}$ gives
\begin{equation}\label{eq11} 
\mu_1({\bf 1}\otimes {\rm{\bf a}}^- \otimes {\bf 1}) R = 
R \bigl(\mu_1^2({\rm{\bf a}}^-\otimes {\bf 1}\otimes {\rm{\bf a}}^-)
-q\mu_2({\rm{\bf k}}\otimes {\rm{\bf a}}^-\otimes {\rm{\bf k}})\bigr).
\end{equation}
The $R$ is regarded as a matrix 
$R = (R^{abc}_{ijk})$ whose elements are specified by
\begin{align}\label{rmel}
R(|i\rangle \otimes |j\rangle \otimes |k\rangle) = 
\sum_{a,b,c} R^{a b c}_{i j k}
|a\rangle \otimes |b\rangle \otimes |c\rangle.
\end{align}
The normalization condition (\ref{phno}) becomes 
$R^{000}_{000}=1$.
Introduce the notations
\begin{align*}
(q)_i = \prod_{j=1}^i(1-q^j),\quad
\left\{i_1,\ldots, i_r \atop j_1, \ldots, j_s\right\} =
\begin{cases} 
\frac{\prod_{k=1}^r(q^2)_{i_k}}{\prod_{k=1}^s(q^2)_{j_k}} & 
\forall i_k, j_k \in \Z_{\ge 0},\\
0 & \text{otherwise}.
\end{cases}
\end{align*}
Except the dependence on the parameters $\mu_i$,
the following is essentially due to \cite{KV}. 
\begin{theorem}\label{th:agata}
The equation (\ref{reRA}) with 
$R^{000}_{000} = 1$ has a unique solution. It has the form
\begin{align}
R^{abc}_{ijk} &= \mu_1^{a-j+k}\mu_2^{b-a-k}
\mathscr{R}^{abc}_{ijk},\label{rA1}
\end{align}
where $\mathscr{R}^{abc}_{ijk}$ is independent of the parameters and given by
\begin{align}
\mathscr{R}^{abc}_{ijk} &=\delta_{i+j,a+b}\delta_{j+k,b+c}
\sum_{\lambda+\mu=b}(-1)^\lambda
q^{i(c-j)+(k+1)\lambda+\mu(\mu-k)}
\left\{{i,j,c+\mu \atop \mu,\lambda,i-\mu,j-\lambda,c}\right\},
\label{rA2}
\end{align}
where the sum is over $\lambda, \mu \in \Z_{\ge 0}$ such that $\lambda + \mu = b$.
\end{theorem}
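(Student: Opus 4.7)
\emph{Uniqueness} is immediate from Schur's lemma combined with Theorem \ref{th:so}: since $s_1s_2s_1 = s_2s_1s_2$ is a Coxeter relation, the $A_q(\mathrm{SL}_3)$-modules $\pi_{121}$ and $\pi_{212}$ are irreducible and equivalent, so any two solutions of (\ref{reRA}) differ by a scalar, which is fixed to be $1$ by $R^{000}_{000}=1$.

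For \emph{existence}, I would expand the relations (\ref{reRA}) for each generator $f = t_{ij}$. Using (\ref{copro}), (\ref{gyaku}), and the explicit formulas (\ref{repA}) for $\pi_i$, both $\pi_{121}(\tilde\Delta(t_{ij}))$ and $\pi_{212}(\Delta(t_{ij}))$ become concrete operators on $\mathcal{F}_q^{\otimes 3}$ built from $\mathbf{1}$, $\mathbf{a}^{\pm}$, $\mathbf{k}$. The relations coming from $t_{11}, t_{22}, t_{33}$ (a sample being (\ref{eq11})) are diagonal in $\mathbf{k}$ and force the conservation laws $i+j = a+b$ and $j+k = b+c$, producing the two Kronecker deltas in (\ref{rA2}). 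The remaining generators $t_{12}, t_{21}, t_{13}, t_{31}, t_{23}, t_{32}$ give shift relations on the matrix elements $R^{abc}_{ijk}$; a minimal sufficient subset (for instance those arising from $t_{11}$ and $t_{12}$) is enough to determine every $R^{abc}_{ijk}$ recursively from $R^{000}_{000}$.

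Before attempting a closed form, I would factor out the parameter dependence. Tracking how $\mu_1,\mu_2$ and the $\alpha_i,\beta_i$ (subject to (\ref{paraAA})) enter each recurrence shows that the unique scaling compatible with the shift relations and the weight conservation forces the factorization $R^{abc}_{ijk} = \mu_1^{a-j+k}\mu_2^{b-a-k}\mathscr{R}^{abc}_{ijk}$ in (\ref{rA1}), with $\mathscr{R}$ a function of $q$ alone. This reduces the problem to solving a parameter-free system of $q$-recurrences for $\mathscr{R}^{abc}_{ijk}$.

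The main obstacle is verifying that the single-sum formula (\ref{rA2}) satisfies every shift recurrence. The initial condition $\mathscr{R}^{000}_{000}=1$ is read off directly (only $\lambda=\mu=0$ contributes, with summand $1$). For each shift relation, substituting (\ref{rA2}) turns both sides into sums over $\lambda,\mu$, and matching coefficients reduces to a $q$-Pascal-type identity on the ratios $(q^2)_n/(q^2)_{n-1}$, provable by telescoping. A cleaner alternative, which I would pursue in parallel, is to encode $\mathscr{R}$ as the matrix element of a composition of $q$-exponentials in $\mathbf{a}^{\pm}$ acting on $\mathcal{F}_q^{\otimes 3}$; the intertwining relations then become commutation identities in the $q$-oscillator algebra, bypassing direct manipulation of the triple $q$-factorial sum.
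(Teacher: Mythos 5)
Your overall architecture matches the paper's: uniqueness and existence both come from Theorem \ref{th:so} plus Schur's lemma, a small subset of the relations (\ref{reRA}) determines $R$ recursively from $R^{000}_{000}$, the parameter dependence factors out as in (\ref{rA1}), and the remaining relations need not be checked because the intertwiner is already known to exist. However, there is one concrete error in your plan: the conservation laws $i+j=a+b$, $j+k=b+c$ do \emph{not} come from $f=t_{11},t_{22},t_{33}$, and the relation (\ref{eq11}) that you cite as a ``diagonal'' sample is visibly not diagonal --- it involves $\mathbf{a}^-$ on both sides. The generators whose images under the triple coproduct are built purely from $\mathbf{k}$ and $\mathbf{1}$ are the corner entries $t_{13}$ and $t_{31}$ (since $\pi_1,\pi_2$ kill most off-block entries), and the paper derives $(q^{b+c}-q^{j+k})R^{abc}_{ijk}=0$ and $(q^{a+b}-q^{i+j})R^{abc}_{ijk}=0$ precisely from these. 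If you tried to extract the deltas from $t_{11},t_{22},t_{33}$ you would fail and have to hunt for the right generators, so this step needs correcting even though the fix is short.

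On the closed form, the paper's route is slightly different from yours and arguably cleaner: rather than guessing (\ref{rA2}) and verifying it against the recurrences via $q$-Pascal identities, it takes the two recursions coming from $f=t_{32}$ and $f=t_{33}$ (equations (\ref{t32}), (\ref{t33})), iterates each one $m$ times in closed form to strip off the $i$ and $j$ indices, and composes the two resulting single sums to \emph{derive} (\ref{rA2}) directly with the correct initial condition. Your ``verify every shift recurrence'' step is then unnecessary: once the formula satisfies a determining subset of relations and $\mathscr{R}^{000}_{000}=1$, the validity of (\ref{reRA}) for all other $f$ is automatic from the existence of the intertwiner. Your proposed alternative via $q$-exponential operator identities is not pursued in the paper and would be a genuinely different (unverified) route.
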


\begin{proof}
For the choices $f=t_{13}$ and $t_{31}$, 
(\ref{reRA}) reads
\begin{equation*}
(q^{b+c}-q^{j+k})R^{abc}_{ijk}=0,
\quad
(q^{a+b}-q^{i+j})R^{abc}_{ijk}=0
\end{equation*}
giving the factor 
$\delta_{i+j,a+b}\delta_{j+k,b+c}$ in (\ref{rA2}).
We call this support property {\em conservation law}.
Thanks to it, one readily checks that 
$\mathscr{R}^{abc}_{ijk}$ defined in (\ref{rA1}) 
obeys the recursion relations independent of the parameters.  
Let us pick the following two among them:
\begin{align}
f=t_{32}: & \; \mathscr{R}^{abc}_{ijk} = (1-q^{2c+2})q^{a-j}
\mathscr{R}^{a,b-1,c+1}_{i-1,j,k}+q^{c-j}\mathscr{R}^{a-1,b,c}_{i-1,j,k},\label{t32}\\
f=t_{33}: & \; \mathscr{R}^{abc}_{ijk} = 
\mathscr{R}^{a-1,b,c-1}_{i,j-1,k}-q^{a+c+1}\mathscr{R}^{a,b-1,c}_{i,j-1,k}.\label{t33}
\end{align}
They can be iterated $m$ times to reduce $i$ and $j$ indices as
\begin{align*}
\mathscr{R}^{abc}_{ijk} = \delta_{i+j,a+b}\delta_{j+k,b+c}\sum_{r=0}^m
q^{(m-r)(c-j)+r(a-j-m+r)}
\left\{{m, c+r \atop r, m-r,c}\right\}
\mathscr{R}^{a-m+r,b-r,c+r}_{i-m,j,k},\\
\mathscr{R}^{abc}_{ijk} = \delta_{i+j,a+b}\delta_{j+k,b+c}\sum_{r=0}^m(-1)^r
q^{r(a+c-2m+2r+1)}
\left\{{m \atop r, m-r}\right\}
\mathscr{R}^{a-m+r,b-r,c-m+r}_{i,j-m,k}.
\end{align*}
Combining them, we get (\ref{rA2}).
Since the intertwiner exists, the validity of (\ref{reRA})
for the other $f$'s is guaranteed.
\end{proof}

\begin{proposition}\label{pr:Rinv}
The matrices 
$R$ and $\mathscr{R} = (\mathscr{R}^{abc}_{ijk})$ have the properties
\begin{align}
&R^{-1} = R|_{\mu_1 \leftrightarrow \mu_2},\qquad
\mathscr{R}^{-1} = \mathscr{R},
\label{RAinv}\\
&(q^2)_a(q^2)_b(q^2)_c\,\mathscr{R}^{abc}_{ijk} 
= (q^2)_i(q^2)_j(q^2)_k\,\mathscr{R}^{ijk}_{abc} .\label{Rsym}
\end{align}
\end{proposition}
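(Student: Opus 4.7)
The proposition has two parts: the inversion (\ref{RAinv}) and the symmetry (\ref{Rsym}). My plan is to first reduce (\ref{RAinv}) to the parameter-free identity $\mathscr{R}^2=I$ and prove that by uniqueness of the intertwiner, then establish (\ref{Rsym}) via a self-adjointness argument on $\mathcal{F}_q$.

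The conservation constraints $\delta_{i+j,a+b}\delta_{j+k,b+c}$ split $\mathcal{F}_q^{\otimes 3}$ into finite-dimensional sectors labeled by $(m_1,m_2)=(i+j,j+k)$, in which states are parametrized by $j_0\in\{0,\ldots,\min(m_1,m_2)\}$ through $(i,j,k)=(m_1-j_0,j_0,m_2-j_0)$. Substituting (\ref{rA1}) into $R(\mu_2,\mu_1)R(\mu_1,\mu_2)$ and doing a short exponent count, the $\mu_1,\mu_2$-powers coming from the two factors collapse to $(\mu_1^2/\mu_2)^{j_2-j_0}$, which is independent of the intermediate summation variable $j_1$. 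Hence within each sector
\[
(R(\mu_2,\mu_1)R(\mu_1,\mu_2))^{(j_2)}_{(j_0)}=(\mu_1^2/\mu_2)^{j_2-j_0}\,(\mathscr{R}^2)^{(j_2)}_{(j_0)},
\]
and comparing with the identity matrix for generic $(\mu_1,\mu_2)$ gives the equivalence $R^{-1}=R|_{\mu_1\leftrightarrow\mu_2}\Leftrightarrow\mathscr{R}^2=I$.

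To prove $\mathscr{R}^2=I$ I invoke the uniqueness underlying Theorem~\ref{th:agata}. The recursions (\ref{t32}) and (\ref{t33}) together with $\mathscr{R}^{00k}_{00k}=1$ characterize $\mathscr{R}$ sector by sector. My plan is to derive two further recursions from (\ref{reRA}) at $f=t_{12}$ and $f=t_{21}$ that lower the output indices $a,b$ rather than the input indices $i,j$; combining these with (\ref{t32}), (\ref{t33}) and inverting the finite sector matrices, $\mathscr{R}^{-1}$ must obey the identical pair (\ref{t32}), (\ref{t33}) with the same initial datum, forcing $\mathscr{R}^{-1}=\mathscr{R}$.

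For (\ref{Rsym}), endow $\mathcal{F}_q$ with the pairing $\langle m\,|\,n\rangle=(q^2)_m\delta_{m,n}$, extended diagonally to the tensor cube. A one-line check on (\ref{akf}) yields $\langle \mathbf{a}^-u\,|\,v\rangle=\langle u\,|\,\mathbf{a}^+v\rangle$, with $\mathbf{k}$ and $\mathbf{1}$ self-adjoint. At a complex parameter specialization permitted by (\ref{paraAA}), namely $\mu_i=\nu_i=1$ and $\beta_i=\bar{\alpha}_i$, one verifies $\pi_i(t_{kl})^\dagger=\pi_i(t_{lk})$ at the operator level; a short calculation then shows that $\dagger$ exchanges $\Delta$ with $\tilde\Delta$ on the tensor products $\pi_{121},\pi_{212}$. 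Taking $\dagger$ of (\ref{reRA}) therefore produces an intertwining relation for $\mathscr{R}^\dagger$ of the same structural form as the one characterizing $\mathscr{R}$, with the same normalization $\mathscr{R}^{000}_{000}=1$; uniqueness then forces $\mathscr{R}^\dagger=\mathscr{R}$, which written in components is exactly (\ref{Rsym}). The principal technical obstacle lies in the third step---explicitly producing the output-side recursions and checking that they really extract $\mathscr{R}^{-1}$; as a fallback I would verify $\mathscr{R}^2=I$ by a direct manipulation of the double sum (\ref{rA2}) using a $q$-Chu--Vandermonde resummation, and likewise check (\ref{Rsym}) by a change of summation variable exploiting the conservation shifts $a=i+\delta,\,b=j-\delta,\,c=k+\delta$.
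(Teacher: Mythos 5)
Your reduction of the first claim to the parameter-free statement $\mathscr{R}^2=I$ is correct (the $\mu$-exponents do collapse to $(\mu_1^2/\mu_2)^{b-j}$ by the conservation laws), but the proof of $\mathscr{R}^2=I$ itself is never supplied: the ``output-side recursion'' plan is left as an acknowledged obstacle, and it is not evident that recursions extracted from $f=t_{12},t_{21}$ would reproduce the pair (\ref{t32}), (\ref{t33}); the $q$-Chu--Vandermonde fallback is likewise not carried out. The missing idea is a gauge choice: since (\ref{paraAA}) constrains only the product $\alpha_i\beta_i$, one may set $\alpha_i=\beta_i$, which makes every $\pi_\ell(t_{ij})=\pi_\ell(t_{ji})$ and hence $\pi_{121}(\tilde\Delta(t_{ij}))=\pi_{121}(\Delta(t_{ji}))$ and $\pi_{212}(\Delta(t_{ij}))=\pi_{212}(\tilde\Delta(t_{ji}))$. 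The defining relation (\ref{reRA}) for $R$ then literally becomes the defining relation for $R^{-1}$ with $\mu_1\leftrightarrow\mu_2$, and uniqueness of the intertwiner finishes the argument with no computation at all.

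For (\ref{Rsym}) your overall strategy (the pairing $\langle m|n\rangle=(q^2)_m\delta_{m,n}$ plus uniqueness) is the right one, but the operator identity you base it on, $\pi_i(t_{kl})^\dagger=\pi_i(t_{lk})$, is false: already $\pi_1(t_{11})=\mu_1{\rm{\bf a}}^-$ has adjoint $\mu_1{\rm{\bf a}}^+\neq\pi_1(t_{11})$ for any admissible parameters (and $\beta_i=\bar\alpha_i$ is incompatible with $\alpha_i\beta_i=-q$ for $q>0$). The correct identity must exchange the two fundamental representations \emph{and} reverse the matrix indices: $\pi_1(t_{ij})^T=D\pi_2(t_{j'i'})D^{-1}$ and $\pi_2(t_{ij})^T=D\pi_1(t_{j'i'})D^{-1}$ with $i'=4-i$, valid after tuning $\mu_i=\nu_i=1$, $\alpha_1=\alpha_2$, $\beta_1=\beta_2$. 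This double flip is essential, because it sends $\pi_{212}\Delta(t_{ij})$ to a $\mathcal{D}$-conjugate of $\pi_{121}\tilde\Delta(t_{j'i'})$, i.e.\ it swaps the two \emph{sides} of (\ref{reRA}); consequently $\mathcal{D}^{-1}R^T\mathcal{D}$ satisfies the same intertwining relation as $R$ itself. Your same-representation identity, even if it held, would only show that $R^\dagger$ satisfies the relation characterizing $R^{-1}$, not $R$, so the uniqueness argument would not close as written. With the corrected identity, uniqueness and the normalization at $|0\rangle\otimes|0\rangle\otimes|0\rangle$ give $\mathcal{D}^{-1}R^T\mathcal{D}=R$, i.e.\ $(\mathcal{D}R)^T=\mathcal{D}R$, which is (\ref{Rsym}).
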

\begin{proof}
Take $\alpha_i=\beta_i$ in (\ref{repA})
without influencing the relations (\ref{paraAA}), (\ref{rA1}) 
and the parameters $\mu_i$ and $\nu_i$.
Then $\pi_\ell(t_{ij})= \pi_\ell(t_{ji})$ holds for any 
$i, j, \ell$, and therefore 
$\pi_{121}(\tilde{\Delta}(t_{ij})) = \pi_{121}(\Delta(t_{ji}))$ by (\ref{gyaku}).
Thus the defining equations (\ref{reRA})
for $R$ are equivalent to those for $R^{-1}$ with 
the interchange $\mu_1 \leftrightarrow \mu_2$.
This proves the first equality in (\ref{RAinv}), which also 
implies the second one by (\ref{rA1}).
Next to show (\ref{Rsym}) we tune the parameter as
$\mu_1=\mu_2 = \nu_1=\nu_2  =1,\, \alpha_1=\alpha_2$ 
and $\beta_1 = \beta_2$ without violating (\ref{paraAA}).
Then we have $R=\mathscr{R}$ by (\ref{rA1}), hence  
(\ref{Rsym}) is the assertion that 
$\mathcal{D} R$ is symmetric.
Here $\mathcal{D} = D\otimes D\otimes D$ with 
$D \in \mathrm{End}(\mathcal{F}_q)$ being a diagonal operator defined by
\begin{equation}\label{Ddef}
D | m\rangle  = (q^2)_m | m \rangle.
\end{equation}  
From (\ref{akf}) we see 
$({\rm{\bf a}}^{\pm})^T = D  {\rm{\bf a}}^{\mp} D^{-1}$
and 
$({\rm{\bf k}})^T = D  {\rm{\bf k}} D^{-1}$
for the transposed actions on the Fock space.
Under the above choice of the parameters, this leads to
\begin{equation*}
\pi_1(t_{ij})^T = D \pi_2(t_{j' i'}) D^{-1},\quad
\pi_2(t_{ij})^T = D \pi_1(t_{j' i'}) D^{-1}\quad (i' = 4-i)
\end{equation*}
for the single representations (\ref{repA}), therefore 
\begin{equation*}
(\pi_{212}\Delta (t_{ij}))^T = \mathcal{D}
(\pi_{121} \tilde{\Delta}(t_{j' i'})) \mathcal{D}^{-1},\quad
(\pi_{121}\tilde{\Delta} (t_{ij}))^T = \mathcal{D}
(\pi_{212} \Delta(t_{j' i'})) \mathcal{D}^{-1}
\end{equation*}
for their tensor products. See (\ref{gyaku}). 
Taking the transpose of (\ref{reRA}) by means of the above formula, 
one finds that 
$\mathcal{D}^{-1}R^T\mathcal{D}$ again satisfies (\ref{reRA}).
Since the intertwiner is unique up to normalization and $R$ and 
$\mathcal{D}^{-1}R^T\mathcal{D}$ coincide on 
$|0\rangle \otimes |0\rangle \otimes |0\rangle$,
we conclude 
$\mathcal{D}^{-1}R^T\mathcal{D} = R$
hence $(\mathcal{D} R)^T  = \mathcal{D} R$.
\end{proof}

\begin{remark}\label{re:sigA}
If $\sigma_1$ and $\sigma_2$ are retained in (\ref{repA}),
the intertwining relation (\ref{reRA}) has the solution only if 
$\sigma_1=\sigma_2$.
The resulting modification of (\ref{rA1}) is only to multiply 
an overall power of $\sigma_1$ on its RHS.
\end{remark}

The equation (\ref{pheqA}) or (\ref{reRA}) 
have also been considered effectively in several guises and referred 
as tetrahedral Zamolodchikov algebra, local Yang-Baxter equation or 
quantum Korepanov equation, etc.
See for example
\cite{MN89,Korep93,Korep95,KKS98,BS,BMS}.

The result (\ref{rA2}) has been obtained by using (\ref{reRA}) with
$f=t_{32}$ and $t_{33}$.
It is the same route as those taken in \cite{KV} and \cite{KS},
although the formulas therein contain misprints unfortunately.
One can derive apparently different expressions from other choices of $f$.
Here we include a remark on the choice $f=t_{11}$ given in (\ref{eq11}).
In terms of the matrix elements of $\mathscr{R}$ it reads
\begin{equation}
(1-q^{2b})\mathscr{R}^{abc}_{ijk}
=(1-q^{2i})(1-q^{2k})\mathscr{R}^{a,b-1,c}_{i-1,j,k-1}-
q^{i+k+1}(1-q^{2j})\mathscr{R}^{a,b-1,c}_{i,j-1,k}.\label{pr}
\end{equation}
On the other hand, recall the 3D $R$ matrix \cite[eq.(30)]{BS}
whose elements are given by
\begin{equation}\label{231}
\langle i,j,k | {\bf r} | a,b,c \rangle = 
\delta_{i+j,a+b}\delta_{j+k,b+c}
\frac{q^{(a-j)(c-j)}}{(q^2)_{b}}P_{b}(q^{2i},q^{2j},q^{2k}),
\end{equation}
where $P_m$ is determined by the recursion 
\begin{equation}
P_m(x,y,z) = (1-x)(1-z)P_{m-1}(q^{-2}x,y,q^{-2}z)
-q^{2-2m}xz(1-y)P_{m-1}(x,q^{-2}y,z)\label{br}
\end{equation}
and the initial condition $P_0(x,y,z)=1$.
In (\ref{231}) we have removed the power  $q^{-\beta}$ and the 
sign $(-1)^\beta$ in \cite[eq.(30)]{BS} in view of the fact that
the first one is absent in \cite[eq.(59)]{BMS}
and the latter can be absorbed into $\varepsilon$ in \cite[eq.(22)]{BMS}.
We point out that
\begin{align}
&\mathscr{R}^{abc}_{ijk} =\langle i,j,k | {\bf r} | a,b,c \rangle.
\label{pp}
\end{align}
To show this, substitute the RHS of (\ref{231}) 
into (\ref{pr}). Then it agrees with 
(\ref{br}) due to the conservation law $a+b=i+j$ and $b+c=j+k$.
It remains to check the initial condition
\begin{align}
\mathscr{R}^{a 0 c}_{i j k} = \delta_{i+j, a}\delta_{j+k,c}\, q^{(a-j)(c-j)}= 
\delta_{i+j, a}\delta_{j+k,c} \,q^{ik},
\end{align}
which is straightforward by (\ref{rA2}).
Thus (\ref{pp}) has been proved.
Note that the formula (\ref{231}) with (\ref{rA1}) tells 
\begin{equation}\label{swapR}
\mathscr{R}^{abc}_{ijk} = \mathscr{R}^{cba}_{kji},\qquad 
R^{abc}_{ijk} = R^{cba}_{kji}.
\end{equation}

\begin{remark}\label{re:Rq0}
One can show that  $\mathscr{R}^{abc}_{ijk}$ is a
{\em polynomial} in $q$ with integer coefficients.
More precisely, 
$\mathscr{R}^{abc}_{ijk} \in q^{\xi}\Z[q^2] $ where $\xi=0,1$ 
is specified by 
$\xi \equiv (a-j)(c-j)\mod 2$.
The matrix
\begin{equation*}
\mathcal{R} = (\mathcal{R}^{abc}_{ijk}) := \mathscr{R}|_{q=0}
\end{equation*}
has the elements
\begin{equation}\label{rzero}
\mathcal{R}^{abc}_{ijk} = \mathscr{R}^{abc}_{ijk}|_{q=0} = \delta_{i+j,a+b}\delta_{j+k,b+c}
\delta_{i, b+(a-c)_+}\delta_{j, \min(a,c)}\delta_{k, b+(c-a)_+},
\end{equation}
where $(y)_+ = \max(y,0)$.
The proof is far simpler than the analogous result on 
$K$ in Theorem \ref{th:combk} which will be detailed in Appendix \ref{app:kq0}.
Moreover, (\ref{RAinv}) implies $\mathcal{R} = \mathcal{R}^{-1}$.
Thus $\mathcal{R}$ defines a bijection on each finite set 
$\{(a,b,c) \in (\Z_{\ge 0})^3\,| a+c=\text{const}, b+c=\text{const}\}$
characterized by the values of conserved quantities.
We call $\mathcal{R}$ {\em combinatorial} 3D $R$.
It is analogous to the $q=0$ case of quantum $R$ matrices, 
which has led to many applications.
See for example \cite{KO, IKT}.
More remarks are in order in Section \ref{subsec:tropA}.
\end{remark}

\begin{example}\label{ex:rq0}
The following is the list of all the nonzero $\mathscr{R}^{abc}_{314}$.
\begin{align*}
\mathscr{R}^{041}_{314} &= -q^2 (1 - q^4) (1 - q^6) (1 - q^8), \\ 
\mathscr{R}^{132}_{314} &=(1 - q^6) (1 - q^8)(1 - q^4 - q^6 - q^8 - q^{10}),\\
\mathscr{R}^{223}_{314} &= q^2 (1 + q^2) (1 + q^4) (1 - q^6) (1 - q^6 - q^{10}), \\
\mathscr{R}^{314}_{314} &=q^6(1 + q^2 + q^4 - q^8 - q^{10} - q^{12} - q^{14}),\\
\mathscr{R}^{405}_{314} &=q^{12}.
\end{align*}
Thus $\mathcal{R}^{abc}_{314}=\delta_{a,1}\delta_{b,3}\delta_{c,2}$ 
in agreement with (\ref{rzero}).
\end{example}

\subsection{\mathversion{bold}$A_q(\mathrm{SL}_4)$ and tetrahedron equation}\label{ss:tea}

Consider $A_q(\mathrm{SL}_4)$ and let
$\pi_i=\pi_i^{\alpha_i, \mu_i}\, (i=1,2,3)$ be its irreducible representations  
specified around (\ref{paraAA}) following \cite{KV}.
The tensor products $\pi_{212}$ and $\pi_{121}$ 
are intertwined by the same 
$\Phi$ as the one for $A_q(\mathrm{SL}_3)$ given in (\ref{rphA13}) and 
Theorem \ref{th:agata}.
Write it as $\Phi^{(1)}$, which involves the parameters $\mu_1$ and $\mu_2$.
It is easy to see that 
$\pi_{323}$ and $\pi_{232}$ is similarly intertwined  
by $\Phi^{(2)}$ obtained from  $\Phi^{(1)}$ 
by changing the parameters $(\mu_1,\mu_2)$ to $(\mu_2,\mu_3)$. 
Thus we have
\begin{equation}\label{pheqAA}
\begin{split}
&\pi_{212}(\Delta(f))\circ \Phi^{(1)}
= \Phi^{(1)}\circ \pi_{121}(\Delta(f)),\\
&\pi_{323}(\Delta(f))\circ \Phi^{(2)}
= \Phi^{(2)}\circ \pi_{232}(\Delta(f))
\end{split}
\end{equation}
for any $f \in A_q(\mathrm{SL}_4)$.
According to (\ref{rphA13}), we set
\begin{equation}\label{R12p}
\Phi^{(1)} = R^{(1)} P_{13},\quad 
\Phi^{(2)} = R^{(2)} P_{13},
\end{equation}
where $R^{(1)}$ is (\ref{rA1}) and 
$R^{(2)}$ is obtained from it by changing 
 $(\mu_1,\mu_2)$ to $(\mu_2,\mu_3)$.
The cumbersome upper indices can always be forgotten
by specializing the parameters to $\mu_1=\mu_2=\mu_3$.

Let $w_0 \in W(\mathrm{SL}_4)$ be the longest element of the Weyl group.
We pick two reduced expressions say,
\begin{equation}\label{red6}
w_0 = s_1s_2s_3s_1s_2s_1
= s_3s_2s_3s_1s_2s_3,
\end{equation} 
where the two sides are interchanged by 
replacing $s_i$ by $s_{4-i}$ and reversing the order.
According to Theorem \ref{th:so}, we have the equivalence of the 
two representations of $A_q(\mathrm{SL}_4)$:
\begin{equation}\label{pi6}
\pi_{123121} \simeq \pi_{323123}.
\end{equation}
Let $P_{ij}$ and $\Phi^{(1)}_{ijk}, \Phi^{(2)}_{ijk}$ 
be the transposition $P$ (\ref{Pt}) 
and the intertwiners $\Phi^{(1)}, \Phi^{(2)}$
that act on the tensor components specified by the indices.
These components must be adjacent  
(i.e. $j-i=k-j=1$) to make the
relations (\ref{tp2}) and (\ref{pheqAA}) work.
With this guideline, one can construct the 
intertwiners for (\ref{pi6}) by following the 
transformation of the reduced expressions 
by the Coxeter relations 
\begin{equation*}
s_1s_3=s_3s_1,\quad s_1s_2s_1=s_2s_1s_2,\quad s_2s_3s_2 = s_3s_2s_3.
\end{equation*}
There are two ways to achieve this.
In terms of the indices, they look as follows:
\begin{alignat}{3}
123\underline{121}\quad & \Phi^{(1)}_{456} \qquad\qquad 
&12\underline{31}21\quad & P_{34} \nonumber\\
1\underline{232}12 \quad & \Phi^{(2)}_{234} \qquad\qquad 
&\underline{121}321\quad & \Phi^{(1)}_{123} \nonumber\\
\underline{13}2\underline{31}2 \quad & P_{12}P_{45} \qquad\qquad 
&21\underline{232}1\quad & \Phi^{(2)}_{345} \nonumber\\
3\underline{121}32 \quad & \Phi^{(1)}_{234} \qquad\qquad 
&2\underline{13}2\underline{31}\quad & P_{23}P_{56} \nonumber\\
321\underline{232} \quad & \Phi^{(2)}_{456} 
\qquad\qquad &23\underline{121}3\quad & \Phi^{(1)}_{345} \nonumber\\
32\underline{13}23 \quad & P_{34} \qquad\qquad 
&\underline{232}123 \quad & \Phi^{(2)}_{123}\nonumber\\
323123 \quad &  \qquad\qquad &323123\quad & \label{2ways}
\end{alignat}
The underlines indicate the components to which the intertwiners given on the right 
are to be applied.
Since (\ref{pi6}) is irreducible, we get 
\begin{equation}\label{phpA}
P_{34}\Phi^{(2)}_{456} \Phi^{(1)}_{234}P_{12}P_{45}
\Phi^{(2)}_{234} \Phi^{(1)}_{456}\\
= \Phi^{(2)}_{123}\Phi^{(1)}_{345} P_{23}P_{56}\Phi^{(2)}_{345}
\Phi^{(1)}_{123}P_{34}.
\end{equation}
Substituting 
$\Phi^{(1)}_{ijk} = R^{(1)}_{ijk} P_{jk}$ and 
$\Phi^{(2)}_{ijk} = R^{(2)}_{ijk} P_{jk}$\footnote{See (\ref{R12p}). 
Indices $ijk$ of $R$ here signify the tensor components
and should not be confused with those for the matrix elements (\ref{rmel}).}
into this and sending all the $P_{ij}$'s through to the right, 
we find that the products of $P_{ij}$'s
correspond to the longest element in the symmetric group $\mathfrak{S}_6$ 
on the both sides. Thus canceling them out, we obtain
\begin{equation}\label{te1}
R^{(2)}_{356}R^{(1)}_{246}R^{(2)}_{145}R^{(1)}_{123}
=R^{(2)}_{123}R^{(1)}_{145}R^{(2)}_{246}R^{(1)}_{356}
\end{equation}
for the operators acting on $(\mathcal{F}_q)^{\otimes 6}$.
When $\mu_1=\mu_2=\mu_3$ hence the upper indices can be  removed,
it reproduces a version of the Zamolodchikov tetrahedron equation 
\cite{Zam80,Zam81}.
In particular it implies that $\mathscr{R}$ (\ref{rA2}) 
satisfies 
\begin{equation}\label{tehat}
\mathscr{R}_{356}\mathscr{R}_{246}\mathscr{R}_{145}\mathscr{R}_{123}
=\mathscr{R}_{123}\mathscr{R}_{145}\mathscr{R}_{246}\mathscr{R}_{356}.
\end{equation}
This is an equality among polynomials of $q$ and free from the other parameters.

Let us write $123121 \rightarrow 323123$ to stand for the above
calculation leading to the tetrahedron equation (\ref{te1}).
There are 16 reduced expressions for $w_0$ in total and 
one can play the same game with the other 7 pairs.
The result is given by
\begin{align*}
121321\rightarrow 321323 : \quad 
&R^{(2)}_{456}R^{(1)}_{236}R^{(2)}_{135}R^{(1)}_{124}
=R^{(2)}_{124}R^{(1)}_{135}R^{(2)}_{236}R^{(1)}_{456},\\
123212 \rightarrow 232123: \quad
&\bar{R}^{(2)}_{321}R^{(2)}_{156}R^{(1)}_{246}R^{(2)}_{345}
=  R^{(1)}_{345}R^{(2)}_{246}R^{(1)}_{156}\bar{R}^{(1)}_{321},\\
132132 \rightarrow 213213 : \quad
&R^{(2)}_{346}R^{(1)}_{126}\bar{R}^{(1)}_{531}\bar{R}^{(2)}_{542}
= \bar{R}^{(1)}_{542}\bar{R}^{(2)}_{531}
R^{(2)}_{126}R^{(1)}_{346},\\
132312 \rightarrow 231213 : \quad
&R^{(2)}_{246}R^{(1)}_{136}\bar{R}^{(1)}_{521}\bar{R}^{(2)}_{543}
= \bar{R}^{(1)}_{543}\bar{R}^{(2)}_{521}
R^{(2)}_{136}R^{(1)}_{246},\\
212321 \rightarrow 321232 : \quad
&R^{(1)}_{234}R^{(2)}_{135}R^{(1)}_{126}\bar{R}^{(1)}_{654}
=\bar{R}^{(2)}_{654}R^{(2)}_{126}R^{(1)}_{135}R^{(2)}_{234},\\
213231 \rightarrow 312132 : \quad
&R^{(2)}_{135}R^{(1)}_{146}\bar{R}^{(1)}_{652}\bar{R}^{(2)}_{432}
=\bar{R}^{(1)}_{432}\bar{R}^{(2)}_{652}R^{(2)}_{146}R^{(1)}_{135},\\
231231 \rightarrow 312312 : \quad
&R^{(2)}_{134}R^{(1)}_{156}\bar{R}^{(1)}_{642}\bar{R}^{(2)}_{532}
=\bar{R}^{(1)}_{532}\bar{R}^{(2)}_{642}R^{(2)}_{156}R^{(1)}_{134},
\end{align*}
where the notation $\bar{R} = R^{-1}$
has been used to uniform the spacing.
Thanks to (\ref{swapR}), we have 
$R^{(1)}_{ijk} = R^{(1)}_{kji}$ and 
$R^{(2)}_{ijk} = R^{(2)}_{kji}$.
Using this symmetry it can be checked that
all the above relations reduce to the single tetrahedron equation
(\ref{te1}).

One can derive similar compatibility conditions for 
the intertwiners in $A_q(\mathrm{SL}_n)$ with $n \ge 5$. 
We expect that they are all attributed to  (\ref{tehat})
in the parameter-free case.
For instance for $n=5$, the longest element  is of length 10 and 
the compatibility is expressed as
\begin{equation*}
\mathscr{R}_{123}\mathscr{R}_{145}\mathscr{R}_{246}\mathscr{R}_{356}
\mathscr{R}_{178}\mathscr{R}_{279}\mathscr{R}_{389}\mathscr{R}_{470}
\mathscr{R}_{580}\mathscr{R}_{690} = 
\text{product in reverse order}, 
\end{equation*}
where $0$ is the abbreviation of $10$.
This can be derived by using (\ref{tehat}) five times.

Setting $q=0$ in (\ref{tehat}), we find that the combinatorial 3D $R$ 
in Remark \ref{re:Rq0}  also satisfies the tetrahedron equation
\begin{equation}\label{tecr}
\mathcal{R}_{356}\mathcal{R}_{246}\mathcal{R}_{145}\mathcal{R}_{123}
=\mathcal{R}_{123}\mathcal{R}_{145}\mathcal{R}_{246}\mathcal{R}_{356}.
\end{equation}
It is an identity of the bijections on subsets of $(\Z_{\ge 0})^6$.

\begin{example}\label{ex:cr}
To demonstrate (\ref{tecr}), 
we denote a monomial $|i_1\rangle \otimes \cdots \otimes | i_6\rangle 
\in (\mathcal{F}_q)^{\otimes 6}$ simply by $|i_1,\ldots, i_6\rangle$.
Then the monomial, say, $|314516\rangle$ is transformed as in 
Figure \ref{fig:cte}. (In this example $i_1,\ldots, i_6$ remain less than ten,
so they are all specified by a single digit.) 

\begin{figure}[h]
\begin{picture}(100,135)(10,0)
\put(50,120){$|314516\rangle$}

\put(7,110){$\mathcal{R}_{123}\; \swarrow$}
\put(95,110){$\searrow \,\mathcal{R}_{356}$}

\put(0,90){$|132516\rangle$} \put(100,90){$|311543\rangle$} 

\put(-7,75){$\mathcal{R}_{145} \downarrow$}
\put(115,75){$\downarrow \mathcal{R}_{246}$}

\put(0,60){$|532156\rangle$} \put(100,60){$|351147\rangle$} 

\put(-7,45){$\mathcal{R}_{246} \downarrow$}
\put(115,45){$\downarrow \mathcal{R}_{145}$}

\put(0,30){$|512354\rangle$} \put(100,30){$|151327\rangle$} 

\put(7,10){$\mathcal{R}_{356}\searrow$}
\put(95,10){$\swarrow \,\mathcal{R}_{123}$}

\put(50,0){$|515327\rangle$}
\end{picture}    
\caption{An example of tetrahedron equation 
(\ref{tecr}) for combinatorial 3D $R$.}
\label{fig:cte}
\end{figure}

\noindent
The first SW arrow  by $\mathcal{R}_{123}$ is 
due to Example \ref{ex:rq0}.
If one keeps $q$ generic and lets (\ref{tehat}) 
act on the same monomial $|314516\rangle$,
each side generates 300 monomials. 
\end{example}

\subsection{\mathversion{bold}Classical aspects and triad of 3D $R$}
\label{subsec:tropA}
In terms of $\Phi$ (\ref{rphA13}), 
the combinatorial 3D $R$  is rephrased as the following 
map on $(\Z_{\ge 0})^3$:
\begin{equation*}
\Phi|_{q=0}: (a,b,c) \mapsto (b+c-\min(a,c), \min(a,c), a+b-\min(b,c)).
\end{equation*}
The same map has been introduced in 
\cite[p451]{L}  and independence of its 
compositions corresponding to any reduced expressions of the 
longest element of $W(\mathrm{SL}_n)$ was utilized.

By regarding $a,b,c$ as indeterminates, 
this property generalizes to the birational map 
$(a,b,c) \mapsto \bigl(\frac{bc}{a+c}, \,a+c,\, \frac{ab}{a+c}\bigr)$.
The previous one is reproduced  
via the {\em ultradiscretization} (or tropical variable change)
$a b \rightarrow a+b$ and $a+b \rightarrow \min(a,b)$ 
as pointed out by \cite{YY}. See also \cite{BFZ}.
Its composition with $P_{13}$ is the map
\begin{equation}\label{birR}
{\bf R}: (c,b,a) \mapsto 
(\tilde{a},\tilde{b},\tilde{c})= \left(\frac{bc}{a+c}, \,a+c,\, \frac{ab}{a+c}\right)
\end{equation}
to be called the {\em birational} 3D $R$ in the context of 
the present paper.
It is characterized as the unique solution to the matrix equation
\begin{equation}\label{GGG}
G_i(a)G_j(b)G_i(c) = G_j(\tilde{a})G_i(\tilde{b})G_j(\tilde{c}) \qquad (|i-j|=1),
\end{equation}
where $G_i(x)= 1+ x E_{i,i+1}$ is a generator of the 
unipotent subgroup of $\mathrm{SL}_n$.
The ${\bf R}$ is birational due to ${\bf R}^{-1} = {\bf R}$.
The intertwining relation (\ref{pheqA}) 
is a quantization of (\ref{GGG}) (with $(i,j)=(1,2)$).
Note that $G_i(a)G_j(b) = G_j(b)G_i(a)$ for $|i-j|>1$ also holds 
analogously to the Coxeter relations.

Given a Weyl group element $w  \in W(\mathrm{SL}_n)$ (not necessarily longest),
assign the matrix $M=G_{i_1}(x_1)\cdots G_{i_r}(x_r)$ to a
reduced expression $w = s_{i_1}\cdots s_{i_r} $.
Then to any reduced expression 
$w = s_{j_1}\cdots s_{j_r} $ one can assign the
expression $M=G_{j_1}(\tilde{x}_1)\cdots G_{j_r}(\tilde{x}_r)$,
where $\tilde{x}_k$ is determined independently of the 
intermediate steps applying (\ref{GGG}).
This property is the source of the tetrahedron equation for 
${\bf R}$ and forms a classical (or birational) counterpart of 
the previous calculation  (\ref{2ways}).
In fact,  the uniqueness of the map 
$(a,b,c,d,e,f) \mapsto (\tilde{a}, \tilde{b}, \tilde{c}, \tilde{d}, \tilde{e}, \tilde{f})$ 
defined by 
\begin{equation}\label{Gpro}
G_1(a)G_2(b)G_3(c)G_1(d)G_2(e)G_1(f) = 
G_3(\tilde{a})G_2(\tilde{b})G_3(\tilde{c})
G_1(\tilde{d})G_2(\tilde{e})G_3(\tilde{f})
\end{equation}
proves the {\em birational tetrahedron equation}
\begin{equation}\label{bth}
{\bf R}_{356}{\bf R}_{246}{\bf R}_{145}{\bf R}_{123}
={\bf R}_{123}{\bf R}_{145}{\bf R}_{246}{\bf R}_{356},
\end{equation}
where ${\bf R}_{ijk}$ is the one acting on the $i,j,k$-th components
in an array of 6 variables.
This is a version of the so called 
functional tetrahedron equation \cite{Ka96, S98, KKS98}, 
which is known to allow a more general solution than (\ref{birR}) 
connected to the star-triangle electric circuits transformation.

We have summarized the triad of the 3D $R$'s  in Table 1 in Section \ref{sec:intro}.
The tetrahedron equations satisfied by them are given in
(\ref{tehat}), (\ref{tecr}) and (\ref{bth}).
The combinatorial one ${\mathcal R}$ shows up either at 
$q = 0$ of the quantum one or  
ultradiscretization of the birational one.
This is a quite analogous feature to 2D. 
See \cite{KO,IKT} for example. 
In the next section, we will add  
a parallel story corresponding to the third row of the table 1. 

\newpage
\section{$\mathrm{Sp}$ case}\label{sec:C}

\subsection{\mathversion{bold}Quantized algebra of functions 
$A_q(\mathrm{Sp}_{2n})$}
We define $A_q(\mathrm{Sp}_{2n})$ following \cite{RTF}, where it was denoted by 
$\mathrm{Fun(Sp}_q(n))$.
First, we introduce the structure constants 
$(R_{ij, kl})_{1\le i,j,k,l \le 2n}$ and 
$C=-C^{-1}=(C_{ij})_{1 \le i,j \le 2n}$ by
\begin{align*}
\begin{split}
&\sum_{i,j,k,l}R_{ij,kl}E_{ik}\otimes E_{jl}=
q\sum_i E_{ii}\otimes E_{ii} + \sum_{i\neq j, j'}E_{ii}\otimes E_{jj}
+q^{-1}\sum_i E_{ii}\otimes E_{i'i'}\\
&\qquad\qquad\qquad+(q-q^{-1})\sum_{i>j}E_{ij}\otimes E_{ji}
-(q-q^{-1})\sum_{i>j}\epsilon_i\epsilon_jq^{\varrho_i-\varrho_j}E_{ij}\otimes E_{i'j'},
\end{split}\\
&C_{ij}= \delta_{i,j'}\epsilon_i q^{\varrho_j},\quad i' = 2n+1-i,
\quad
\epsilon_i = 1\,(1\le i \le n),\;\;\epsilon_i = -1\,(n<i \le 2n),\\
&(\varrho_1,\ldots, \varrho_{2n}) = (n, n-1, \ldots,1,,-1,\ldots, -n+1, -n).
\end{align*}
Here the indices are summed over 
$\{1,2,\ldots, 2n\}$ under the specified conditions.
The constant $R_{ij,kl}$ is extracted as 
\begin{align*}
&\sum_{1\le i,j,m,l\le 2n}R_{ij,ml}E_{im}\otimes E_{jl}
= q\lim_{x\rightarrow \infty}x^{-2}R(x)|_{k=q^{-1}},
\end{align*}
from the quantum $R$ matrix $R(x)$ \cite{Baz,J2} for 
the vector representation of $U_q(C^{(1)}_n)$ given in \cite[eq.(3.6)]{J2}.
For example the matrix $C$ for $n=2$ reads
\begin{equation*}
C = \left(
\begin{array}{cccc}
 0 & 0 & 0 & q^{-2} \\
 0 & 0 & q^{-1} & 0 \\
 0 & -q & 0 & 0 \\
 -q^2 & 0 & 0 & 0
\end{array}
\right).
\end{equation*}

The quantized algebra of functions $A_q(\mathrm{Sp}_{2n})$
is a Hopf algebra \cite{Abe} generated by $T = (t_{ij})_{1\le i,j\le 2n}$ 
with the relations (\ref{re1}) and 
\begin{align}
&TCT^tC^{-1} = CT^tC^{-1}T = I,\;{\rm i.e. }\;
\sum_{jkl}C_{jk}C_{lm}t_{ij}t_{lk} = \sum_{jkl}C_{ij}C_{kl}t_{kj}t_{lm} = -\delta_{im}.
\label{re2}
\end{align}
The coproduct is again given by (\ref{copro}).
We omit the antipode and counit for they will not be used in this paper.

\subsection{\mathversion{bold}Representations of $A_q(\mathrm{Sp}_6)$}

In $\mathrm{SL}$ case, we first considered 
$A_q(\mathrm{SL}_3)$ to determine the 
intertwiner and then proceeded to $A_q(\mathrm{SL}_4)$ to derive the tetrahedron equation
for the purpose of exposition.
Here we shorten our presentation 
by skipping  $A_q(\mathrm{Sp}_4)$ and considering $A_q(\mathrm{Sp}_6)$ from the outset 
since the latter includes the former and presents a generic situation.

Let $\mathrm{Osc}_q$ and $\mathcal{F}_q$ be the $q$-oscillator algebra 
and the Fock space introduced in (\ref{qoa}) and (\ref{akf}).
For distinction we write 
$\mathrm{Osc}_{q^2} 
= \langle {{\bf 1}, \rm{\bf A}}^+, {\rm{\bf A}}^-, {\rm{\bf K}} \rangle$, 
which acts on $\mathcal{F}_{q^2}$.
Set 
\begin{equation}\label{qi}
q_1 = q,\quad q_2= q,\quad q_3=q^2.
\end{equation}
Consider the maps 
$\pi_i\,(i=1,2,3):\; A_q(\mathrm{Sp}_6) \rightarrow \mathrm{Osc}_{q_i}$ which send the generators 
\begin{equation*}
\begin{pmatrix}
t_{11} & t_{12} & t_{13} & t_{14} & t_{15} & t_{16}\\
t_{21} & t_{22} & t_{23} & t_{24} & t_{25} & t_{26}\\
t_{31} & t_{32} & t_{33} & t_{34} & t_{35} & t_{36}\\
t_{41} & t_{42} & t_{43} & t_{44} & t_{45} & t_{46}\\
t_{51} & t_{52} & t_{53} & t_{54} & t_{55} & t_{56}\\
t_{61} & t_{62} & t_{63} & t_{64} & t_{65} & t_{66}
\end{pmatrix}
\end{equation*}
to the following:
\begin{align}
&\pi_1: \; 
\begin{pmatrix}
\mu_1{\rm{\bf a}}^- & \alpha_1{\rm {\bf k}} & 0 & 0& 0& 0\\
\beta_1{\rm {\bf k}}& \nu_1{\rm{\bf a}}^+ & 0 & 0 & 0& 0\\
0 & 0 & \sigma_1{\bf 1} & 0& 0& 0\\
0 & 0 & 0 & \sigma_1^{-1}{\bf 1} & 0 & 0\\
0 & 0 & 0 & 0 & \nu_1^{-1}{\rm{\bf a}}^- & q\beta_1^{-1}{\rm {\bf k}} \\
0 & 0 & 0 & 0 & q\alpha_1^{-1}{\rm {\bf k}} & \mu_1^{-1}{\rm{\bf a}}^+
\end{pmatrix}\quad(\alpha_1\beta_1 = -q \mu_1\nu_1),\label{pi31}\\
&\pi_2: \; 
\begin{pmatrix}
\sigma_2{\bf 1} & 0 & 0 & 0 & 0 & 0\\
0 & \mu_2{\rm{\bf a}}^- & \alpha_2{\rm {\bf k}} & 0 & 0& 0\\
0 & \beta_2{\rm {\bf k}}& \nu_2{\rm{\bf a}}^+ & 0 & 0 & 0\\
0 & 0 & 0 & \nu_2^{-1}{\rm{\bf a}}^- & q\beta_2^{-1}{\rm {\bf k}} & 0\\
0 & 0 & 0 & q\alpha_2^{-1}{\rm {\bf k}} & \mu_2^{-1}{\rm{\bf a}}^+ & 0\\
0 & 0 & 0 & 0 & 0 & \sigma_2^{-1}{\bf 1}
\end{pmatrix}\quad(\alpha_2 \beta_2 = -q \mu_2\nu_2), \label{pi32}\\
&\pi_3: \; 
\begin{pmatrix}
\rho'{\bf 1} & 0 & 0 & 0 & 0 & 0\\
0 & \rho{\bf 1}  & 0 & 0 & 0 & 0\\
0 & 0 & \mu_3{\rm{\bf A}}^- & \alpha_3{\rm{\bf K}} & 0 & 0\\
0 & 0 & \beta_3{\rm{\bf K}} & \mu_3^{-1}{\rm{\bf A}}^+ & 0 & 0\\
0 & 0 & 0 & 0  & \rho^{-1}{\bf 1} & 0\\
0 & 0 & 0 & 0 & 0 & \rho^{\prime -1}{\bf 1}
\end{pmatrix}\quad(\alpha_3 \beta_3 = -q^2),\label{pi33}
\end{align}
where $\alpha_i, \beta_i, \mu_i\,(i=1,2,3),\, \sigma_i, \nu_i\, (i=1,2),\,
\rho$ and $\rho'$ are parameters obeying the constraints 
in the parentheses.
One can directly verify
\begin{proposition}
The maps $\pi_i\, (i=1,2,3)$ are naturally extended to the 
algebra homomorphisms. 
The resulting representations of $A_q(\mathrm{Sp}_6)$ on $\mathcal{F}_{q_i}$ are irreducible.
\end{proposition}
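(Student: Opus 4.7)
The plan is to verify the two assertions — extension to an algebra homomorphism and irreducibility of the resulting Fock representation — in that order, with the former being the bulk of the work.

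For the homomorphism, the key structural observation is that each image matrix $\pi_i(T)$ is block-sparse: it contains a \emph{primary} $2\times 2$ oscillator block at rows/columns $(i,i+1)$, a \emph{dual} block at the complementary position $(i',(i+1)')$ for $i=1,2$ (while for $i=3$ the dual position coincides with the primary, leaving a single block), and otherwise only scalar multiples of $\mathbf{1}$ on the diagonal with zeros off-diagonal. Because of this sparsity, the vast majority of the quadratic RTT identities (\ref{re1}) collapse to $0=0$ on both sides. The surviving nontrivial ones split into three families: (a) relations internal to the primary block, which coincide with the first-line RTT relations of $A_{q_i}(\mathrm{SL}_2)$ in (\ref{sl2}) with $q$ replaced by $q_i$, and hold in $\mathrm{Osc}_{q_i}$ exactly when $\alpha_i\beta_i=-q_i\mu_i\nu_i$ (adopting the convention $\nu_3:=\mu_3^{-1}$ so that $\mu_3\nu_3=1$ and the constraint becomes $\alpha_3\beta_3=-q^2$), matching the parenthesized conditions in (\ref{pi31})--(\ref{pi33}); (b) relations internal to the dual block, verified by the same check applied to the dual entries $(\nu_i^{-1}\mathbf{a}^-,\mu_i^{-1}\mathbf{a}^+,q\beta_i^{-1}\mathbf{k},q\alpha_i^{-1}\mathbf{k})$; (c) cross-relations between the two blocks, or between a block and the scalar diagonal entries, which amount to routine $q$-commutativity checks. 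The symplectic identity (\ref{re2}) is handled analogously: by sparsity the only potentially nontrivial entries of $\pi_i(T)C\pi_i(T)^tC^{-1}$ pair the primary block with its dual, and the dual block is built precisely as the ``quantum inverse up to a factor of $q$'' of the primary block, so these pairings collapse to $\delta_{im}\mathbf{1}$. The concrete algebraic input is that the quantum determinant of the primary block, $\mu_i\nu_i\mathbf{a}^-\mathbf{a}^+-q\alpha_i\beta_i\mathbf{k}^2=\mu_i\nu_i\mathbf{1}$, is exactly canceled by the $\mu_i^{-1},\nu_i^{-1}$ scalings prescribed in the dual block.

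For irreducibility, the Fock module $\mathcal{F}_{q_i}$ is already irreducible as an $\mathrm{Osc}_{q_i}$-module by a standard argument: since $q$ is generic, $\mathbf{k}$ acts on the basis $\{|m\rangle\}$ with distinct eigenvalues $q_i^m$, so any nonzero invariant subspace is the span of some subset of these basis vectors; the raising action $\mathbf{a}^+|m\rangle=|m+1\rangle$ and lowering action $\mathbf{a}^-|m\rangle=(1-q_i^{2m})|m-1\rangle$ (nonzero for $m\ge 1$ by genericity) then propagate any single $|m\rangle$ throughout the whole Fock space, forcing the subspace to equal $\mathcal{F}_{q_i}$. Since $\mathbf{a}^\pm$ and $\mathbf{k}$ appear in the image of $\pi_i$ up to nonzero scalar multiples of the free parameters, irreducibility transfers to the $A_q(\mathrm{Sp}_6)$-representation.

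The main obstacle is essentially bookkeeping: the combined system (\ref{re1})--(\ref{re2}) comprises a large number of quadratic identities, and one must partition them systematically according to the block structure to ensure no case is missed. Conceptually, however, the proof is transparent: each $\pi_i$ factors as the Soibelman projection $A_q(\mathrm{Sp}_6)\twoheadrightarrow A_{q_i}(\mathrm{SL}_2)$ associated to the $i$-th Dynkin node, composed with the Fock representation of $A_{q_i}(\mathrm{SL}_2)$ analogous to (\ref{repA})--(\ref{paraA}), and the parameter constraints displayed in parentheses in (\ref{pi31})--(\ref{pi33}) are precisely those that make this factorization well defined.
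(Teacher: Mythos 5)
Your proposal is correct and follows the same route as the paper, which offers no argument beyond the assertion that the homomorphism property and irreducibility ``can be directly verified'': you simply spell out the block-sparsity bookkeeping for (\ref{re1})--(\ref{re2}), the cancellation $\mu_i\nu_i\,{\rm{\bf a}}^-{\rm{\bf a}}^+ - q\,\alpha_i\beta_i{\rm{\bf k}}^2=\mu_i\nu_i{\bf 1}$ behind (\ref{re2}), and the standard eigenvalue-plus-ladder argument for irreducibility of $\mathcal{F}_{q_i}$, all consistent with the paper's remark that each $\pi_i$ factors through the $A_{q_i}(\mathrm{SL}_2)$ projection at node $i$.
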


The representations 
$A_q(\mathrm{Sp}_6) \rightarrow \mathrm{End}(\mathcal{F}_{q_i})$
will also be denoted by $\pi_i\, (i=1,2,3)$.

\subsection{Constraints on parameters}

The Weyl group $W(\mathrm{Sp}_6)=\langle s_1, s_2, s_3\rangle$ 
is a Coxeter system generated by simple reflections $s_1, s_2$ and $s_3$
obeying the relations 
\begin{align}\label{coxre}
s_1^2 = s_2^2= s_3^2=1,\quad s_1s_3=s_3s_1,\quad 
s_1s_2s_1 =s_2s_1s_2,\quad 
s_2s_3s_2s_3 = s_3s_2s_3s_2.
\end{align}
Thus, according to Theorem \ref{th:so} (for $\mathrm{Sp}$), 
one expects the equivalence of the
representations 
\begin{align}
\pi_1 \otimes \pi_3 &\simeq \pi_3 \otimes \pi_1,\label{equiv1}\\
\pi_1 \otimes \pi_2 \otimes \pi_1
&\simeq \pi_2 \otimes \pi_1 \otimes \pi_2,\label{equiv2}\\
\pi_2 \otimes \pi_3 \otimes \pi_2 \otimes\pi_3 
&\simeq \pi_3 \otimes \pi_2 \otimes \pi_3 \otimes\pi_2\label{equiv3}
\end{align}
under an appropriate condition on the parameters.

\begin{proposition}\label{pr:para}
(i)  Eq.  (\ref{equiv1}) holds only if
\begin{equation}\label{para13}
\sigma_1 = \pm 1,\quad \rho=\rho'.
\end{equation}

(ii)  Eq. (\ref{equiv2}) holds only if
\begin{equation}\label{para121}
\sigma_1= \sigma_2,\qquad \alpha_1\beta_1 = \alpha_2 \beta_2.
\end{equation}

(iii)  Under the condition (\ref{para13}), 
the equivalence (\ref{equiv3}) holds only if
\begin{equation}\label{para2323}
\rho=\pm 1,\quad \alpha_2\beta_2 = \pm q,
\end{equation}
where the three signs $\pm 1$ in (\ref{para13}) and (\ref{para2323})
can be chosen independently.
\end{proposition}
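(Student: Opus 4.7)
By Theorem~\ref{th:so}(i), each of the tensor-product modules on both sides of (\ref{equiv1})--(\ref{equiv3}) is irreducible, since the associated words $s_1s_3$, $s_1s_2s_1$, $s_2s_3s_2s_3$ in the simple reflections are reduced. Hence by Schur's lemma any intertwiner $\Phi$ is unique up to scalar, and the existence of a nonzero $\Phi$ translates into the consistency of the linear system $\Phi\circ\pi_A(\Delta^{(r-1)}(t_{ij})) = \pi_B(\Delta^{(r-1)}(t_{ij}))\circ\Phi$ as $t_{ij}$ ranges over the generators. The block-sparse structure of (\ref{pi31})--(\ref{pi33}) forces most summands of the iterated coproducts (\ref{copro}) to vanish under projection to either side, leaving a small number of scalar identities that must hold. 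These identities are precisely the parameter constraints sought.

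For (i), since $|1-3|\ge 2$, the natural candidate for $\Phi$ is the transposition $P$ of (\ref{Pt}), as in (\ref{tp2}). I would impose $P\circ(\pi_1\otimes\pi_3)(\Delta(t_{km})) = (\pi_3\otimes\pi_1)(\Delta(t_{km}))\circ P$ and specialize $(k,m)$. With $(k,m)=(3,4)$ only one summand of $\Delta(t_{34})$ survives on each side, namely $\pi_1(t_{33})\otimes\pi_3(t_{34}) = \sigma_1\alpha_3\,\mathbf{1}\otimes\mathbf{K}$ on the left and $\pi_3(t_{34})\otimes\pi_1(t_{44}) = \sigma_1^{-1}\alpha_3\,\mathbf{K}\otimes\mathbf{1}$ on the right, forcing $\sigma_1=\sigma_1^{-1}$, i.e.\ $\sigma_1=\pm 1$. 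With $(k,m)=(1,2)$ the analogous single-term comparison gives $\alpha_1\rho = \alpha_1\rho'$, hence $\rho=\rho'$. Together these give (\ref{para13}).

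For (ii), the Coxeter relation $s_1s_2s_1=s_2s_1s_2$ restricted to the upper $(1,2)$--$(2,3)$ block is literally the type-$A$ setting of Theorem~\ref{th:agata}, so the argument of Remark~\ref{re:sigA} forces $\sigma_1=\sigma_2$. The new ingredient in type $C$ is the dual block at positions $(4,5)$--$(5,6)$ of $\pi_1$ and $\pi_2$; testing the intertwining relation against a generator $t_{ij}$ whose coproduct $\Delta^{(2)}(t_{ij})$ couples both blocks --- for instance $t_{16}$ --- yields the additional equality $\alpha_1\beta_1=\alpha_2\beta_2$ (equivalently $\mu_1\nu_1=\mu_2\nu_2$ under the constraints in (\ref{pi31}) and (\ref{pi32})).

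For (iii) I would work under the previously derived (\ref{para13}). The $4$-fold coproduct $\Delta^{(3)}(t_{ij})$ has many potential summands, but the block structure kills all but a few. A convenient test is $f=t_{22}$: on one side, $(\pi_2\otimes\pi_3\otimes\pi_2\otimes\pi_3)(\Delta^{(3)}(t_{22}))$ collapses to $\mu_2^2\rho^2\,\mathbf{a}^-\otimes\mathbf{1}\otimes\mathbf{a}^-\otimes\mathbf{1} + \alpha_2\beta_2\mu_3\rho\,\mathbf{k}\otimes\mathbf{A}^-\otimes\mathbf{k}\otimes\mathbf{1}$, while the other side gives the mirror expression with slot positions shifted. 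Imposing the consistency of the resulting system for the matrix elements of $\Phi$ on low-lying Fock-space vectors, possibly together with one additional test generator such as $t_{25}$ to disentangle the two unknowns, leads to the quadratic relations $\rho^2=1$ and $(\alpha_2\beta_2)^2=q^2$. The three signs in (\ref{para13}) and (\ref{para2323}) arise from three essentially decoupled square-root extractions involving disjoint parameters, so they may be chosen independently. The principal obstacle is the combinatorial bookkeeping of the degree-three coproduct in (iii), where one must identify the minimal pair of test generators that cleanly separates the $\rho^2=1$ condition from the $(\alpha_2\beta_2)^2=q^2$ one and confirms that no further algebraic link between the two is imposed.
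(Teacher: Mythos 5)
Your overall strategy --- testing the intertwining relation on individual generators $t_{ij}$, letting the sparsity of (\ref{pi31})--(\ref{pi33}) kill most summands of the iterated coproduct, and reading off scalar constraints --- is exactly the paper's (its own proof of (ii) and (iii) is a one-line pointer to the same computation), and your part (i) is correct and complete: the two single-term comparisons for $(k,m)=(3,4)$ and $(1,2)$ in (\ref{tp2}) do give $\sigma_1=\sigma_1^{-1}$ and $\rho=\rho'$.

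There is, however, a concrete failure in (ii): the generator $t_{16}$ you name yields the vacuous identity $0=0$. In $\pi_{121}$ the outer slots of $\Delta^{(2)}(t_{16})=\sum t_{1l_1}\otimes t_{l_1l_2}\otimes t_{l_2 6}$ force $l_1\in\{1,2\}$ and $l_2\in\{5,6\}$, but $\pi_2(t_{l_1 l_2})=0$ for every such pair, and both sides likewise vanish for $\pi_{212}$; so no relation between $\alpha_1\beta_1$ and $\alpha_2\beta_2$ is produced. A choice that does work is $f=t_{22}$: one computes
$\pi_{212}(\Delta(t_{22}))=\mu_2^2\nu_1\,\am\!\otimes\!\ap\!\otimes\!\am+\alpha_2\beta_2\sigma_1\,\ok\!\otimes\!\ichi\!\otimes\!\ok$ and
$\pi_{121}(\Delta(t_{22}))=\nu_1^2\mu_2\,\ap\!\otimes\!\am\!\otimes\!\ap+\alpha_1\beta_1\sigma_2\,\ok\!\otimes\!\ichi\!\otimes\!\ok$,
and evaluating (\ref{pheqA}) on $|0\rangle^{\otimes 3}$ gives $\alpha_2\beta_2\sigma_1=\alpha_1\beta_1\sigma_2$, hence $\alpha_1\beta_1=\alpha_2\beta_2$ once $\sigma_1=\sigma_2$ is known.

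In (iii), your expression for $\pi_{2323}(\Delta(t_{22}))$ is correct, but that relation alone forces nothing: the two sides carry the identical coefficients $\mu_2^2\rho^2$ and $\alpha_2\beta_2\mu_3\rho$ on mirror-image slot patterns, and both annihilate the vacuum. Your auxiliary suggestion $f=t_{25}$ is the one that actually delivers $\rho=\pm 1$: each side of (\ref{pip}) collapses to a single monomial ($q\alpha_2\alpha_3\beta_2^{-1}\rho^{-1}\,\ok\!\otimes\!\OK\!\otimes\!\ok\!\otimes\!\ichi$ versus $q\alpha_2\alpha_3\beta_2^{-1}\rho\,\ichi\!\otimes\!\ok\!\otimes\!\OK\!\otimes\!\ok$), and evaluation on $|0\rangle^{\otimes 4}$ gives $\rho=\rho^{-1}$. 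The remaining constraint $(\alpha_2\beta_2)^2=q^2$ is genuinely not derived in your write-up; it cannot be seen on the vacuum and requires comparing matrix elements on excited states (morally, balancing a coproduct path through $t_{23},t_{32}$, which carries $\alpha_2\beta_2$, against one through $t_{45},t_{54}$, which carries $q^2(\alpha_2\beta_2)^{-1}$). You flag this yourself as the unfinished step, so as written part (iii) is a plan rather than a proof.
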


\begin{proof}
(i) The intertwiner for (\ref{equiv1}) is just the transposition $P$ in (\ref{Pt}).
The relation (\ref{tp2}) with $(i,j)=(1,3)$ 
directly leads to (\ref{para13}).
(ii) and 
(iii)  are derived by investigating the intertwining relations 
(\ref{pheqA}) for $f \in A_q(\mathrm{Sp}_6)$ and (\ref{pip}).
\end{proof}

In view of Proposition \ref{pr:para} and (\ref{pi31})--(\ref{pi33}), 
we set
\begin{equation}\label{para0}
\begin{split}
\pi_1: &\; \sigma_1 = \sigma,\;\; 
\alpha_1\beta_1 = -\varepsilon q,\;\; \mu_1\nu_1 = \varepsilon,\\
\pi_2: &\; \sigma_2 = \sigma,\;\; 
\alpha_2\beta_2 = -\varepsilon q,\;\; \mu_2\nu_2 = \varepsilon,\\
\pi_3: &\; \rho=\rho',\;\; \alpha_3\beta_3 = -q^2
\end{split}
\end{equation}
in the rest of the paper, where the three sign factors 
\begin{equation}\label{sign}
\varepsilon = \pm 1,\quad \sigma = \pm 1,\quad \rho = \pm 1
\end{equation} 
can be chosen independently. 
Given $(\varepsilon, \sigma, \rho) \in \{\pm1\}^3$,
each $\pi_i$ should be understood as 
the representation containing two independent parameters $\alpha_i$ and $\mu_i$:
\begin{equation}\label{piam}
\pi_i = \pi_i^{\alpha_i, \mu_i}\,:\; A_q(\mathrm{Sp}_6) \rightarrow \mathrm{End}(\mathcal{F}_{q_i})\quad
(i=1,2,3),
\end{equation}
which is defined by (\ref{pi31})--(\ref{pi33}) with (\ref{para0}).

\subsection{\mathversion{bold}Intertwiner $\Phi$ and $R$}\label{sec:R}

Let $\Phi$ be the intertwiner for (\ref{equiv2}).
It is characterized by formally the same relations as (\ref{pheqA})
and (\ref{phno}) with $f \in A_q(\mathrm{Sp}_6)$.
As in the $\mathrm{SL}$ case (\ref{rphA13}), we introduce $R=\Phi P_{13}$  
which satisfies (\ref{reRA}) for $f \in A_q(\mathrm{Sp}_6)$.
It is easy to show

\begin{theorem}
The $R=(R^{abc}_{ijk})$ is given by
\begin{align}
R^{abc}_{ijk} &=\varepsilon^j
(\sigma \mu_1)^{a-j+k}(\sigma \mu_2)^{b-a-k}
\mathscr{R}^{abc}_{ijk},\label{r1}
\end{align}
where $\mathscr{R}^{abc}_{ijk}$ is the parameter-free (except $q$) 
one specified in (\ref{rA2}).
\end{theorem}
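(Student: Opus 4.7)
The plan is to reduce the statement to Theorem \ref{th:agata} by exploiting the block structure of $\pi_1,\pi_2$ on $A_q(\mathrm{Sp}_6)$. Since $q_1=q_2=q$, both tensor products $\pi_{121}$ and $\pi_{212}$ act on $\mathcal{F}_q^{\otimes 3}$, exactly as in the $A_q(\mathrm{SL}_3)$ case. By Theorem \ref{th:so} applied to $A_q(\mathrm{Sp}_6)$ and the Coxeter relation $s_1s_2s_1=s_2s_1s_2$, these two representations are irreducible and equivalent, so Schur's lemma guarantees that $R$ exists and is unique once normalized by $R^{000}_{000}=1$. It therefore suffices to verify that the formula (\ref{r1}) satisfies enough instances of (\ref{reRA}) to pin it down.

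The key observation is that for generators $f=t_{ij}$ of $A_q(\mathrm{Sp}_6)$ with $1\le i,j\le 3$, the matrices $\pi_1(t_{ij})$ and $\pi_2(t_{ij})$ are supported in the top-left $3\times 3$ block of (\ref{pi31}) and (\ref{pi32}) and therefore agree structurally with the corresponding $A_q(\mathrm{SL}_3)$ images under (\ref{repA}), modulo two adjustments coming from (\ref{para0}): retaining the scalar $\sigma$ on the trivial $(3,3)$ entry of $\pi_1$ and the $(1,1)$ entry of $\pi_2$, and replacing $\mu_i\nu_i=1$ by $\mu_i\nu_i=\varepsilon$. Consequently, the intertwining equations (\ref{reRA}) for $f=t_{13},t_{31}$ produce the same conservation constraints $\delta_{i+j,a+b}\delta_{j+k,b+c}$, and the equations for $f=t_{32},t_{33}$ yield the same type of two-step recursion as (\ref{t32})--(\ref{t33}), but now dressed by powers of $\sigma$ and $\varepsilon$ at each iteration.

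The next step is to track these extra factors. The $\sigma$-dependence is governed by the mechanism of Remark \ref{re:sigA}: restoring $\sigma_1=\sigma_2=\sigma$ in the $A_q(\mathrm{SL}_3)$ computation merely multiplies $\mathscr{R}^{abc}_{ijk}$ by a monomial in $\sigma$. Counting how many times $\sigma$ enters the recursions from $t_{32}$ and $t_{33}$, this monomial is $\sigma^{(a-j+k)+(b-a-k)}=\sigma^{b-j}$, which combines with the $\mu_1^{a-j+k}\mu_2^{b-a-k}$ of (\ref{rA1}) to give $(\sigma\mu_1)^{a-j+k}(\sigma\mu_2)^{b-a-k}$. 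The $\varepsilon$-dependence enters through the off-diagonal relation $\alpha_i\beta_i=-\varepsilon q$ in (\ref{para0}), which rescales the contribution of the $(2,2)$ term in $\pi_{121}(\tilde\Delta(t_{11}))$ and its analogues; iterating this rescaling through the recursion introduces one factor of $\varepsilon$ per occupation number on the middle site $j$, producing $\varepsilon^j$.

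Finally, existence of an intertwiner from Theorem \ref{th:so} (for $\mathrm{Sp}_6$) guarantees that the remaining defining relations of (\ref{reRA})---including those coming from the bottom-right oscillator blocks of $\pi_1,\pi_2$ and from the symplectic relations (\ref{re2})---are automatically compatible with the formula once it has been pinned down on the subset of $f$'s above. The main obstacle is the bookkeeping in the middle step: one must check that the $\sigma$'s and $\varepsilon$'s that appear at each application of (\ref{t32}) and (\ref{t33}) combine cleanly into the stated global prefactor, independent of the number of iterations used to reach $\mathscr{R}^{abc}_{ijk}$ from $\mathscr{R}^{a'0c'}_{i'j'k'}$. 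Once this accounting is verified, uniqueness together with $R^{000}_{000}=1$ finishes the proof.
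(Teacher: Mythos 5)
Your proposal is correct and is essentially the argument the paper has in mind (the paper omits the proof of this theorem as ``easy to show''): the block structure of (\ref{pi31})--(\ref{pi32}) forces the sums in $\Delta(t_{rs})$, $\tilde\Delta(t_{rs})$ for $1\le r,s\le 3$ to collapse to the $\mathrm{SL}_3$ computation of Theorem \ref{th:agata} with the rescaled parameters of (\ref{para0}), the relations for $t_{13},t_{31},t_{32},t_{33}$ again pin down $R$ uniquely given $R^{000}_{000}=1$, and Soibelman existence for $\mathrm{Sp}_6$ covers the remaining $f$'s. Your $\sigma,\varepsilon$ bookkeeping is stated heuristically, but the clean way to close that step is simply to substitute the ansatz (\ref{r1}) into the dressed versions of (\ref{t32})--(\ref{t33}) and observe that all prefactors reduce to a common monomial precisely because $\sigma^2=\varepsilon^2=1$, after which the parameter-free recursions for $\mathscr{R}$ are recovered verbatim.
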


Thus the intertwiner $R$ is the same as the $\mathrm{SL}$ case
up to an overall factor.
It satisfies the tetrahedron equation (\ref{te1}) if one identifies
$R^{(1)}$ with (\ref{r1}) and set
$R^{(2)}=R^{(1)}|_{(\mu_1,\mu_2)\rightarrow (\mu_2,\kappa)}$
for any parameter $\kappa$.

Set $\bar{R}=R^{-1}=(\bar{R}^{abc}_{ijk})$. 
From (\ref{RAinv}) and (\ref{r1}), its matrix elements are given by 
\begin{equation}\label{rinv}
\bar{R}^{abc}_{ijk}
=\varepsilon^b(\sigma \mu_1)^{b-a-k}(\sigma \mu_2)^{a-j+k}
\mathscr{R}^{abc}_{ijk}.
\end{equation}

\subsection{\mathversion{bold}Intertwiner $\Psi$ and $K$}\label{sec:K}

Now we face the new object. 
Let
\begin{align}\label{psdef}
\Psi : 
\mathcal{F}_{q} \otimes \mathcal{F}_{q^2}\otimes \mathcal{F}_{q}\otimes \mathcal{F}_{q^2} \longrightarrow
\mathcal{F}_{q^2} \otimes \mathcal{F}_{q}\otimes \mathcal{F}_{q^2}\otimes \mathcal{F}_{q}
\end{align}
be the intertwiner for (\ref{equiv3}).
It is characterized by the following relations:
\begin{align}
&\pi_{3232}(\Delta(f))\circ \Psi = \Psi \circ \pi_{2323}(\Delta(f))
\quad (\forall f \in A_q(\mathrm{Sp}_6)),\label{pip}\\
&\Psi (|0\rangle \otimes|0\rangle \otimes|0\rangle \otimes|0\rangle) 
=|0\rangle \otimes|0\rangle \otimes|0\rangle \otimes|0\rangle, \label{pno}
\end{align}
where the latter just specifies a normalization.
We find it convenient to work with $K$ defined by
\begin{align}\label{pskp}
K = \Psi  P_{14}P_{23} : 
\; \mathcal{F}_{q^2} \otimes \mathcal{F}_{q}\otimes \mathcal{F}_{q^2}\otimes \mathcal{F}_{q}
\longrightarrow 
\mathcal{F}_{q^2} \otimes \mathcal{F}_{q}\otimes \mathcal{F}_{q^2}\otimes \mathcal{F}_{q},
\end{align}
where the composition 
$P_{14}P_{23} : x \otimes y \otimes z \otimes w \mapsto
w \otimes  z \otimes y \otimes x$ 
reverses the order of the 4-fold tensor product.
The intertwining relation (\ref{pip}) is translated into
\begin{align}
&\pi_{3232}(\Delta(f))\circ K = K \circ \pi_{3232}(\tilde{\Delta}(f))
\quad (\forall f \in A_q(\mathrm{Sp}_6)),\label{reK}
\end{align}
where $\tilde{\Delta}(f) = P_{14}P_{23} (\Delta(f))P_{14}P_{23} $, namely,
\begin{align*}
\tilde{\Delta}(t_{ij}) = \sum_{l_1,l_2,l_3} 
t_{l_3 j}\otimes t_{l_2 l_3} \otimes t_{l_1l_2} \otimes t_{i l_1}.
\end{align*}
In Theorem \ref{th:main}, 
we will see that (\ref{reK}) becomes independent of the 
the signs $(\varepsilon, \sigma, \rho)$ and the parameters
$(\alpha_i, \mu_i)$ if one switches to the 
``universal part" $\mathscr{K}$ of $K$ by (\ref{kkh}).
The resulting intertwining relations for $\mathscr{K}$ 
is listed in Appendix \ref{app:IR}.
 
Introduce the matrix elements by
\begin{align}\label{Kact}
K(|a\rangle \otimes |i\rangle \otimes |b\rangle \otimes |j\rangle) = 
\sum_{c,m,d,n} K^{c m d n}_{a \,i \,b \,j}
|c\rangle \otimes |m\rangle \otimes |d\rangle \otimes |n\rangle.
\end{align}
The normalization condition (\ref{pno}) becomes 
$K^{0000}_{0000}=1$.
Let $\mathscr{K} = (\mathscr{K}^{cmdn}_{a\, i \, b \, j})$ be the matrix 
defined via $K$ as in (\ref{kkh}).
By similar arguments to Proposition \ref{pr:Rinv} one can show
\begin{align}
&K^{-1}= K,\qquad 
\mathscr{K}^{-1} = \mathscr{K}, \label{Kinv}\\
&(q^4)_c(q^2)_m(q^4)_d(q^2)_n \,\mathscr{K}^{cmdn}_{a\, i \, b \, j}
= (q^4)_a(q^2)_i(q^4)_b(q^2)_j \,
\mathscr{K}^{a\, i \, b \, j}_{cmdn}.\label{Ksym}
\end{align}

Now we present the main formula of the paper.

\begin{theorem}\label{th:main}
The unique solution to the equation (\ref{reK}) satisfying 
$K^{0000}_{0000}=1$ has the form
\begin{align}\label{kkh}
K^{cmdn}_{a \,i \,b \,j}= \varepsilon^{m+j}
\mu_2^{2d-2b}(\rho \mu_3)^{m-i}
\mathscr{K}^{c m d n}_{a \,i \,b \,j},
\end{align}
where $\mathscr{K}^{c m d n}_{a \,i \,b \,j}$
is independent of the parameters. It is expressed as
\begin{align}
&\mathscr{K}^{c m d n}_{a \,i \,b \,j}=
\delta_{c+m+d,a+i+b}\delta_{d+n-c,b+j-a}\frac{(q^4)_a}{(q^4)_c}
\sum_{\alpha,\beta,\gamma}
\frac{(-1)^{\alpha+\gamma}}{(q^4)_{d-\beta}}q^{\phi_1}
\nonumber\\
&\qquad \times
\mathscr{K}^{a, i+b-\alpha-\beta-\gamma ,
0, j+b-\alpha-\beta-\gamma}_{c, m+d-\alpha-\beta-\gamma,0, 
n+d-\alpha-\beta-\gamma}
\left\{{b, d-\beta,  i+b-\alpha-\beta, j+b-\alpha-\beta 
\atop
\alpha, \beta, \gamma, m-\alpha, n-\alpha, 
b-\alpha-\beta, d-\beta-\gamma}\right\},
\label{bd}\\
&\phi_1 = \alpha(\alpha+2d-2\beta-1)
+(2\beta-d)(m+n+d)+\gamma(\gamma-1)-b(i+j+b), \nonumber
\end{align}
where the sum is over $\alpha, \beta, \gamma \in \Z_{\ge 0}$, which is actually finite.
The $\mathscr{K}$ in the sum is given by
\begin{align}
&\mathscr{K}^{c m 0 n}_{a \,i \,0 \,j}=
\delta_{c+m,a+i}\delta_{n-c,j-a}\sum_{\lambda}(-1)^{m+\lambda}
\frac{(q^4)_{c+\lambda}}{(q^4)_c}q^{\phi_2}
\left\{{i,j \atop \lambda, j-\lambda, m-\lambda, i-m+\lambda}\right\},\label{rest}\\
&\phi_2 = (a+c+1)(m+j-2\lambda)+m-j,\nonumber
\end{align}
where the sum is over $\lambda \in \Z_{\ge 0}$, which is actually finite.
\end{theorem}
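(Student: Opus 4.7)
The plan is to follow the strategy of Theorem~\ref{th:agata}, now promoted from the cubic to the quartic Coxeter relation $s_2s_3s_2s_3=s_3s_2s_3s_2$. Existence of a nonzero intertwiner satisfying (\ref{reK}) is guaranteed by Theorem~\ref{th:so} applied to the equivalence (\ref{equiv3}), and uniqueness up to a scalar follows from Schur's lemma together with irreducibility of the common module; the normalization $K^{0000}_{0000}=1$ then pins $K$ down completely, so it suffices to exhibit a candidate formula and verify enough cases of (\ref{reK}) to force the remaining ones by uniqueness.

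First I would isolate the parameter dependence by applying (\ref{reK}) to a small set of $f\in A_q(\mathrm{Sp}_6)$ whose images under $\pi_2$ and $\pi_3$ take particularly simple form, e.g.\ the corner generators $t_{11}, t_{66}$ and elements whose images are pure powers of $\mathbf{k}$ and $\mathbf{K}$. Tracking the accumulated powers of $\varepsilon,\sigma,\rho,\mu_2,\mu_3$ on the two sides then forces the prefactor $\varepsilon^{m+j}\mu_2^{2d-2b}(\rho\mu_3)^{m-i}$ in (\ref{kkh}) and reduces the problem to constructing a parameter-free $\mathscr{K}$. A second group of \emph{extremal} $f$'s, analogous to the role of $t_{13},t_{31}$ in the proof of Theorem~\ref{th:agata}, yields the conservation laws $\delta_{c+m+d,a+i+b}\,\delta_{d+n-c,b+j-a}$ and cuts the support of $\mathscr{K}$ down to finite slices parametrized by the two conserved quantities.

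Next I would pick two qualitatively distinct $f$'s producing recursions of two types: one shifting the ``long-mode'' indices $b,d$ attached to the $q^2$-oscillator of $\pi_3$, and one shifting the ``short-mode'' indices $i,j,m,n$ attached to the $q$-oscillator of $\pi_2$. Iterating the first type reduces every matrix element to the boundary slice $b=d=0$; on that slice the second recursion collapses, by the same kind of alternating $q$-sum that converts (\ref{t32})--(\ref{t33}) into (\ref{rA2}), to the single-sum formula (\ref{rest}). Feeding (\ref{rest}) back into the first recursion and telescoping the resulting $q$-Pochhammer products then produces the triple-sum formula (\ref{bd}) with the quadratic phase $\phi_1$ and the seven-entry $q$-binomial factor.

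The main obstacle is the $q$-hypergeometric bookkeeping needed to collapse the intermediate multiple sums into the stated closed form while preserving manifest nonnegativity of all factorial arguments; this is substantially heavier than in the $A$ case because of the four-fold tensor product, the mixed $(q,q^2)$ structure of the oscillators, and the longer Coxeter word. Once (\ref{bd}) and (\ref{rest}) are established and the prefactor (\ref{kkh}) is checked against a generating set of parameter-carrying $f$'s, the remaining intertwining relations catalogued in Appendix~\ref{app:IR} follow automatically by uniqueness of $K$, so no further direct verification is required. The detailed computations are exactly what Appendix~\ref{app:K} is set up to carry out.
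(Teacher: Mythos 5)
Your proposal follows essentially the same route as the paper's Appendix B: conservation laws from the extremal relations $\langle 25\rangle$, $\langle 52\rangle$, reduction to $b=0$ via $\langle 55\rangle$ and then to $d=0$ via $\langle 22\rangle|_{b=0}$, solution of the boundary slice $b=d=0$ by the recursions in $n$ and $j$ to get (\ref{rest}), recombination into (\ref{bd}), and the appeal to existence/uniqueness to dispense with the remaining $f$'s. The only detail you gloss over is the final use of the symmetry (\ref{chie}) to convert the intermediate expression into the form (\ref{bd}) with the factor $(q^4)_a/(q^4)_c$, but this is bookkeeping rather than a gap.
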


A proof of Theorem \ref{th:main} 
is available in Appendix \ref{app:K}. 
At $d=b=0$, the formula (\ref{bd}) reduces to (\ref{Ksym}).
Note that 
\begin{equation}\label{claw}
\mathscr{K}^{cmdn}_{a\, i\, b\, j} = 0 \;\;
\text{unless}\; 
c+m+d = a+i+b\;\text{and}\; d+n-c=b+j-a.
\end{equation}
As with $\mathscr{R}$, 
this property will also be referred as the conservation law.
We have separated the result into (\ref{bd}) and 
(\ref{rest})  as the full formula 
obtained by their composition is rather bulky.

In (\ref{rest}), the quantity $\left\{ { \cdots \atop \cdots }\right\}$ is a
product of two $q^2$-binomial coefficients, therefore 
$\mathscr{K}^{c m 0 n}_{a \,i \,0 \,j}$ is a Laurent polynomial of $q$.
On the other hand, (\ref{bd})  only tells that 
$\mathscr{K}^{cmdn}_{a\, i \, b \, j}$ is a rational function of $q$ in general.
Our second main result concerns this point and 
exhibits a remarkable feature analogous  to $\mathscr{R}$ 
mentioned in Remark \ref{re:Rq0}. 

\begin{theorem}\label{th:combk}
(i) The matrix elements $\mathscr{K}^{cmdn}_{a\, i\, b\, j}$ in 
Theorem \ref{th:main} are polynomials in 
$q$ with integer coefficients. 

(ii) Set 
\begin{equation}\label{ckdef}
\mathcal{K} = (\mathcal{K}^{cmdn}_{a\,i\,b\,j}) := \mathscr{K}|_{q=0},
\end{equation}
hence 
$\mathcal{K}^{cmdn}_{a\,i\,b\,j} = \mathscr{K}^{cmdn}_{a\,i\,b\,j}|_{q=0}$.
Then it is given explicitly as
\begin{equation}\label{combk}
\begin{split}
&\mathcal{K}^{cmdn}_{a\,i\,b\,j}
=\delta_{c+m+d,a+i+b}\delta_{d+n-c,b+j-a}\,
\delta_{a,c'}\delta_{i, m'} \delta_{b,d'} \delta_{j,n'},\\
&c' = x+c+m-n,\;\;
m' = d-x+n-\min(c, d + x),\\
&d' =  \min(c, d + x),\;\;
n' = m+(d+x-c)_+,\;\;x=(d-c+(n-m)_+)_+,
\end{split}
\end{equation}
where the symbol $(y)_+$ is defined in Remark \ref{re:Rq0}.
\end{theorem}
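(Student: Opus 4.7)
The plan is to treat the two assertions in turn, using (\ref{bd})--(\ref{rest}) as the starting point but supplementing them with recursions coming from the intertwining relations in Appendix \ref{app:IR}.

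For part (i), I would first settle the base case $b=d=0$ from (\ref{rest}). There the $\lambda$-summand is a product of two $q^2$-binomial coefficients $\{i,j;\lambda,j-\lambda,m-\lambda,i-m+\lambda\}$ times the polynomial $(q^4)_{c+\lambda}/(q^4)_c=\prod_{r=1}^{\lambda}(1-q^{4(c+r)})$ and the power $q^{\phi_2}$; the only potential trouble is that $\phi_2=(a+c+1)(m+j-2\lambda)+m-j$ can be negative for individual $\lambda$. A careful inspection of the minimum value of $\phi_2$ over the (finite) range of $\lambda$ allowed by the binomials, using the conservation law $c+m=a+i$ and $n-c=j-a$, shows that the overall Laurent polynomial actually has nonnegative powers of $q$ only, so $\mathscr{K}^{cm0n}_{ai0j}\in\Z[q]$. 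For general $b,d$ I would not try to simplify (\ref{bd}) directly --- the factors $(q^4)_a/(q^4)_c$ and $1/(q^4)_{d-\beta}$ make the polynomiality invisible from that formula --- but would instead derive a recursion from the intertwining relations analogous to (\ref{t32})--(\ref{t33}) in the $A$ case. Choosing $f$ whose image under $\pi_{3232}\Delta$ and $\pi_{3232}\tilde\Delta$ couples $(b,d)$ to $(b-1,d)$ or $(b,d-1)$ with polynomial coefficients would give a reduction to the $b=d=0$ case, and polynomiality with integer coefficients then follows by induction.

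For part (ii), once polynomiality is secured, $\mathcal{K}=\mathscr{K}|_{q=0}$ is well defined and the task is to identify it with the combinatorial formula (\ref{combk}). I would first specialize (\ref{rest}) at $q=0$: the $q^2$-binomials degenerate to indicators that their arguments are nonnegative, $(q^4)_{c+\lambda}/(q^4)_c$ degenerates to $1$, and $q^{\phi_2}$ singles out the unique $\lambda$ with $\phi_2=0$ (if any), namely $\lambda$ forced by $m+j=2\lambda$ and $m=j$, or else the term vanishes. This pins down $\mathcal{K}^{cm0n}_{ai0j}$ in closed form and is readily compared with (\ref{combk}) at $b=d=0$. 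For the general case I would take the $q=0$ specialization of the recursion used in part (i): since $\mathcal{K}$ then satisfies a first-order recurrence determined by a unique input at $b=d=0$, it suffices to verify that the right-hand side of (\ref{combk}) solves the same recurrence with the same initial condition. The involutivity $\mathcal{K}^{-1}=\mathcal{K}$ from (\ref{Kinv}) provides a useful consistency check: one verifies that the map $(a,i,b,j)\mapsto(c',m',d',n')$ defined in (\ref{combk}) is an involution on the set cut out by the conservation laws.

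The main obstacle is clearly part (ii). The formula (\ref{combk}) is piecewise, controlled by $x=(d-c+(n-m)_+)_+$ and by the nested $\min$ and $(\cdot)_+$ expressions, giving roughly four regimes according to the signs of $d-c$, $n-m$, and $d+x-c$. In each regime one has to pin down which term in the (triple) sum of (\ref{bd}) survives at $q=0$, i.e.\ which $(\alpha,\beta,\gamma)$ makes the quadratic exponent $\phi_1$ vanish while keeping all $q^2$-binomials nonzero, and then check that the resulting data match $(c',m',d',n')$. This case analysis is precisely the content that the authors relegate to Appendix \ref{app:kq0}, confirming that it is the technical bottleneck of the proof.
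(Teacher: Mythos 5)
Your base case is sound: on the support of the binomials in (\ref{rest}) the exponent $\phi_2$ is decreasing in $\lambda$ and equals $(a+c)(j-m)$ or $(a+c+2)(m-j)$ at $\lambda=\min(m,j)$, hence is nonnegative, so $\mathscr{K}^{c m 0 n}_{a\,i\,0\,j}\in\Z[q]$ term by term. The genuine gap is the pivotal step ``choose $f$ so that $(b,d)$ is reduced to $(b-1,d)$ or $(b,d-1)$ with polynomial coefficients.'' No intertwining relation in Appendix \ref{app:IR} yields such a recursion: to solve any $\langle rs\rangle$ for its leading entry one must divide either by a power of $q$ (coming from ${\rm{\bf k}}$ or ${\rm{\bf K}}$) or by a factor $1-q^{2\ell}$ (coming from ${\rm{\bf a}}^-$ or ${\rm{\bf A}}^-$); compare (\ref{eq55}) with its prefactor $q^{-i-j-1}$, and the $d$-reduction with prefactor $q^{-m-n-1}/(1-q^{4d})$. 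Indeed, if a $\Z[q]$-coefficient reduction to the manifestly polynomial $b=d=0$ case existed, iterating it would produce a manifestly polynomial formula for $\mathscr{K}^{cmdn}_{a\,i\,b\,j}$, which the authors state is an open problem; they stress that claim (i) involves nontrivial cancellations. The paper's proof supplies exactly the idea you are missing: it establishes $\mathscr{K}^{cmdn}_{a\,i\,b\,j}\in\Z[q,q^{-1}]$ by one chain of reductions whose coefficients are Laurent polynomials (reduce $b$, then $j,i,a$), establishes separately $\mathscr{K}^{cmdn}_{a\,i\,b\,j}\in A$, the ring of rational functions regular at $q=0$, by a different chain whose coefficients lie in $A$ (reduce $\min(i,j)$, then the remaining one of $i,j$, then $m$, $a$, $d$, $b$), and concludes from $\Z[q,q^{-1}]\cap A=\Z[q]$. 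A single-chain induction does not go through.

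Part (ii) inherits the problem, since you anchor it on the $q=0$ specialization of the same unavailable recursion; and the alternative you sketch, namely deciding which $(\alpha,\beta,\gamma)$ survives in (\ref{bd}) at $q=0$, is not well posed term by term because of the prefactors $(q^4)_a/(q^4)_c$ and $1/(q^4)_{d-\beta}$ and the cancellations just mentioned. (A smaller slip: $\phi_2=0$ in (\ref{rest}) does not force $m=j$; e.g.\ $a=c=0$ and $\lambda=m<j$ gives $\phi_2=0$.) The paper instead extracts $q=0$ recurrences directly from $\langle 55\rangle$, $\langle 33\rangle$ and their relatives (legitimate once polynomiality is known), checks that the piecewise-linear map (\ref{combk}) satisfies them, reduces to the sixteen boundary cases $\min(a,b,c,d)=\min(i,j,m,n)=0$, halves them to eight using the transpose symmetry (\ref{Ksym}) together with the involutivity of (\ref{combk}) --- which you correctly derive from ${\bf K}={\bf K}^{-1}$ --- and settles each case by further $q=0$ recursions. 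So your outline for (ii) has roughly the right shape, but both parts stand or fall on the integrality argument you have not supplied.
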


A proof of Theorem \ref{th:combk} is outlined in Appendix \ref{app:kq0}.
We note that $\phi_1$ and $\phi_2$ in (\ref{bd}) and (\ref{rest})
can become negative in general, 
so the claim (i) implies nontrivial cancellations.
It is an interesting problem to 
construct an explicit formula of $\mathscr{K}^{cmdn}_{a\,i\,b\,j}$
in which its polynomiality is manifest.
From (\ref{bd}) and (\ref{rest}) the claim (i) can be refined to 
$\mathscr{K}^{cmdn}_{a\, i\, b\, j}
\in q^\eta \Z[q^2]$, where $\eta=0,1$ is specified by 
$\eta \equiv ij+mn\mod 2$.

From the claim (i) and (\ref{Kinv}) it follows that  
$\mathcal{K}= \mathcal{K}^{-1}$.
Thus replacing $K$ with $\mathcal{K}$ in (\ref{Kact}) 
defines a bijection.
Put in another word, $\mathcal{K}: (c,m,d,n) \mapsto (c',m',d',n')$
is a bijection on each finite set specified by the values of 
conserved quantities
$\{(c,m,d,n) \in (\Z_{\ge 0})^4\,| 
c+m+d=\text{const}, d+n-c=\text{const}\}$.
In fact the property $c',m',d',n' \in \Z_{\ge 0}$ and 
the conservation law 
$c+m+d = c'+m'+d',\, d+n-c = d'+n'-c'$ 
can easily be confirmed.
We call $\mathcal{K}$ {\em combinatorial} 3D $K$.
See \cite{KOY} for an analogous object in 2D and its application. 
We shall see another origin of the piecewise linear formula
(\ref{combk}) in Section \ref{subsec:b3dk}.

\begin{example}\label{ex:K}
The following is the list of all the nonzero $\mathscr{K}^{cmdn}_{2\,1\,1\,0}$.
\begin{align*}
\mathscr{K}^{1300}_{2110}&=q^8 (1 - q^8),\\
\mathscr{K}^{2110}_{2110} &=-q^4 (1 - q^8 + q^{14}),\\
\mathscr{K}^{2201}_{2110} &=-q^6 (1 + q^2) (1 - q^2 + q^4 - q^6 - q^{10}),\\
\mathscr{K}^{3011}_{2110} &=1 - q^8 + q^{14},\\
\mathscr{K}^{3102}_{2110} &=-q^{10} (1 - q + q^2) (1 + q + q^2),\\
\mathscr{K}^{4003}_{2110}&=q^4.
\end{align*}
Thus $\mathcal{K}^{cmdn}_{2\,1\,1\,0} 
= \delta_{c,3}\delta_{m,0}\delta_{d,1}\delta_{n,1}$ in agreement with 
(\ref{combk}).
\end{example}

\subsection{\mathversion{bold}Relation involving $R$ and $K$}

Let $w_0 \in W(\mathrm{Sp}_6)$ be the longest element of the Weyl group
and consider the two reduced expressions
\begin{equation}\label{red9}
w_0 = s_1s_2s_3s_2s_1s_2s_3s_2s_3 
= s_3s_2s_3s_2s_1s_2s_3s_2s_1,
\end{equation} 
where the order of the simple reflections are opposite in the two sides.
According to Theorem \ref{th:so}  for $\mathrm{Sp}$ case,
we have the equivalence of the 
two representations of $A_q(\mathrm{Sp}_6)$:
\begin{equation}\label{pi9}
\pi_{123212323} \simeq \pi_{323212321}.
\end{equation}

Let $P_{ij}, \Phi_{ijk}$ and $\Psi_{ijkl}$ 
be the transposition $P$ (\ref{Pt}), 
the intertwiner $\Phi$  in Section \ref{sec:R} 
and the intertwiner $\Psi$ (\ref{psdef})
that act on the tensor components specified by the indices.
As in Section \ref{ss:tea},
one can construct two 
intertwiners for (\ref{pi9}) by consulting the Coxeter relations (\ref{coxre}).
\begin{alignat}{3}
12321\underline{2323}\quad & \Psi_{6789} \qquad\qquad 
&123\underline{212}323\quad & \Phi^{-1}_{456} \nonumber\\
1232\underline{13}232 \quad & P_{56} \qquad\qquad 
&12\underline{31}2\underline{13}23\quad & P_{34}P_{67} \nonumber\\
1\underline{2323}1232 \quad & \Psi_{2345} \qquad\qquad 
&\underline{121}323123\quad & \Phi_{123} \nonumber\\
1323\underline{212}32 \quad & \Phi^{-1}_{567} \qquad\qquad 
&21\underline{2323}123\quad & \Psi_{3456} \nonumber\\
\underline{13}2\underline{31}2\underline{13}2 \quad & P_{12}P_{45}P_{78} 
\qquad\qquad &21323\underline{212}3\quad & \Phi^{-1}_{678} \nonumber\\
3\underline{121}32312 \quad & \Phi_{234} \qquad\qquad 
&2\underline{13}2\underline{31}2\underline{13}\quad & P_{23}P_{56}P_{89}\nonumber\\
321\underline{2323}12 \quad & \Psi_{4567} \qquad\qquad 
&23\underline{121}3231\quad & \Phi_{345}\nonumber\\
321323\underline{212} \quad & \Phi^{-1}_{789} \qquad\qquad 
&2321\underline{2323}1\quad & \Psi_{5678}\nonumber\\
32\underline{13}2\underline{31}21 \quad & P_{34}P_{67} \qquad\qquad 
&232\underline{13}2321\quad & P_{45}\nonumber\\
323\underline{121}321 \quad & \Phi_{456}  \qquad\qquad 
&\underline{2323}12321\quad & \Psi_{1234}\nonumber\\
323212321 \quad &  \qquad\qquad &323212321\quad & \nonumber
\end{alignat}
The underlines are assigned in the same manner as in (\ref{2ways}).
Thus we get 
\begin{equation}\label{phps}
\begin{split}
&\Phi_{456}P_{34}P_{67}\Phi^{-1}_{789}
\Psi_{4567}\Phi_{234}P_{12}P_{45}P_{78}
\Phi^{-1}_{567}\Psi_{2345}P_{56}\Psi_{6789}\\
=& \Psi_{1234}P_{45}\Psi_{5678}\Phi_{345}P_{23}P_{56}P_{89}
\Phi^{-1}_{678}\Psi_{3456}\Phi_{123}P_{34}P_{67}\Phi^{-1}_{456}.
\end{split}
\end{equation}
Substituting ($\bar{R}$ is a shorthand for $R^{-1}$ 
as in Section \ref{ss:tea})
\begin{equation*}
\Phi_{ijk} = R_{ijk} P_{jk},\quad \Phi^{-1}_{ijk} = P_{jk} \bar{R}_{ijk},\quad
\Psi_{ijkl} = K_{ijkl}P_{il}P_{jk}
\end{equation*}
into (\ref{phps}) and sending all the $P_{ij}$'s through to the right, 
we find that the products of $P_{ij}$'s
correspond to the longest element in the symmetric group $\mathfrak{S}_9$ 
on the both sides. Canceling them out, we obtain
\begin{equation}\label{rkeq1}
R_{456}\bar{R}_{984}K_{3579}R_{269}\bar{R}_{852}K_{1678}K_{1234}
=K_{1234}K_{1678}R_{258}\bar{R}_{962}K_{3579}R_{489}\bar{R}_{654}
\end{equation}
for the operators acting on $\pi_{323212321}$.
Namely, (\ref{rkeq1}) is an equality in  
$\mathrm{End}(\mathcal{F}_{q^2}\otimes \mathcal{F}_{q}\otimes 
\mathcal{F}_{q^2}\otimes \mathcal{F}_{q}\otimes \mathcal{F}_{q}\otimes \mathcal{F}_{q}\otimes 
\mathcal{F}_{q^2}\otimes \mathcal{F}_{q}\otimes \mathcal{F}_{q})$.
We call (\ref{rkeq1}) (and (\ref{rkeq2}) given below as well) 
the 3D reflection equation.

There are 42 reduced expressions for the longest element $w_0$.
We can ask if one obtains other relations than \eqref{rkeq1} from different
reduced expressions. The answer is negative as in the tetrahedron equation
\eqref{te1}. Namely, all the other relations reduce to \eqref{rkeq1},
since all 42
reduced expressions of $w_0$ essentially appear in either course from
123212323 to 323212321 to get \eqref{phps}.

Let us discuss the physical interpretation of the construction here.
As mentioned in the introduction, the relevant system is 
the factorized scattering of strings in 3D \cite{Zam80,Zam81}
under the presence of boundary reflections \cite{IK}.
Our 3D reflection equation (\ref{rkeq1})
is a constant version of the tetrahedron reflection equation \cite{IK}.
This can be seen by relabeling the space indices in (\ref{rkeq1})
as $(1,2,3,4,5,6,7,8,9)\rightarrow 
(\overline{x}, 6,\overline{y},5,4,3,\overline{z},2,1)$.
The resulting equation essentially coincides with \cite[eq.(17)]{IK}.
The spaces $2,4,5,6,8,9$ in (\ref{rkeq1}) correspond to $\mathcal{F}_q$ 
and are attached with strings.
The other spaces labeled by $1,3$ and $7$ are $\mathcal{F}_{q^2}$, 
hence touched upon only by $K$'s.
They represent a physical degree of freedom living on 
the boundary which are subject to quantum transitions 
when reflecting back strings.

The fundamental representations $\pi_1, \pi_2, \pi_3$ of 
$A_q(\mathrm{Sp}_6)$ in this paper correspond to
the generators $A_1,\ldots, A_6, B^{\overline{u}}$ in the 3D 
analogue of the Zamolodchikov algebra \cite{IK}  as 
\begin{equation*}
\pi_1 \rightarrow A_\bullet,\;\;\pi_2 \rightarrow A_\bullet,\;\;
\pi_3 \rightarrow B^\bullet.
\end{equation*}
In fact, consider for instance the LHS of (\ref{pi9}).
We let the ninth order tensor product correspond to 
a word of $A_i$'s and $B^{\overline{u}}$ by the above rule as
\begin{equation*}
\pi_1 \otimes \pi_2 \otimes  \pi_3\otimes 
\pi_2\otimes \pi_1\otimes \pi_2\otimes 
\pi_3\otimes \pi_2\otimes \pi_3 \rightarrow 
A_1A_2B^{\overline{z}}A_3A_4A_5B^{\overline{y}}A_6B^{\overline{x}},
\end{equation*}
where indices of $A$ ($B$) are assigned consecutively (inverse alphabetically).
The RHS reproduces the element in \cite[(c) p434]{IK}.
Similarly the intertwining relations of 
$R$ (\ref{reRA}) and $K$ (\ref{reK})
correspond to the 3D Zamolodchikov algebra \cite[eq.(4)]{IK} and 
the boundary reflection algebra \cite[eq.(9)]{IK}, respectively.
The reordering process of 
$A_1A_2B^{\overline{z}}A_3A_4A_5B^{\overline{y}}A_6B^{\overline{x}}$
is translated into the composition of intertwiners $K$ and $R$
as demonstrated after (\ref{pi9}).
Physically $A_i$ represents a straight string moving in 3D or the world sheet 
generated by it, and $B^{\overline{u}}$ denotes a reflection by the boundary.
The three $A_i$'s among the six correspond to 
the strings heading toward the boundary
and the other three to those going off the boundary after the reflections 
represented by the three $B^{\overline{u}}$'s.

Now we turn to a combinatorial aspect of the 3D reflection equation (\ref{rkeq1}) .
When all the parameters are removed, it becomes
\begin{equation}\label{rkeq2}
\mathscr{R}_{456}\mathscr{R}_{489}
\mathscr{K}_{3579}\mathscr{R}_{269}\mathscr{R}_{258}
\mathscr{K}_{1678}\mathscr{K}_{1234}
=\mathscr{K}_{1234}
\mathscr{K}_{1678}
\mathscr{R}_{258}\mathscr{R}_{269}
\mathscr{K}_{3579}\mathscr{R}_{489}\mathscr{R}_{456},
\end{equation}
where we have applied $\mathscr{R}^{-1} = \mathscr{R}$ (\ref{RAinv})
and $\mathscr{R}_{ijk}= \mathscr{R}_{kji}$ 
due to (\ref{swapR}).
This is an identity between polynomials of $q$.
Setting $q=0$ further and invoking  
Theorem \ref{th:combk} and Remark \ref{re:Rq0}, we find that 
the combinatorial 3D $R$ and the combinatorial 3D $K$ still satisfy 
\begin{equation}\label{rkeq3}
\mathcal{R}_{456}\mathcal{R}_{489}
\mathcal{K}_{3579}\mathcal{R}_{269}\mathcal{R}_{258}
\mathcal{K}_{1678}\mathcal{K}_{1234}
=\mathcal{K}_{1234}
\mathcal{K}_{1678}
\mathcal{R}_{258}\mathcal{R}_{269}
\mathcal{K}_{3579}\mathcal{R}_{489}\mathcal{R}_{456}.
\end{equation}
This is an identity of bijections between finite subsets of 
$(\Z_{\ge 0})^9$, which may be called 
the combinatorial 3D reflection equation.
It is an interesting problem whether it leads  
to a 3D generalization of the result like \cite{KOY}.

\begin{example}\label{ex:cref}
To demonstrate (\ref{rkeq3}),
we employ the same convention as in 
Example \ref{ex:cr}.
Then the monomial, say, $|211034212\rangle$ is 
transformed as in Figure \ref{fig:cref}.
The first SW arrow by $\mathcal{K}_{1234}$ is due to 
Example \ref{ex:K}.
 
\begin{figure}[h]
\begin{picture}(100,225)(10,-90)
\put(50,120){$|211034212\rangle$}

\put(7,110){$\mathcal{K}_{1234}\; \swarrow$}
\put(105,110){$\searrow \,\mathcal{R}_{456}$}

\put(0,90){$|301134212\rangle$} \put(110,90){$|211307212\rangle$} 

\put(-7,75){$\mathcal{K}_{1678} \downarrow$}
\put(130,75){$\downarrow \mathcal{R}_{489}$}

\put(0,60){$|601131242\rangle$} \put(110,60){$|211207221\rangle$} 

\put(-1,45){$\mathcal{R}_{258} \downarrow$}
\put(130,45){$\downarrow \mathcal{K}_{3579}$}

\put(0,30){$|631101272\rangle$} \put(110,30){$|212207123\rangle$} 

\put(-1,15){$\mathcal{R}_{269} \downarrow$}
\put(130,15){$\downarrow \mathcal{R}_{269}$}

\put(0,0){$|621102271\rangle$} \put(110,0){$|272201129\rangle$} 

\put(-7,-15){$\mathcal{K}_{3579} \downarrow$}
\put(130,-15){$\downarrow \mathcal{R}_{258}$}

\put(0,-30){$|622102173\rangle$} \put(110,-30){$|252221109\rangle$} 

\put(-1,-45){$\mathcal{R}_{489} \downarrow$}
\put(130,-45){$\downarrow \mathcal{K}_{1678}$}

\put(0,-60){$|622702119\rangle$} \put(110,-60){$|352220119\rangle$} 

\put(7,-75){$\mathcal{R}_{456}\searrow$}
\put(105,-75){$\swarrow \,\mathcal{K}_{1234}$}

\put(50,-90){$|622520119\rangle$}
\end{picture}    
\caption{An example of the 
3D reflection equation (\ref{rkeq3}) 
for the combinatorial 3D $R$ and $K$.}
\label{fig:cref}
\end{figure}
\end{example}

\subsection{\mathversion{bold}Birational 3D $K$}\label{subsec:b3dk}
Introduce the upper triangular matrices
\begin{alignat*}{2}
X_1(z) &= \begin{pmatrix}
1 & z & 0 & 0\\
   & 1 & 0 & 0\\
   &    & 1 & -z\\
   &    &    & 1
\end{pmatrix},\qquad &
X_2(z) &= \begin{pmatrix}
1 & 0 & 0 & 0\\
   & 1 & 2z & 0\\
   &    & 1 & 0\\
   &    &    & 1
\end{pmatrix},\\
Y_1(z) &= \begin{pmatrix}
1 & z & 0 & 0 & 0\\
   & 1 & 0 & 0 & 0\\
   &    & 1 & 0 & 0\\
   &    &    & 1 & -z\\
   &    &    &    & 1
\end{pmatrix},&
Y_2(z) &= \begin{pmatrix}
1 & 0 & 0 & 0 & 0\\
   & 1 & z & -z^2/2 & 0\\
   &    & 1 & -z & 0\\
   &    &    & 1 & 0\\
   &    &    &    & 1
\end{pmatrix},
\end{alignat*}
where blanks signify  $0$ and $z$ is a parameter.
The matrix $X_i(z)$ is a generator of the unipotent subgroup of 
$\mathrm{Sp}_4$.
Similarly  $Y_i(z)$ is the one for $\mathrm{SO}_5$.
They are associated with two realizations of the Lie group
corresponding to $\mathrm{Sp}_4 \simeq \mathrm{SO}_5$.
The matrices $X_2(z)$ and $Y_1(z)$ correspond 
to the long simple root, so the role of indices $1$ and $2$ are 
interchanged in the two pictures.
They satisfy $X_i(z)^{-1} = X_i(-z)$ and 
$Y_i(z)^{-1} = Y_i(-z)$.
By a direct calculation one can establish
\begin{theorem}\label{bcXY}
Given indeterminates $(a,b,c,d)$, 
each of the two matrix equations 
\begin{align}
X_2(a)X_1(b)X_2(c)X_1(d) &= 
X_1(\tilde{a})X_2(\tilde{b})X_1(\tilde{c})X_2(\tilde{d}),
\label{Xeq}\\
Y_1(a)Y_2(b)Y_1(c)Y_2(d) &= 
Y_2(\tilde{a})Y_1(\tilde{b})Y_2(\tilde{c})Y_1(\tilde{d})
\label{Yeq}
\end{align}
for $(\tilde{a},\tilde{b},\tilde{c},\tilde{d})$ 
has the unique solution
\begin{equation}\label{abcdt}
\begin{split}
&\tilde{a} = \frac{bcd}{A},\quad
\tilde{b} = \frac{A^2}{B},\quad
\tilde{c} = \frac{B}{A},\quad
\tilde{d} = \frac{ab^2c}{B},\\
&A = ab + ad + cd,\quad
B = ab^2+2abd + ad^2+cd^2.
\end{split}
\end{equation}
\end{theorem}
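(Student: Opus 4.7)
The plan is direct entry-wise computation. Both sides of \eqref{Xeq} are unipotent upper-triangular $4\times 4$ matrices, so the identity is equivalent to matching the six strictly-upper-triangular entries. I would expand $M := X_2(a)X_1(b)X_2(c)X_1(d)$ and $\tilde M := X_1(\tilde a)X_2(\tilde b)X_1(\tilde c)X_2(\tilde d)$ by plain multiplication, and then extract a convenient rank-four subsystem from the six polynomial equations in the four unknowns $\tilde a, \tilde b, \tilde c, \tilde d$. A natural choice is to use the three diagonal-adjacent entries $(1,2)$, $(2,3)$, $(3,4)$, which are linear in the tilded variables, together with the quadratic entry $(1,3)$. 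Eliminating step by step pins down $(\tilde a,\tilde b,\tilde c,\tilde d)$ as rational functions of $(a,b,c,d)$; the expressions $A$ and $B$ in \eqref{abcdt} arise naturally as the common denominators encountered during the elimination, and uniqueness is immediate from this procedure.

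The substantive content of the theorem is then the consistency of the remaining two entries $(2,4)$ and $(1,4)$ on both sides, which is an over-determination that must be verified by substituting \eqref{abcdt} back and simplifying. Conceptually this consistency is guaranteed by the fact that \eqref{Xeq} is the classical (birational) shadow of the quartic Coxeter relation $s_1s_2s_1s_2 = s_2s_1s_2s_1$ in $W(\mathrm{Sp}_4)$, and that the longest element of this rank-two Weyl group has exactly two reduced expressions linked by precisely the braid move under consideration.

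For \eqref{Yeq} there are two routes. The first is to repeat the same computation with the $5\times 5$ matrices $Y_i(z)$, where one now has ten off-diagonal entries but still only four unknowns; the four-entry subsystem used to solve yields the very same formulas \eqref{abcdt}, and the remaining six entries are then checked for consistency. The more economical route is to invoke the exceptional isomorphism $\mathrm{Sp}_4 \simeq \mathrm{SO}_5$: as noted in the excerpt, $X_2$ and $Y_1$ both correspond to the long simple root while $X_1$ and $Y_2$ correspond to the short one, so under the label swap $1\leftrightarrow 2$ the identity \eqref{Yeq} is just a transcription of \eqref{Xeq}, and \eqref{abcdt} carries over verbatim.

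The main obstacle is the sheer bulk of the intermediate algebra rather than any conceptual subtlety: verifying the consistency of the over-determined entry-matching system is lengthy but routine. In practice I would handle the four entries that determine $(\tilde a,\tilde b,\tilde c,\tilde d)$ first, derive \eqref{abcdt}, and then check the remaining entries by direct symbolic simplification using the explicit forms of $A$ and $B$; the absence of any surprise at this stage is itself a reflection of the underlying Coxeter-theoretic origin of the identity.
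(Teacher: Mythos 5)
Your proposal is essentially the paper's own proof: the paper disposes of Theorem \ref{bcXY} with the single phrase ``by a direct calculation one can establish,'' and entry-wise expansion followed by solving a subsystem and checking the remaining over-determined entries is exactly that calculation. One concrete correction to your choice of subsystem: the entries $(1,2)$ and $(3,4)$ of $X_1(\tilde a)X_2(\tilde b)X_1(\tilde c)X_2(\tilde d)$ are both $\pm(\tilde a+\tilde c)$, so the four equations from $(1,2),(2,3),(3,4),(1,3)$ reduce to three independent conditions and leave a one-parameter family of solutions; you must bring in $(2,4)$, which reads $\tilde b\tilde c=ab+ad+cd=A$, or $(1,4)$, which reads $\tilde a\tilde b\tilde c=bcd$, to pin down $(\tilde a,\tilde b,\tilde c,\tilde d)$ uniquely (these two together give $\tilde a=bcd/A$ at once, after which $\tilde c=b+d-\tilde a=B/A$, $\tilde b=A/\tilde c$, $\tilde d=a+c-\tilde b$, and only the $(1,3)$ entry remains to be verified). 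With that repair the argument goes through, and your two routes for \eqref{Yeq} (repeat the $5\times5$ computation, or transcribe via $\mathrm{Sp}_4\simeq\mathrm{SO}_5$ with the labels $1\leftrightarrow2$ swapped) are both sound, the first being what the paper implicitly does.
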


Define a map 
\begin{equation}\label{bKdef}
{\bf K}: (d,c,b,a) \mapsto (\tilde{a},\tilde{b},\tilde{c},\tilde{d})
\end{equation}
in terms of (\ref{abcdt}),
where the reason for not $(a,b,c,d)$ but $(d,c,b,a)$ is 
to fit $K = \Psi P_{14}P_{23}$ in (\ref{pskp}).
It is easy to see ${\bf K}^{-1} = {\bf K}$, hence ${\bf K}$ is birational.
We call ${\bf K}$ the {\em birational} 3D $K$.
The intertwining relation (\ref{pip}) is a quantization 
of (\ref{Xeq}).
(For $\mathrm{Sp}_4$, $\pi_{3232}$ therein 
should read $\pi_{2121}$.)
It satisfies the formally identical equation with (\ref{rkeq2}):
\begin{equation}\label{rkeq4}
{\bf R}_{456}{\bf R}_{489}
{\bf K}_{3579}{\bf R}_{269}{\bf R}_{258}
{\bf K}_{1678}{\bf K}_{1234}
={\bf K}_{1234}
{\bf K}_{1678}
{\bf R}_{258}{\bf R}_{269}
{\bf K}_{3579}{\bf R}_{489}{\bf R}_{456}.
\end{equation}
This is an equality of the birational maps on 9 variables which can be
directly checked.
Alternatively it can also be derived following the argument
similar to (\ref{Gpro})--(\ref{bth}).

The birational 3D $K$ tends to the combinatorial 3D $K$ via 
the ultradiscretization.
In fact by the tropical variable change   
$\alpha\beta \rightarrow \alpha+ \beta$
and $\alpha+\beta \rightarrow \min(\alpha,\beta)$,
the formulas (\ref{abcdt})--(\ref{bKdef}) 
exactly reproduce the piecewise-linear map 
${\mathcal K}: (c,m,d,n) \mapsto (c',m',d',n')$ 
in (\ref{combk}).
Thus we have realized the triad of 3D $K$'s in Table 1 
all satisfying the 3D reflection equations
(\ref{rkeq2}), (\ref{rkeq3}) and (\ref{rkeq4}).

\section{Type $B$ and $F_4$ cases}\label{sec:BF}

From the construction in the preceding sections for 
algebras of type $A$ and $C$,
it is quite possible to infer the situation in type 
$B$ and $F_4$.
In this section we discuss them 
without a proof or concrete realizations
of the representations.
We shall only be concerned with
the parameter-free part of the intertwiners like 
$\mathscr{R}$ and $\mathscr{K}$ which are polynomials of $q$ only.
The minimal and generic situation for the $B$ series 
takes place for $B_3$.
We list the relevant 
Dynkin diagrams in Figure \ref{fig:dynkin}.

\begin{figure}[h]
\begin{picture}(300,23)(0,3)

\put(-2,19){1}\put(28,19){2}\put(58,19){3}
\put(0,10){\circle{8}}\put(30,10){\circle{8}}\put(60,10){\circle{8}}
\put(4,10){\line(1,0){22}}
\put(34,10){\line(1,1){4}}\put(34,10){\line(1,-1){4}}
\put(56,8.5){\line(-1,0){20.5}}\put(56,11.5){\line(-1,0){20.5}}

\put(120,0){
\put(-2,19){1}\put(28,19){2}\put(58,19){3}
\put(0,10){\circle{8}}\put(30,10){\circle{8}}\put(60,10){\circle{8}}
\put(4,10){\line(1,0){22}}
\put(56,10){\line(-1,1){4}}\put(56,10){\line(-1,-1){4}}
\put(34,8.5){\line(1,0){20.5}}\put(34,11.5){\line(1,0){20.5}}}

\put(240,0){
\put(-2,19){1}\put(28,19){2}\put(58,19){3}\put(88,19){4}
\put(0,10){\circle{8}}\put(30,10){\circle{8}}\put(60,10){\circle{8}}\put(90,10){\circle{8}}
\put(4,10){\line(1,0){22}}\put(64,10){\line(1,0){22}}
\put(34,10){\line(1,1){4}}\put(34,10){\line(1,-1){4}}
\put(56,8.5){\line(-1,0){20.5}}\put(56,11.5){\line(-1,0){20.5}}
}

\end{picture}
\caption{Dynkin diagrams of $C_3$ (left), $B_3$ (center) and $F_4$ (right).
Enumeration of vertices for $F_4$ agrees with 
\cite{Bourbaki} which is opposite to \cite{Kac}.  
}
\label{fig:dynkin}
\end{figure}

\smallskip
{\em Type $B$ case}.
The Weyl group $W(B_3)$ is isomorphic to $W(C_3)$.
Thus we should have the equivalence (\ref{equiv1})--(\ref{equiv3})
for the irreducible representations 
$\pi_i = \pi^B_i (i=1,2,3)$ of the quantized algebra of
functions $A_q(\mathrm{SO}_7)$,
where  
$\mathrm{SO}_7$ is the Lie group corresponding to $B_3$.
Note that locally in the Dynkin diagrams,
the role of the indices $2$ and $3$ are interchanged 
between $B_3$ and $C_3$.
From this fact and (\ref{reK}),
the intertwiner 
$\mathscr{K}^B$ satisfying 
$(\pi^B_{3232} \Delta)\circ \mathscr{K}^B = 
\mathscr{K}^B \circ (\pi^B_{3232} \tilde{\Delta})$
should be obtained from the type $C$ case $\mathscr{K}$ as
\begin{equation}\label{KB}
\mathscr{K}^B_{1234} = P_{14}P_{23}\mathscr{K}_{1234}P_{23}P_{14}
= \mathscr{K}_{4321} 
\in 
\mathrm{End}(\mathcal{F}_q \otimes \mathcal{F}_{q^2} 
\otimes \mathcal{F}_q \otimes \mathcal{F}_{q^2}).
\end{equation}
Here the second equality just means that its RHS  
is the standard notation for the middle object
acting on the tensor components labeled with $1,2,3$ and $4$.

Now we proceed to the intertwiner $\mathscr{R}^B$ satisfying 
$(\pi^B_{212} \Delta)\circ \mathscr{R}^B = 
 \mathscr{R}^B\circ (\pi^B_{121} \tilde{\Delta})$.
Since the segment of the Dynkin diagrams between the vertices 
$1$ and $2$ are the same for $B_3$ and $C_3$, we expect that 
$\mathscr{R}^B$ is obtained 
from the type $C$ case (hence $A$ case) $\mathscr{R}$ as
\begin{equation}\label{RB}
\mathscr{R}^B = \mathscr{S}:=\mathscr{R}|_{q\rightarrow q^2}
\in 
\mathrm{End}(\mathcal{F}_{q^2}\otimes \mathcal{F}_{q^2}
\otimes \mathcal{F}_{q^2}).
\end{equation}
Here the replacement $q \rightarrow q^2$ reflects the 
squared length of the simple roots 
attached to the vertices $1$ and $2$ compared with $3$.
It is an opposite of (\ref{qi}).

To summarize so far, we conjecture that 
the intertwiners $\mathscr{K}^B$ and $\mathscr{R}^B$ for 
$A_q(\mathrm{SO}_7)$ are obtained from the corresponding objects 
in $A_q(\mathrm{Sp}_6)$ by the simple prescriptions (\ref{KB}) and (\ref{RB}).

Now we consider the 3D reflection equations.
Due to $W(B_3) \simeq W(C_3)$, 
the intertwiners $\mathscr{K}^B$ and $\mathscr{R}^B$
should fulfill exactly the same relation as (\ref{rkeq2}).
In other words, (\ref{rkeq2}) should survive 
under the replacement
$(\mathscr{R}, \mathscr{K}) \rightarrow 
(\mathscr{R}^B, \mathscr{K}^B)$.
Thus we conjecture that 
\begin{equation}\label{b3eq}
\mathscr{S}_{4 5 6} \mathscr{S}_{4 8 9} \mathscr{K}_{9 7 5 3} \mathscr{S}_{2 6 9} 
  \mathscr{S}_{2 5 8} \mathscr{K}_{8 7 6 1} \mathscr{K}_{4 3 2 1}
=
\mathscr{K}_{4 3 2 1} \mathscr{K}_{8 7 6 1} \mathscr{S}_{2 5 8} 
\mathscr{S}_{2 6 9} 
   \mathscr{K}_{9 7 5 3} \mathscr{S}_{4 8 9} \mathscr{S}_{4 5 6}
\end{equation}
holds in 
$\mathrm{End}(\mathcal{F}_{q}\otimes \mathcal{F}_{q^2}
\otimes \mathcal{F}_{q}\otimes \mathcal{F}_{q^2}
\otimes \mathcal{F}_{q^2}\otimes \mathcal{F}_{q^2}
\otimes \mathcal{F}_{q}\otimes \mathcal{F}_{q^2}
\otimes \mathcal{F}_{q^2})$.
This is a yet independent relation from (\ref{rkeq2}) to be called 
the {\em 3D reflection equation of type $B$}.
The previous one (\ref{rkeq2}) is of type $C$ in this context.
We have checked (\ref{b3eq})  by computer for several examples.
For instance when the both sides act on the monomial 
$|1 1 2 1 1 1 1 1 1\rangle$ specified by the 
occupation numbers of $q$-oscillators,
they generate the same vector consisting of $1410$ monomials.

\smallskip
{\em $F_4$ case}. We let $A_q(\mathrm{F}_4)$ 
denote the quantized algebra of 
functions on the Lie group corresponding to $F_4$.
Let $\pi_i$ be its irreducible representation 
attached to the vertex $i$ of the Dynkin diagram in Figure \ref{fig:dynkin}.
We expect that it is realized in terms of the $q$-oscillators as
\begin{equation*}
\pi_i :   A_q(\mathrm{F}_4) 
\rightarrow \mathrm{End}(\mathcal{F}_{q_i})\quad
\text{with}\;\;
(q_1, q_2, q_3, q_4) = (q,q, q^2, q^2).
\end{equation*}
The $q_i$ here is a natural prolongation of (\ref{qi})
reflecting the squared length of the simple roots.
The Coxeter relations for the simple reflections 
$s_i \in W(\mathrm{F}_4)$ are
given by
\begin{equation}\label{f4cox}
s_1s_2s_1 = s_2s_1s_2,\;\;
s_2s_3s_2s_3 = s_3s_2s_3s_2,\;\;
s_3s_4s_3 = s_4s_3s_4
\end{equation}
in addition to the `trivial' ones  
$s_i^2=1$ and $s_is_j = s_js_i$ for $|i-j|>1$.
Thus one should have the equivalence between the corresponding 
tensor products of $\pi_i$'s \cite{So1, So2}.
Introduce the 
three kinds of intertwiners 
$\mathscr{R}^F, \mathscr{K}^F$ and $\mathscr{S}^F$ 
characterized up to normalization by
\begin{equation}\label{RKSF}
\begin{split}
(\pi_{212}\Delta) \circ \mathscr{R}^F 
&= \mathscr{R}^F \circ (\pi_{121}\tilde{\Delta}),\\
(\pi_{3232}\Delta) \circ \mathscr{K}^F
&=  \mathscr{K}^F \circ (\pi_{3232}\tilde{\Delta}),\\
(\pi_{434}\Delta) \circ \mathscr{S}^F 
&= \mathscr{S}^F \circ (\pi_{343}\tilde{\Delta}).
\end{split}
\end{equation}
From the Dynkin diagrams in Figure \ref{fig:dynkin}
and the discussion for type $B$, it is natural to conjecture that 
they are simply related to the preceding ones as
\begin{equation}\label{RKS}
\mathscr{R}^F = \mathscr{R},\quad
\mathscr{K}^F = \mathscr{K},\quad
\mathscr{S}^F = \mathscr{S}.
\end{equation}
The RHSs have been encountered first for type 
$A$, $C$ and $B$ and described explicitly in  
(\ref{rA2}), Theorem \ref{th:main} and (\ref{RB}), respectively.

What about the identities analogous to the 3D reflection equations?
We can follow the same argument as before to derive them
from a reduced word for the longest element 
$w_0 \in W(\mathrm{F}_4)$ by reversing it 
in two ways.
The length of $w_0$ is 24 and there are 
2144892 reduced expressions for it.
We have picked up 
\begin{equation}\label{w0f4}
s_4 s_3 s_4 s_2 s_3 s_4 s_2 s_3 
s_2 s_1 s_2 s_3 s_4 s_2 s_3 s_1 s_2 s_3 s_4 s_1 s_2 s_3 s_2 s_1
\end{equation}
and reversed the ordering via the Coxeter relations
(\ref{f4cox}).
By representing the procedure in terms of the intertwiners (\ref{RKS}) 
we find that the consistency is expressed as
\begin{equation}\label{eqf4}
\begin{split}
&\mathscr{S}_{14, 15, 16} \mathscr{S}_{9, 11, 16} \mathscr{K}_{16, 10, 8, 7} 
\mathscr{K}_{9, 13, 15, 17} \mathscr{S}_{4, 5, 16} 
\mathscr{R}_{7, 12, 17} \mathscr{S}_{1, 2, 16} \mathscr{R}_{6, 10, 17} 
\mathscr{S}_{9, 14, 18} \mathscr{K}_{1, 3, 5, 17}\\
\times \,&\mathscr{S}_{11, 15, 18} \mathscr{K}_{18, 12, 8, 6} \mathscr{S}_{1, 4, 18}
\mathscr{S}_{1, 8, 15} \mathscr{R}_{7, 13, 19}
\mathscr{K}_{1, 6, 11,19} \mathscr{K}_{4, 12, 15, 19} \mathscr{R}_{3, 10, 19}
 \mathscr{S}_{4, 8, 11} \mathscr{K}_{1, 7,14, 20} \\
\times \,&\mathscr{S}_{2, 5, 18} \mathscr{R}_{6, 13, 20} \mathscr{R}_{3, 12, 20} 
\mathscr{S}_{1, 9, 21} \mathscr{K}_{2, 10, 15, 20}
\mathscr{S}_{4, 14, 21} \mathscr{K}_{21, 13, 8, 3} \mathscr{S}_{2,11, 21} 
\mathscr{S}_{2, 8, 14} \mathscr{R}_{6, 7, 22} \\
\times \,&\mathscr{K}_{2, 3, 4, 22} \mathscr{S}_{5, 15, 21} \mathscr{K}_{11, 13, 14, 22} 
\mathscr{R}_{10, 12, 22} \mathscr{K}_{2, 6, 9, 23} 
\mathscr{R}_{3, 7, 23} \mathscr{R}_{19, 20, 22} \mathscr{K}_{16, 17, 18, 22}
 \mathscr{R}_{10, 13, 23} \mathscr{K}_{5, 12, 14, 23} \\
\times \,&\mathscr{R}_{3, 6, 24} \mathscr{K}_{16, 19, 21, 23} \mathscr{K}_{4, 7, 9, 24} 
\mathscr{R}_{17, 20, 23} \mathscr{K}_{5, 10, 11, 24}
\mathscr{R}_{12, 13, 24} \mathscr{R}_{17, 19, 24} \mathscr{K}_{18, 20, 21, 24} 
\mathscr{S}_{5, 8, 9} \mathscr{R}_{22, 23, 24}\\
= & \; \text{product in reverse order}.
\end{split}
\end{equation}   
Here we have already applied the properties 
$\mathscr{R}^{-1} = \mathscr{R}$ (\ref{RAinv}) and 
$\mathscr{R}_{i,j,k} = \mathscr{R}_{k,j,i}$ (\ref{swapR}).
Each side consists of $16$ $\mathscr{R}$'s, $16$ $\mathscr{S}$'s and 
$18$ $\mathscr{K}$'s 
(3 of them having decreasing order of indices) 
amounting to 50 factors in total.
So things get monstrous somewhat
as is usual for exceptional Lie algebras,
and a physical interpretation seems formidable for 
(\ref{eqf4}).
However it should be emphasized that its validity is a 
{\em corollary} of \cite{So1, So2} provided that 
$\mathscr{R}, \mathscr{K}$ and $\mathscr{S}$ 
are really the intertwiners
$\mathscr{R}^F, \mathscr{K}^F$ and $\mathscr{S}^F$
for $A_q(\mathrm{F}_4)$
characterized by (\ref{RKSF}).
This last point, i.e.
(\ref{RKS}) is the only conjectural aspect in  (\ref{eqf4}).
Again we have confirmed it by computer in several examples, 
which are limited however considerably to small ones.
For instance when the both sides act on the monomial 
$|1 1 1 1 0 1 1 0 1 0 1 0 1 0 1 1 0 2 1 1 0 1 0 1\rangle$,
they both generate the same vector consisting of $533$
monomials at least mod $q^6\Z[q]$.

Besides the conjecture, 
we close with two questions which are yet to be answered.
First, can any other consistency relation
involving $\mathscr{R}, \mathscr{K}$ and $\mathscr{S}$, 
say those stemming from 
other reduced expressions, 
be attributed to (\ref{eqf4})?
For $A_q(\mathrm{Sp}_6)$ the analogous question had a 
positive answer.
See the remark after (\ref{rkeq1}). 
Second, can (\ref{eqf4}) be attributed to a composition of the 
tetrahedron equations of $\mathscr{R}, \mathscr{S}$ and 
the 3D reflection equations of type $C$ (\ref{rkeq2})  and type $B$ (\ref{b3eq})?
We hope to report on these issues together with 
$D_n, E_{6,7,8}$ and $G_2$ cases 
in a separate publication.

\appendix
\section{Intertwining relations for $\mathscr{K}$}\label{app:IR}

Let $\langle r s \rangle$ be the 
intertwining relation for $\mathscr{K}$ obtained by 
substituting (\ref{kkh}) into (\ref{reK}) with the choice 
$f = t_{rs}$.
They are independent of the parameters other than $q$.
The relation $\langle r s\rangle$ holds trivially as $0=0$ 
unless $2 \le r,s \le 5$.
The nontrivial cases are given as follows.
\begin{alignat*}{2}
&\langle 2 2 \rangle: & \quad &
[ \ichi \!\otimes\! \am \!\otimes\! \ichi \!\otimes\! \am
- q \ichi \!\otimes\! \ok \!\otimes\! \Am \!\otimes\! \ok, \,\mathscr{K}] = 0,\\
&\langle 2 3 \rangle: & \quad &
(\ichi \!\otimes\! \am \!\otimes\! \ichi \!\otimes\! \ok
+ \ichi \!\otimes\! \ok \!\otimes\! \Am \!\otimes\! \ap)\mathscr{K}\\
&&& =
\mathscr{K}(\Am \!\otimes\! \ap \!\otimes\! \Am \!\otimes\! \ok+
\Am \!\otimes\! \ok \!\otimes\! \ichi \!\otimes\! \am -q^2
\OK \!\otimes\! \am \!\otimes\! \OK \!\otimes\! \ok),\\
&\langle 2 4 \rangle: & \quad &
(\ichi \!\otimes\! \ok \!\otimes\! \OK \!\otimes\! \am) \mathscr{K} = 
\mathscr{K}(
\Ap \!\otimes\! \am \!\otimes\! \OK \!\otimes\! \ok + 
\OK \!\otimes\! \ap \!\otimes\! \Am \!\otimes\! \ok +
\OK \!\otimes\! \ok \!\otimes\! \ichi \!\otimes\! \am),\\
&\langle 2 5 \rangle: & \quad &
[\ichi \!\otimes\! \ok \!\otimes\! \OK \!\otimes\! \ok, \mathscr{K}] = 0,\\
&\langle 3 2 \rangle: & \quad &
(\Am \!\otimes\! \ap \!\otimes\! \Am \!\otimes\! \ok
+\Am \!\otimes\! \ok \!\otimes\! \ichi \!\otimes\! \am -q^2
\OK \!\otimes\! \am \!\otimes\! \OK \!\otimes\! \ok)\mathscr{K}\\
&&& =\mathscr{K}( \ichi \!\otimes\! \am \!\otimes\! \ichi \!\otimes\! \ok +
\ichi \!\otimes\! \ok \!\otimes\! \Am \!\otimes\! \ap),\\
&\langle 3 3 \rangle: & \quad &
[\Am \!\otimes\! \ap \!\otimes\! \Am \!\otimes\! \ap -q
\Am \!\otimes\! \ok \!\otimes\! \ichi \!\otimes\! \ok -q^2
\OK \!\otimes\! \am \!\otimes\! \OK \!\otimes\! \ap, \,\mathscr{K}] = 0,\\
&\langle 3 4 \rangle: & \quad &
(\Am \!\otimes\! \ap \!\otimes\! \OK \!\otimes\! \am + 
\OK \!\otimes\! \am \!\otimes\! \Ap \!\otimes\! \am - q
\OK \!\otimes\! \ok \!\otimes\! \ichi \!\otimes\! \ok)\mathscr{K}\\
&&&=
\mathscr{K}(\Ap \!\otimes\! \am \!\otimes\! \OK \!\otimes\! \ap+
\OK \!\otimes\! \ap \!\otimes\! \Am \!\otimes\! \ap -q
\OK \!\otimes\! \ok \!\otimes\! \ichi \!\otimes\! \ok),\\
&\langle 3 5 \rangle: & \quad &
(\Am \!\otimes\! \ap \!\otimes\! \OK \!\otimes\! \ok+
\OK \!\otimes\! \am \!\otimes\! \Ap \!\otimes\! \ok+
\OK \!\otimes\! \ok \!\otimes\! \ichi \!\otimes\! \ap )\mathscr{K}=
\mathscr{K}(
\ichi \!\otimes\! \ok \!\otimes\! \OK \!\otimes\! \ap),\\
&\langle 4 2 \rangle: & \quad &
(\Ap \!\otimes\!\am \!\otimes\! \OK \!\otimes\! \ok + 
\OK \!\otimes\! \ap \!\otimes\! \Am \!\otimes\! \ok +
\OK \!\otimes\! \ok \!\otimes\! \ichi \!\otimes\! \am)\mathscr{K}
=
\mathscr{K}(\ichi \!\otimes\! \ok \!\otimes\! \OK \!\otimes\! \am),\\
&\langle 4 3 \rangle: & \quad &
(\Ap \!\otimes\! \am \!\otimes\! \OK \!\otimes\! \ap+
\OK \!\otimes\! \ap \!\otimes\! \Am \!\otimes\! \ap - q
\OK \!\otimes\! \ok \!\otimes\! \ichi \!\otimes\! \ok)\mathscr{K}\\
&&& =
\mathscr{K}(\Am \!\otimes\! \ap \!\otimes\! \OK \!\otimes\! \am +
\OK \!\otimes\! \am \!\otimes\! \Ap \!\otimes\! \am -q
\OK \!\otimes\! \ok \!\otimes\! \ichi \!\otimes\! \ok),\\
&\langle 4 4 \rangle: & \quad &
[\Ap \!\otimes\! \am \!\otimes\! \Ap \!\otimes\! \am -q 
\Ap \!\otimes\! \ok \!\otimes\! \ichi \!\otimes\! \ok -q^2
\OK \!\otimes\! \ap \!\otimes\! \OK \!\otimes\! \am,\, \mathscr{K}]=0,\\
&\langle 4 5 \rangle: & \quad &
(\Ap \!\otimes\! \am \!\otimes\! \Ap \!\otimes\! \ok +
\Ap \!\otimes\! \ok \!\otimes\! \ichi \!\otimes\! \ap -q^2
\OK \!\otimes\! \ap \!\otimes\! \OK \!\otimes\! \ok)\mathscr{K}\\
&&&= \mathscr{K}(\ichi \!\otimes\! \ap \!\otimes\! \ichi \!\otimes\! \ok +
\ichi \!\otimes\! \ok \!\otimes\! \Ap \!\otimes\! \am),\\
&\langle 5 2 \rangle: & \quad &
[\ichi \!\otimes\! \ok \!\otimes\! \OK \!\otimes\! \ok,\, \mathscr{K}] = 0
\quad 
(\text{same as $\langle 2 5 \rangle$}),\\
&\langle 5 3 \rangle: & \quad &
(\ichi \!\otimes\! \ok \!\otimes\! \OK \!\otimes\! \ap)\mathscr{K} = 
\mathscr{K}(\Am \!\otimes\! \ap \!\otimes\! \OK \!\otimes\! \ok +
\OK \!\otimes\! \am \!\otimes\! \Ap \!\otimes\! \ok+
\OK \!\otimes\! \ok \!\otimes\! \ichi \!\otimes\! \ap),\\
&\langle 5 4 \rangle: & \quad &
(\ichi \!\otimes\! \ap \!\otimes\! \ichi \!\otimes\! \ok + 
\ichi \!\otimes\! \ok \!\otimes\! \Ap \!\otimes\! \am)\mathscr{K}\\
&&&=\mathscr{K}(\Ap \!\otimes\! \am \!\otimes\! \Ap \!\otimes\! \ok+
\Ap \!\otimes\! \ok \!\otimes\! \ichi \!\otimes\! \ap - q^2
\OK \!\otimes\! \ap \!\otimes\! \OK \!\otimes\! \ok),\\
&\langle 5 5 \rangle: & \quad &
[ \ichi \!\otimes\! \ap \!\otimes\! \ichi \!\otimes\! \ap -q
\ichi \!\otimes\! \ok \!\otimes\! \Ap \!\otimes\! \ok,\, \mathscr{K}] = 0.
\end{alignat*}
For example (\ref{eq55})
is obtained by taking the matrix element of $\langle 5 5 \rangle$ 
for the transition 
$|a\rangle \otimes |i\rangle \otimes |b-1\rangle \otimes |j\rangle
\rightarrow 
|c\rangle \otimes |m\rangle \otimes |d\rangle \otimes |n\rangle$.

The equations $\langle r s \rangle$ and  $\langle s r \rangle$
are transformed into each other by 
$\mathscr{K} \leftrightarrow \mathscr{K}^{-1}$.
Combining (\ref{Kinv}) and (\ref{Ksym}) with 
the argument similar to Proposition \ref{pr:Rinv}, one can also show that 
$\langle r s \rangle$ and  
$\langle s' r' \rangle$ are transformed into each other 
via the simultaneous interchange 
$(\ap, \Ap, \mathscr{K}) \leftrightarrow 
(\am, \Am, \mathscr{K}^{-1})$, where $r'=7-r$.

\section{Proof of Theorem \ref{th:main}}\label{app:K}

Let $\langle r s\rangle$ be the intertwining relation for 
the matrix elements $\mathscr{K}^{cmdn}_{a\, i\, b\, j}$
as explained in Appendix \ref{app:IR}.
The equations
$\langle 2 5\rangle$ and $\langle 5 2\rangle$ give 
$(q^{2b+i+j}-q^{2d+m+n})\mathscr{K}^{cmdn}_{a\,i\,b\,j}=0$.
Note also that 
$\mathscr{K}^{cmdn}_{a\,i\,b\,j}=0$ unless $a,b,c,d,i,j,m,n\ge 0$.
Applying these properties 
and the normalization $\mathscr{K}^{0000}_{0000}=1$
to the rest of $\langle r s\rangle$, 
one can deduce the conservation law  
implied by the factor 
$\delta_{c+m+d,a+i+b}\delta_{d+n-c,b+j-a}$ in (\ref{bd}).

\smallskip
First we reduce $\mathscr{K}^{cmdn}_{a\,i\,b\,j}$ to $b=0$ case
by means of $\langle 5 5 \rangle$:
\begin{equation}\label{eq55}
\mathscr{K}^{cmdn}_{a\,i\,b\,j}= 
q^{-i-j-1} \Bigl(-\mathscr{K}^{c,m-1,d,n-1}_{a,i,b-1,j}
+ \mathscr{K}^{c,m,d,n}_{a,i+1,b-1,j+1}
+q^{m+n+1}\mathscr{K}^{c,m,d-1,n}_{a,i,b-1,j}\Bigr),
\end{equation}
which is valid for $b\ge 1$ and 
$a,c,d,m,n,i,j \ge 0$.
This can be fitted to a recursion relation of $q^2$-trinomial coefficients.
The solution reads
\begin{equation}\label{m1}
\begin{split}
\mathscr{K}^{cmdn}_{a\,i\,b\,j}
= &\delta_{c+m+d,a+i+b}\delta_{d+n-c,b+j-a}\sum_{\alpha, \beta}
\left\{{b \atop \alpha, \beta, b-\alpha-\beta}\right\}
(-1)^{\alpha}\\
&\times 
q^{(\alpha+\beta-b)(\alpha+\beta+b-1)+(m+n-2\alpha+1)\beta
-(i+j+1)b}
\mathscr{K}^{c, m-\alpha, d-\beta, n-\alpha}_{
a,i+b-\alpha-\beta,0,j+b-\alpha-\beta},
\end{split}
\end{equation}
where the sum is over $\alpha, \beta \in \Z_{\ge 0}$, which is actually finite.

Second we reduce $d$ by means of $\langle 2 2 \rangle|_{b=0}$:
\begin{align*}
\mathscr{K}^{cmdn}_{a\,i\,0\,j}&= 
\frac{q^{-m-n-1}}{1-q^{4d}} \Bigl(
-(1-q^{2i})(1-q^{2j})\mathscr{K}^{c,m,d-1,n}_{a,i-1,0,j-1}
+(1-q^{2m+2})(1-q^{2n+2})\mathscr{K}^{c,m+1,d-1,n+1}_{a,i,0,j}\Bigr),
\end{align*}
which is valid for $d\ge 1$.
By considering the combination
$\frac{(q^4)_d(q^2)_m(q^2)_n}{(q^2)_i(q^2)_j}
\mathscr{K}^{cmdn}_{a\,i\,0\,j}$,
this is fitted with a recursion relation of $q^2$-binomial coefficients.
The solution reads
\begin{equation}\label{m2}
\begin{split}
\mathscr{K}^{cmdn}_{a\,i\,0\,j}
= \delta_{c+m+d,a+i}\delta_{d+n-c,j-a}& \frac{q^{-(m+n+1)d}}{(q^4)_d}
\left\{{i,j \atop m,n}\right\}
\sum_{\gamma}
\left\{{d,m+d-\gamma, n+d-\gamma \atop 
\gamma, d-\gamma, i-\gamma, j-\gamma}\right\}
(-1)^{\gamma}\\
&\times q^{(\gamma-d)(\gamma+d-1)}
\mathscr{K}^{c,m+d-\gamma,0,n+d-\gamma}_{a, i-\gamma,0,j-\gamma},
\end{split}
\end{equation}
where the sum over $\gamma \in \Z_{\ge 0}$ is finite.
Combining (\ref{m1}) and (\ref{m2}), we obtain 
\begin{align}
&\mathscr{K}^{c m d n}_{a \,i \,b \,j}=
\delta_{c+m+d,a+i+b}\delta_{d+n-c,b+j-a}
\sum_{\alpha,\beta,\gamma}
\frac{(-1)^{\alpha+\gamma}}{(q^4)_{d-\beta}}q^{\phi_1}
\mathscr{K}^{c, m+d-\alpha-\beta-\gamma,0, 
n+d-\alpha-\beta-\gamma}_{a, i+b-\alpha-\beta-\gamma ,
0, j+b-\alpha-\beta-\gamma}\nonumber\\
&\times \left\{{b, d-\beta,  i+b-\alpha-\beta, j+b-\alpha-\beta, 
m+d-\alpha-\beta-\gamma, n+d-\alpha-\beta-\gamma\atop
\alpha, \beta, \gamma, m-\alpha, n-\alpha, b-\alpha-\beta, d-\beta-\gamma,
i+b-\alpha-\beta-\gamma, j+b-\alpha-\beta-\gamma}\right\},
\label{bd2}
\end{align}
where $\phi_1$ is given in (\ref{bd}).
 
Next we reduce $n$ and $j$  in $\mathscr{K}^{c m 0 n}_{a\,i\,0\,j} $ to $0$
keeping $d=b=0$.
Such recursion relations are available 
from $\langle 2 4\rangle|_{b=d=0}$ and 
$\langle 3 5\rangle|_{b=d=0}$:
\begin{align}
\mathscr{K}^{c m 0 n}_{a\,i\,0\,j} &=\frac{1}{1-q^{2n}}\left(
q^{2 a+i-m}(1-q^{2j})\mathscr{K}^{c,m,0,n-1}_{a,i,0,j-1}
+
q^{j-m}(1-q^{2i})\mathscr{K}^{c,m,0,n-1}_{a+1,i-1,0,j}\right),
\label{recnj1}\\
\mathscr{K}^{c m 0 n}_{a\,i\,0\,j} &=
q^{2c-i+m}\mathscr{K}^{c,m,0,n-1}_{a,i,0,j-1}
+q^{n-i}(1-q^{4c+4})\mathscr{K}^{c+1,m-1,0,n}_{a,i,0,j-1},
\label{recnj2}
\end{align}
which hold for $n\ge 1$ and $j\ge 1$, respectively.
Note that either (\ref{recnj1})$|_{j=0}$   
or (\ref{recnj2})$|_{n=0}$ leads to 
\begin{equation*}
\mathscr{K}^{a i 0 0}_{a i 0 0} = (-1)^iq^{2(a+1)i}
\end{equation*}
with the help of the conservation law (\ref{claw}) and 
$\mathscr{K}^{0 0 0 0}_{0 0 0 0}=1$.
It is easy to solve (\ref{recnj1}) and (\ref{recnj2}) 
with the above initial condition.
The solution is given by (\ref{rest}).
It fulfills the symmetry 
\begin{equation}\label{chie}
\mathscr{K}^{c m 0 n}_{a\, i\, 0\, j} = \frac{(q^4)_a}{(q^4)_c}
\left\{{ i, j \atop m, n} \right\}
\mathscr{K}^{a\,i\,0\, j}_{c m 0 n}
\end{equation}
in accordance with (\ref{Ksym}). 
This is seen by replacing $\lambda$ with $a-c+\lambda$
in (\ref{rest}).
Finally (\ref{bd}) is obtained by applying (\ref{chie}) 
to (\ref{bd2}).

\section{Outline of the proof of Theorem \ref{th:combk}}\label{app:kq0}

Let $\langle rs\rangle$ be the equation for 
$\mathscr{K}^{cmdn}_{a\,i\,b\,j}$
as in Appendix \ref{app:K}.
We are going to prove Theorem \ref{th:combk} solely by using the 
$\langle rs\rangle$'s,  i.e. the characterization of $\mathscr{K}$  
without relying on the 
explicit formula in Theorem \ref{th:main}.

\smallskip
{\em Proof of claim (i) in Theorem \ref{th:combk}}.
From Theorem \ref{th:main}, 
$\mathscr{K}^{cmdn}_{a\, i\, b\, j} $ is a rational function of $q$.
We divide the proof into 2 Steps.

{\em Step 1}. We show $\mathscr{K}^{cmdn}_{a\, i\, b\, j} \in \Z[q,q^{-1}]$.
First we see that (\ref{eq55})
attributes the claim to $b=0$ case $\mathscr{K}^{c m d n}_{a\, i\, 0\, j} $.
One can utilize similar relations $\langle 35\rangle |_{b=0}$, 
$\langle 45\rangle |_{b=j=0}$ and 
$\langle 44\rangle |_{b=j=i=0}$ with coefficients in $\Z[q,q^{-1}]$ 
to reduce the indices $j, i,a$ successively and thereby the claim 
itself to 
$\mathscr{K}^{c m d n}_{0\, 0\, 0\, 0}$.
But the last quantity is 
$\delta_{c 0}\delta_{m 0}\delta_{d 0}\delta_{n 0}$
by the conservation law and we are done.

\smallskip
{\em Step 2}. We show $\mathscr{K}^{cmdn}_{a\, i\, b\, j} \in A$, 
where $A$ is the ring of rational functions of $q$ regular at $q=0$.
(Plainly, $\lim_{q\rightarrow 0}\mathscr{K}^{cmdn}_{a\, i\, b\, j} <\infty$.)
Our strategy is similar to Step 1 but this time 
with coefficients from $A$ instead of  
$\Z[q,q^{-1}]$. 
First solving  (\ref{eq55}) for $\mathscr{K}^{c,m,d,n}_{a,i+1,b-1, j+1} $ we see that 
the claim $\mathscr{K}^{cmdn}_{a\, i\, b\, j} \in A$ for $i,j \ge 1$ 
is attributed to $\min(i,j)=0$ case.
Thus we consider the reductions of 
$\mathscr{K}^{cmdn}_{a\, i\, b\, 0}$ and 
$\mathscr{K}^{cmdn}_{a\, 0\, b\, j} $. 
For the former, 
$\langle45\rangle|_{j=0}$ provides the relation
$\mathscr{K}^{cmdn}_{a\, i\, b\, 0} = 
\sum A \mathscr{K}^{\bullet, \bullet,\bullet, \bullet}_{\bullet, i-1, \bullet, 0}$.
For the latter one eliminates $\mathscr{K}^{cmdn}_{a\, 1\, b\, j}$ between 
$\langle54\rangle|_{i=0}$ and $\langle 55\rangle|_{i=0}$
to get a relation
$\mathscr{K}^{cmdn}_{a\, 0\, b\, j}  = 
\sum A  \mathscr{K}^{\bullet, \bullet,\bullet, \bullet}_{\bullet, 0, \bullet, j-1}
+ \sum A  \mathscr{K}^{\bullet, \bullet,\bullet, \bullet}_{\bullet, 0, \bullet, j-2}$.
Thus the claim is reduced to $\mathscr{K}^{cmdn}_{a\, 0\, b\, 0}$
in the both cases.
Next we utilize $\langle 44 \rangle|_{i=j=0}$ having the form
$\mathscr{K}^{cmdn}_{a\, 0\, b\, 0} = 
\sum A \mathscr{K}^{\bullet, m-1,\bullet, \bullet}_{\bullet, 0, \bullet, 0}
+\sum A \mathscr{K}^{\bullet, m-2,\bullet, \bullet}_{\bullet, 0, \bullet, 0}$.
Thus the claim is reduced to $\mathscr{K}^{c\, 0\, d\,n}_{a\, 0\, b\, 0}$. 
Now we eliminate $\mathscr{K}^{c\, 1\, d\,n}_{a\, 0\, b\, 0}$ between
$\langle22\rangle|_{i=j=m=0}$ and $\langle 44\rangle|_{i=j=m=0}$
to get 
$\mathscr{K}^{c\, 0\, d\, n}_{a\, 0\, b\, 0} = 
\sum A \mathscr{K}^{\bullet, 0,\bullet, \bullet}_{a-1, 0, \bullet, 0}$
reducing the claim further to $\mathscr{K}^{c\, 0\, d\,n}_{0\, 0\, b\, 0}$. 
Similarly eliminate $\mathscr{K}^{c\, 1\, d\,n}_{0\, 0\, b\, 0}$ between
$\langle22\rangle|_{i=j=m=a=0}$ and $\langle 23\rangle|_{i=j=m=a=0}$
to get 
$\mathscr{K}^{c\, 0\, d\, n}_{0\, 0\, b\, 0} = 
\sum A \mathscr{K}^{\bullet, 0, d-1, \bullet}_{0, 0, \bullet, 0}$
reducing the claim down to 
$\mathscr{K}^{c\, 0\, 0\,n}_{0\, 0\, b\, 0} 
= \delta_{c,b}\delta_{n,2b}\mathscr{K}^{b\, 0\, 0\,2b}_{0\, 0\, b\, 0} $,
where the last equality is due to the conservation law (\ref{claw}).
Finally we eliminate 
$\mathscr{K}^{c\, 0\, 0\,n}_{0\, 1\, b\, 0}$ 
between
$\langle24\rangle|_{i=j=m=a=d=0}$ and $\langle 45\rangle|_{i=j=m=a=d=0}$
to find
$\mathscr{K}^{b\, 0\, 0\, 2b}_{0\, 0\, b\, 0} = (b\rightarrow b-1)
= \cdots = \mathscr{K}^{0000}_{0000}=1$.
The claim (i) has been proved.

\smallskip
{\em Proof of claim (ii) in Theorem \ref{th:combk}}.
By the definition (\ref{ckdef}) of $\mathcal{K}$ and (\ref{Ksym}) 
we a priori know 
$\mathcal{K}^{cmdn}_{a\, i\, b \, j} = \mathcal{K}^{a\, i\, b \, j}_{cmdn}$.
The claim (ii) must be consistent with this property.
It postulates that the map 
$(c,m,d,n)\mapsto (c',m',d',n')$ defined by 
the piecewise linear formula (\ref{combk}) must be involutive.
It is certainly so because the formula arises as the ultradiscretization of 
${\bf K}$ and ${\bf K} = {\bf K}^{-1}$ holds.
See the comment after (\ref{bKdef}).

By the argument so far, 
the formula (\ref{combk}) for 
$\mathcal{K}^{cmdn}_{a\, i\, b\, j}$ is equivalent to that 
obtained under the interchange $(c,m,d,n) \leftrightarrow (a,i,b,j)$.
This observation halves our task in what follows.
From  (\ref{claw})  we will always take it for granted 
that $\mathcal{K}^{cmdn}_{a\, i\, b\, j}$ also 
obeys the same conservation law.

Now we begin reducing the indices as in the proof of the claim (i)
by using $\langle r s \rangle|_{q=0}$.
Consider for example (\ref{eq55}), multiply it with 
$q^{i+j+1}$ and set $q=0$.
Now that we know $\mathscr{K}$'s are polynomials in $q$,
the result gives 
$\mathcal{K}^{c,m,d,n}_{a,i+1,b-1, j+1} = 
\mathcal{K}^{c,m-1,d,n-1}_{a,\, i,\, b-1,\, j}$ for the constant terms.
Similarly $\langle 33\rangle|_{q=0}$ yields 
$\mathcal{K}^{c,m,d,n}_{a-1,i+1,b-1,j+1} = 
\mathcal{K}^{c+1,m-1,d+1,n-1}_{a,\, i,\, b,\, j}$.
One can check that the piecewise linear formula (\ref{combk}) also 
satisfies these recursion relations.
Thus a proof for 
$\mathcal{K}^{cmdn}_{a\, i\, b \, j}$ is attributed to 
the situation $\min(a,b,c,d)=\min(m,n,i,j)=0$.
There are $4\times 4=16$ such possibilities but 
the symmetry  $(a,m,d,n) \leftrightarrow (a,b,i,j)$
already established in the above halves them to the 8 cases
as follows. (Their choice is not unique.)
\begin{equation*}
\begin{split}
& \mathrm{(I)}\; d=n=0, \quad\;\;\;\mathrm{(II)}\;  c=n=0, 
\quad\;\;\,\mathrm{(III)}\; b=n=0, \quad\;\;\,\;\mathrm{(IV)}\; a=n=0,\\
&\mathrm{(V)}\; d=m=0, \quad\mathrm{(VI)}\;c=m=0, 
\quad\mathrm{(VII)}\; b=m=0, \quad\mathrm{(VIII)}\; a=m=0.
\end{split}
\end{equation*}
We have proved them all case by case. 
Here we illustrate the proof for (I) only.
The other cases can be treated similarly.

For (I), the formula (\ref{combk}) to be proved reads
$\mathcal{K}^{c m 0 0}_{a\,i\,b\,j} = 
\delta_{a,c+m}\delta_{i,0}\delta_{b,0}\delta_{j,m}$. 
First note that (\ref{eq55})$|_{d=n=0}$ in the limit $q\rightarrow 0$
tells that $\mathcal{K}^{c m 0 0}_{a\, i\, b\, j} = 0$ if 
$\min(i,j)\ge 1$.
This agrees with the formula to be shown in the region $\min(i,j)\ge 1$.
Thus we are left to verify  
(I-1) $\mathcal{K}^{c m 0 0}_{a\, 0\, b\, j} = 
\delta_{a,c+m}\delta_{b,0}\delta_{j,m}$ and 
(I-2) $\mathcal{K}^{c m 0 0}_{a\,i\,b\, 0} = 
\delta_{a,c}\delta_{i,0}\delta_{b,0}\delta_{0,m}$.

To prove (I-1), we utilize
$\langle 54 \rangle |_{d=n=i=0}$  in the limit $q\rightarrow 0$, which says
$\mathcal{K}^{c m 0 0}_{a\,0\,b\,j} = \mathcal{K}^{c,m-1,0,0}_{a-1,0,b,j-1}$.
Note that the RHS of (I-1) is also invariant under the 
simultaneous decrement of $a,m,j$ by $1$.
Since the conservation law implies $a,m \ge j$ for nonzero 
$\mathcal{K}^{c m 0 0}_{a\,0\,b\,j}$, the above recursion reduces (I-1) to 
$j=0$ case, namely 
$\mathcal{K}^{c m 0 0}_{a\,0\,b\,0}=\delta_{a,c}\delta_{b,0}\delta_{m,0}$.
By the conservation law, the LHS equals 
$\delta_{a, b+c}\delta_{m,2b}\mathcal{K}^{c,2b,0,0}_{b+c,0,b,0}$.
Thus we are to show 
$\mathcal{K}^{c,2b,0,0}_{b+c,0,b,0} = \delta_{b,0}$.
From $\langle 42 \rangle|_{d=n=i=j=0}$  in the limit $q\rightarrow 0$, we find that 
$\mathcal{K}^{c,2b,0,0}_{b+c,0,b,0}$ indeed vanishes for $b\ge 1$.
It remains to check $\mathcal{K}^{c 0 0 0}_{c 0 0 0}=1$.
From $\langle 4 4 \rangle|_{d=n=i=j=m=b=0}$ (with $q$ generic),
we find $\mathscr{K}^{c 0 0 0}_{c 0 0 0}=(c\rightarrow c-1) = \cdots = 
\mathscr{K}^{0 0 0 0}_{0 0 0 0}=1$, completing the proof of (I-1).

To prove (I-2), we utilize
$\langle 23 \rangle |_{d=n=j=0}$ in the limit $q\rightarrow 0$, which says
$\mathcal{K}^{c m 0 0}_{a\,i\,b\,0}
= \mathcal{K}^{c,m-1,0,0}_{a-1,i+1,b-1,0}$.
Since the conservation law implies $a,m \ge b$ for nonzero 
$\mathcal{K}^{c m 0 0}_{a\,i\,b\,0}$, 
the above recursion reduces the LHS of (I-2) to 
$b=0$ case.  Thus we are to verify  
$\mathcal{K}^{c,m-b,0,0}_{a-b,i+b,0,0}
=\delta_{a,c}\delta_{i,0}\delta_{b,0}\delta_{0,m}$.
Since the conservation law limits the nontrivial case to $a=b+c$ and $m=i+2b$, 
it boils down to showing $\mathcal{K}^{\alpha \beta 0 0}_{\alpha \beta 0 0} 
= \delta_{\beta,0}$.
The vanishing for $\beta \ge 1$ follows from 
$\langle 23 \rangle |_{d=n=j=b=0}$ in the limit $q\rightarrow 0$,
and $\beta=0$ case has already been shown 
in the end of the proof of (I-1).
We have finished the proof of (I-2) and thereby (I).

\section*{Acknowledgments}
A part of this work was presented in 
The XXIX International Colloquium on Group-Theoretical Methods in Physics,
August 20-26, 2012 at Chern Institute of Mathematics, Tianjin, China.
The authors thank Alexey Isaev for a generous explanation of the
tetrahedron reflection equation and an important remark.
They also thank Murray Batchelor, Vladimir Bazhanov,  Anatol Kirillov, 
Igor Korepanov and Peter Kulish for kind interest and comments.
Thanks are also due to 
Toshiki Nakashima, Soichi Okada, Itaru Terada and Hiroyuki Yamane 
for communications.
A.K. thanks Vladimir Mangazeev for kind interest in the previous work 
and Sergey Sergeev for showing 
an unpublished review article, which 
provided useful information to initiate this work.
This work is supported by Grants-in-Aid for
Scientific Research No.~23340007, No.~24540203 and 
No.~23654007 from JSPS.

\end{document}